\numberwithin{equation}{section}
\newtheorem{theorem}{Theorem}[section]
\newtheorem{lemma}[theorem]{Lemma}
\newtheorem{corollary}[theorem]{Corollary}
\newtheorem{assumption}{Assumption}
\theoremstyle{definition}
\theoremstyle{remark}
\def\nit{\mathbb N}
\def\cS{{\mathcal S}}
\title{Strategic decentralization in binary choice \\composite congestion games\footnote{11 September 2015}}
\author{Cheng Wan\footnote{\textit{email:} \nolinkurl{cheng.wan@economics.ox.ac.uk}  \textit{Tel.:} +44 (0)1865 278692} \\ {\small Department of Economics; Nuffield College, University of Oxford}\\ {\small New Road, OX1 1NF, Oxford, United Kingdom} }
 \date{ }
\begin{document}
\maketitle
 \begin{abstract}
This paper studies strategic decentralization in binary choice composite network congestion games. A player decentralizes  if she lets some autonomous agents to decide respectively how to send different parts of her stock from the origin to the destination. This paper shows that, with convex, strictly increasing and differentiable arc cost functions, an atomic splittable player always has an optimal unilateral decentralization strategy. Besides, unilateral decentralization gives her the same advantage as being the leader in a Stackelberg congestion game. Finally, unilateral decentralization of an atomic player has a negative impact on the social cost and on the costs of the other players at the equilibrium of the congestion game.
\end{abstract}
 \medskip
 
 \textbf{Keywords}~~ routing, decentralization, Stackelberg game, composite congestion game

\section{Introduction}
This paper introduces strategic decentralization into composite network congestion games, and studies its properties in a specific subclass of such games. A player decentralizes her decision-making if she lets each of her deputies decide independently how to send the part of her stock deputed to him from its origin to its destination. A unilateral decentralization can be beneficial or deleterious for the decentralizing player herself, and it also has an influence on the other players' utility and the social welfare as well. This paper provides a detailed analysis of these problems in the case where all the players have the same binary choice.

In a network congestion game, i.e. routing game, each player has a certain quantity of stock and a finite set of choices. A choice is a directed, acyclic path from the player's origin to her destination. A player with a stock of infinitesimal weight is \emph{nonatomic}. She has to attribute her stock to only one choice. A player with a stock of strictly positive weight is \emph{atomic}. She (more rigorously, her stock) is \emph{splittable} if she can divide it into several parts and affect each part to a different choice. She can also be non splittable, which is the case originally studied in the seminal work of Rosenthal 1973 \cite{Ro73a} on congestion games. This paper considers only the splittable case so that the word {\em splittable} is often omitted. A path is composed of a series of arcs, and the cost of a path is the sum of the costs of its component arcs. The cost entailed to a user of an arc depends on the total weight of the stocks on it as well as on the quantity of that user's stock on it. A player wishes to minimize her cost, which is the total cost to her stock. A game with both nonatomic and atomic players is called a {\em composite} game. An equilibrium in a composite congestion game is called a \emph{composite equilibrium} (CE for short) (Harker 1988 \cite{Ha88}, Boulogne et al. 2002 \cite{Bou02}, Yang and Zhang 2008 \cite{YangZh2008}, Wan 2012 \cite{Wan11}). An equilibrium does not necessarily minimize the social cost, i.e. the total cost to all the players.

In a composite congestion game, an atomic player of weight $m$ decentralizes if she is replaced by a composite set of players called her \emph{deputies} (i.e. $n$ atomic players of weight $\alpha^{1},\ldots,\alpha^{n}$ and a set of nonatomic players of total weight $\alpha^{0}$, such that $\sum^{n}_{i=0}\alpha^{i}=m$) who have the same choice set as her, and she collects the sum of her deputies' costs as her own. 

Here is an example of advantageous decentralization. Two atomic players both have a stock of weight $\frac{1}{2}$ to send from $O$ to $D$. Two parallel arcs link $O$ to $D$, with per-unit cost function $c_{1}(t)=t+10$ and $c_{2}(t)=10t+1$ respectively. At the equilibrium, both players send weight $\frac{2}{11}$ on the first arc and $\frac{7}{22}$ on the second one. The cost is $4.14$ to both players and the social cost is $8.28$. If player 1 deputes her stock to two atomic deputies both of weight $\frac{1}{4}$, then at the equilibrium of the resulting congestion game, both deputies send weight $\frac{1}{44}$ on the first arc, while player~$2$ sends weight $\frac{1}{4}$ there. The cost is $2.06$ to both deputies of player 1. Hence player 1 gains by decentralizing because her current cost $4.12$ is lower than $4.14$. However, player 2's cost is now $4.59$ and the social cost is $8.71$, both higher than before.

Assuming that the arc cost functions are convex, strictly increasing and continuously differentiable in congestion, this paper obtains the following properties of unilateral decentralization in composite congestion games with binary choice or, equivalently, in a two-terminal two-parallel-arc composite routing game:

(i) For the atomic player who decentralizes unilaterally, all her decentralization strategies are weakly dominated by single-atomic ones which depute her stock to at most one atomic deputies in addition to nonatomic ones (Theorem~\ref{thm:sym_single}). A fortiori, she possesses an optimal decentralization strategy (Theorem~\ref{thm:best}), which depends on her relative size among all the players. 

(ii) Unilateral decentralization gives an atomic player the same advantage as being the leader in a Stackelberg congestion game (Theorem~\ref{thm:stackelberg}).

(iii) After the unilateral decentralization of an atomic player, the social cost at the equilibrium increases or does not change, and the cost to each of her opponents increases or does not change (Theorem~\ref{thm:social_2}).

Although the above results are obtained in the specific setting of binary choice games, the goal of this paper is to introduce the notion of strategic decentralization into composite congestion games, to point out its significance, and to initiate a systematic study of its properties.  

The paper is organized as follows. Section~\ref{sec:chp3_sec2} presents the model, defines decentralization, and shows the special role of single-atomic decentralization strategies. Section~\ref{sec:deleg} proves the existence of an optimal unilateral decentralization strategy, and shows that unilateral decentralization gives an atomic player the same advantage as being the leader in a Stackelberg congestion game. Section~\ref{sec:chp3_sec4} focuses on the impact of unilateral decentralization on the social cost and the other players' cost. Section~\ref{sec:discussion} concludes. The proofs and auxiliary results are regrouped in Section~\ref{sec:proofs}.

\subsection*{Related literature}
The ``inverse'' concept of decentralization -- coalition formation or collusion between players -- has been extensively studied. Hayrapetyan et al. 2006 \cite{Hay06} first define the {\em price of collusion} (PoC) of a parallel network to be the ratio between the worst equilibrium social cost after the nonatomic players form disjoint coalitions and the worst equilibrium social cost without coalitions. Bhaskar et al. 2010 \cite{BhaAl09b} extended this study to series-parallel networks. (A series-parallel network can be constructed by merging in series or in parallel several graphs of parallel arcs.) This index is closely related to another important notion: the {\em price of anarchy} (PoA), which is introduced by Koutsoupias and Papadimitriou 1999 \cite{Kou99} (and \cite{Papa01}) as the ratio between the worst equilibrium social cost and the minimal social cost in nonatomic games. Cominetti et al. 2009 \cite{Com09} derives the first bounds on the PoA with atomic players. For a specific network structure, one can deduce the PoC by the PoA with atomic players and the PoA with nonatomic players. Further results on the bound of the PoA with atomic players are obtained in Harks 2011 \cite{Harks2011}, Roughgarden and Schoppmann 2011 \cite{Rough11} and Bhaskar et al. 2010 \cite{BhaAl09b}. Roughgarden and Tardos 2002 \cite{RouTar02} and Correa et al. 2008 \cite{CorAl08} provide fundamental results on the bound of the PoA with nonatomic players. PoA in nonatomic games with asymmetric costs or elastic demands are studied in \cite{Perakis2007} and \cite{ChauSim2003}, among others.

Beyond the coalitions formed by nonatomic players, Cominetti et al. 2009 \cite{Com09}, Altman et al. 2011 \cite{AltAl11b}, and Huang 2013 \cite{Huang2013} consider those formed by atomic players. Their results can be interpreted as the impact of certain kinds of collusion and hence, the ``inverse" of it, decentralization, on the social cost. Wan 2012 \cite{Wan11} studies the impact of coalition formation on the nonatomic players' cost outside the coalition in parallel-link networks. In terms of the impact of coalition formation on the cost of the coalition members themselves, Cominetti et al. 2009 \cite{Com09}, Altman et al. 2011 \cite{AltAl11b} and Wan 2012 \cite{Wan11} provide examples in different contexts of disadvantageous coalition formation for the members themselves. These are actually examples of advantageous decentralization. Finally, for works on strategic decentralization, one can cite Sorin and Wan 2013 \cite{SW12} in integer-splitting congestion games, and Baye et al. 1996 \cite{BCJ1996} in industrial organization (where they call the strategic decentralization of a firm ``divisionalization''). 

Finally, let us point out that the above-mentioned coalition formation is studied by the approach of comparative statics in a noncooperative game setting. It is different from the cooperative routing games studied in Quant et al. 2006 \cite{QBR2006} and Blocq and Orda \cite{BlocqOrda2014}.

\section{Model and preliminary results}\label{sec:chp3_sec2}
\subsection{Binary choice composite congestion games}\label{sect:notations}
\begin{figure}[htbp!]
\begin{center}
\begin{tikzpicture}
 \node[draw,circle,scale=0.8] (O)at(-2,0) {$O$};
 \node[draw,circle,scale=0.8] (D)at(2,0) {$D$};
\draw[->,>=latex] (O) to[bend left=25] node[midway,above,scale=0.8]{$c_{1}(t)$}(D);
\draw[->,>=latex] (O) to[bend right=25] node[midway,below,scale=0.8]{$c_{2}(t)$}(D);
\end{tikzpicture}
\end{center}
\caption{A binary choice congestion game.}
\end{figure}
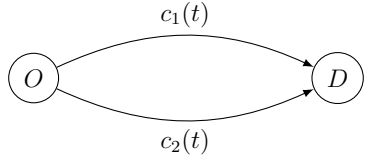

Nodes $O$ and $D$ are linked by two parallel arcs. The per-unit cost function of arc $r$ is $c_{r}$, for $r=1,2$. When the total weight of stocks on arc $r$ is $t$, the cost to each unit of them is $c_{r}(t)$. Both $c_{1}$ and $c_{2}$ are defined on $\Omega$, a neighbourhood of $[0,\bar{M}]$, with $\bar{M}>0$. They satisfy the following assumption throughout this paper.
\begin{assumption}\label{assp:base_chap3}
Both $c_{1}$ and $c_{2}$ are strictly increasing, convex and continuously differentiable on $\Omega$, and non-negative on $[0,\bar{M}]$.
\end{assumption}

There is a continuum of nonatomic players of total weight $T^{0}$, and $N$ atomic players of strictly positive weight $T^{1},T^{2},\ldots,T^{N}$ respectively, where $N\in \mathbb{N}$. If there are no nonatomic (resp. no atomic) players, then $T^{0}=0$ (resp. $N=0$). Let $I=\{0,1,\ldots,N\}$. The player profile is denoted by $T=(T^{i})_{i\in I}$, and their total weight is $M=\sum_{i\in I}T^{i}$, with $M<\bar{M}$.

The profile of the nonatomic players' strategies is described by their flow $x^{0}=(x^{0}_{1},x^{0}_{2})$, where $x^{0}_{r}$ is the total weight of the nonatomic players on arc $r$. The strategy of atomic player~$i$ is specified by her flow $x^{i}=(x^{i}_{1},x^{i}_{2})$, where $x^{i}_{r}$ is the weight that she sends by arc $r$. Call $x=(x^{i})_{i\in I}$ the (system) flow. Denote respectively by $X^{i}=\{x^{i}\in \mathbb{R}^{2}_+\,|\,x^{i}_{1}+x^{i}_{2}=T^{i}\}$ the space of feasible flows for the nonatomic players or an atomic player~$i$, and by $X=\prod_{i\in I}X^{i}$ the space of feasible system flows. Let $\xi=(\xi_r)_{r\in \{1,2\}}$ be a vector function defined on $X$ by $\xi_{r}(x)=\sum_{i\in I}x^{i}_{r}$, i.e. the aggregate weight on arc $r$. For $i\in I$, let $x^{-i}=(x^{j})_{j\in I  \!\setminus \! \{i\}}$.

With flow $x$, the cost to a nonatomic player taking arc $r$ is $c_{r}(\xi_{r}(x))$. The cost to atomic player~$i$ is $u^{i}(x)=x^{i}_{1}c_{1}(\xi_{1}(x))+x^{i}_{2}c_{2}(\xi_{2}(x))$. The social cost is $C\!S(x)=\xi_{1}(x)c_{1}(\xi_{1}(x))+\xi_{2}(x)c_{2}(\xi_{2}(x))$.
\smallskip

Let this composite congestion game be denoted by $\Gamma(T)$. Flow $x\in X$ is a {\em composite equilibrium} (CE) of $\Gamma(T)$ if (Harker 1988 \cite{Ha88}): 

(a) for $r\in \{1,2\}$, if $x^{0}_{r}>0$, then $c_{r}(\xi_{r}(x))\leq c_{s}(\xi_{s}(x))$ for all $s\in \{1,2\}$; and

(b) for $i\in I  \!\setminus  \!\{0\}$, $x^{i}$ minimizes $u^{i}(\,\cdot,\,x^{-i})$ on $X^{i}$.
\smallskip

Like all composite congestion games taking place in a two-terminal parallel-arc networks, game $\Gamma(T)$ always admits a unique CE. The reader is referred to  \cite{Rich07} or \cite{Wan11} for a proof. For the uniqueness of equilibria in congestion games with different types of players and in more general networks, see, for example, \cite{Milchtaich2005,Rich07,MeunierPradeau2014} and \cite{BLHH2015}. Let the nonatomic players' common cost at the unique CE $x$ be denoted by $u^0(x)$.

According to Assumption~\ref{assp:base_chap3}, there exists at most one number $\hat{\xi}\in [0,M]$ such that $c_{1}(\hat{\xi})=c_{2}(M-\hat{\xi})$, and $\xi$ exists if and only if $c_{1}(M)\geq c_{2}(0)$ and $c_{2}(M)\geq c_{1}(0)$.

There are four possible cases concerning the relations between $c_1(0)$, $c_1(M)$, $c_2(0)$, $c_2(M)$ and $\hat{\xi}$. Two of them are listed in the following assumption and studied in this paper. 
\begin{assumption}\label{assp:c1_less_c2}
One of the following two conditions holds:

\textsc{(i)} $c_{1}(M)<c_{2}(0)$.

\textsc{(ii)} $c_{1}(M)\geq c_{2}(0), c_{2}(M)\geq c_{1}(0)$, and
\begin{equation}\label{eq:assp_c1}
\hat{\xi}\,c'_{1}(\hat{\xi}) \geq  (M-\hat{\xi})\,c'_{2}(M-\hat{\xi}).
\end{equation}
\end{assumption}
The other two cases are symmetric to them:  
\textsc{(iii)} $c_{2}(M)<c_{1}(0)$; \textsc{(iv)} $c_{1}(M)\geq c_{2}(0)$, $c_{2}(M)\geq c_{1}(0)$ and $\hat{\xi}\,c'_{1}(\hat{\xi}) \leq  (M-\hat{\xi})\,c'_{2}(M-\hat{\xi})$. One can prove that cases \textsc{(i)} and \textsc{(ii)} correspond to the situation where arc 1 is less costly than arc 2 at the CE $x$ of any composite congestion game taking place in this network with the total weight of the players being $M$ (cf. Lemma~\ref{prop:c1_less_c2}), whereas cases \textsc{(iii)} and \textsc{(iv)} correspond to the inverse. 

\subsection{Decentralization strategies}
In composite congestion game $\Gamma(T)$, an atomic player~$l$ of weight $T^{l}$ {\em decentralizes} if she is replaced by a finite number $n\in \mathbb{N}$ of atomic players of weight $\alpha^{1},\alpha^{2},\ldots,\alpha^{n}$ (in a non-increasing order) and a continuum of nonatomic players of total weight $\alpha^{0}$ such that $\sum^{n}_{i=1}\!\alpha^{i}+\alpha^{0}\!=\!T^{l}$. These nonatomic players of total weight $\alpha^{0}$ and the $n$ atomic players are called her {\em deputies}. There can be only atomic or only nonatomic deputies.

A ({\em decentralization}) {\em strategy} of atomic player~$l$ is a profile $\alpha\!=\!(\alpha^{0},\alpha^{1},\ldots,\alpha^{n})$ of her deputies. 
Atomic player~$l$'s strategy space is denoted by $\cS^{l} = \bigcup^{+\infty}_{n=0}\cS^{l,n}$, where $\cS^{l,n}$ is the set of strategies designating $n$ atomic deputies:
\begin{equation*}
 \cS^{l,n} \!= \! \{\alpha\!=\!(\alpha^{0},\alpha^{1},\ldots,\alpha^{n})\!\in\! \mathbb{R}^{n+1}_+\, |\,\alpha^{1}\!\geq \!\cdots \!\geq\! \alpha^{n}\!>\!0;\; \sum^{n}_{i=0}\!\alpha^{i}\!=\!T^{l}\}.
\end{equation*}

There are some specific classes of strategies worth noticing. 

\textbullet~~$\alpha$ is the {\em nonatomic strategy}, denoted by $\underline{\alpha}$, if $\alpha^{0}=T^{l}$ or, equivalently, $n=0$. It is the unique element in $\cS^{l,0}$. 

\textbullet~~$\alpha$ is the {\em trivial strategy}, denoted by $\bar{\alpha}$, if $n=1$ and $\alpha^{1}=T^{l}$. When playing this strategy, atomic player~$l$ does not decentralize. 

\textbullet~~$\alpha$ is a {\em single-atomic strategy} (SA strategy for short), if $\alpha=\underline{\alpha}$, or $n=1$ and $\alpha^{1}\in (0,\,T^{l}]$. The set of SA strategies is $\cS^{l,0}\cup \cS^{l,1}$. An SA strategy is determined, and from now on also denoted, by the weight of the unique atomic deputy (if there is one) and 0 if there is none. The space of SA strategies is thus isometric to the closed interval $S\!A^{l}:=[0,\,T^{l}]$. The nonatomic strategy and the trivial strategy are both SA.

\subsection{Unilateral decentralization and SA strategies}\label{sect:five}
This paper focuses on unilateral decentralization. Suppose that atomic player~$N$ decentralizes unilaterally. Let $T^{-N}=\{T^{j}\}_{j=0}^{N-1}$ be the profile of atomic player~$N$'s opponents. 

After player~$N$ decentralizes according to strategy $\alpha\in \mathcal{S}^N$, the profile of the players in the network is denoted by $(\alpha,T^{-N})$. Rigorously, the player profile is $(\alpha^0+T^0,\alpha^1,\ldots,\alpha^n,T^1,\ldots,T^{N-1})$. Instead of playing a congestion game against her opponents $T^{-N}$ by herself, atomic player~$N$ let her deputies do it, and autonomously, i.e. without any cooperation between them. This congestion game $\Gamma(\alpha,T^{-N})$ is hence induced by $N$'s decentralization strategy $\alpha$. Denote its unique CE by $z_{\alpha}=((x^{i}(z_{\alpha}))^{n}_{i=0},\,(y^{j}(z_{\alpha}))^{N-1}_{j=0})$, with $x^{i}(z_{\alpha})=(x^{i}_{1}(z_{\alpha}),x^{i}_{2}(z_{\alpha}))$ and $y^{j}(z_{\alpha})=(y^{j}_{1}(z_{\alpha}),y^{j}_{2}(z_{\alpha}))$. Atomic deputy of weight $\alpha^{i}$ sends weight $x^{i}_{r}(z_{\alpha})$ on arc $r$, and nonatomic deputies of total weight $x^{0}_{r}(z_{\alpha})$ among $\alpha^{0}$ choose arc $r$. Let $x_{r}(z_{\alpha})=\sum^{n}_{i=0}x^{i}_{r}$ be the total weight of atomic player~$N$'s stock on arc $r$, whereas $y_{r}(z_{\alpha})=\sum^{N-1}_{j=0}y^j_r$ be the aggregate weight of her opponents' stock there. Still denote the aggregate weight on arc $r$ by $\xi_{r}(z_\alpha)$. 

Assume that the CE $z_\alpha$ is always attained in congestion game $\Gamma(\alpha,T^{-N})$. Player~$N$'s cost for playing $\alpha$ is defined as the total cost to her deputies at $z_\alpha$:
\begin{equation}\label{def:cost_decentralization}
U^N(\alpha,T^{-N})=x_{1}(z_\alpha)\,c_{1}(\xi_{1}(z_\alpha))+x_{2}(z_\alpha)\,c_{2}(\xi_{2}(z_\alpha)).
\end{equation}


Two decentralization strategies $\alpha$ and $\tilde{\alpha}$ in $\cS^{N}$ are {\em equivalent with respect to} the opponents' profile $T^{-N}$ if they yield the same cost to atomic player~$N$: 
\begin{equation*}
U^N(\alpha,T^{-N})=U^N(\tilde{\alpha},T^{-N}). 
\end{equation*}
In this paper, the equivalency between two decentralization strategies of $N$ is always with respect to $T^{-N}$, hence it is no longer specified.

SA strategies play a special role among all the decentralization strategies.
\begin{theorem}\label{thm:sym_single}
For any strategy $\alpha\in \cS^{N}$, there is an SA strategy $s$ that is equivalent to $\alpha$. 
\end{theorem}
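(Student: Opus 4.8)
The plan is to show that any decentralization strategy $\alpha$ is equivalent to an SA strategy by analyzing the structure of the CE $z_\alpha$ and reducing the multi-deputy structure to a single atomic deputy (plus a nonatomic part) that reproduces the same aggregate behaviour of player~$N$'s stock. The crucial observation is that player~$N$'s cost $U^N(\alpha,T^{-N})$ depends on $\alpha$ and $z_\alpha$ only through the aggregate quantities $x_1(z_\alpha),x_2(z_\alpha)$ (equivalently through $\xi_1,\xi_2$), not through how the stock is split among the deputies. So it suffices to find an SA strategy $s$ whose induced CE $z_s$ yields the same split $(x_1,x_2)$ of player~$N$'s total weight $T^N$ across the two arcs, and the same opponents' aggregate $(y_1,y_2)$ — then automatically $\xi_r(z_s)=\xi_r(z_\alpha)$ and the costs coincide.

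First I would record the first-order (variational) characterization of the CE $z_\alpha$: each atomic deputy $i$ with $x^i_1,x^i_2>0$ satisfies the marginal-cost equalization
\begin{equation*}
c_1(\xi_1) + x^i_1 c_1'(\xi_1) = c_2(\xi_2) + x^i_2 c_2'(\xi_2),
\end{equation*}
with the appropriate inequality when a deputy is at a corner, and the nonatomic deputies (if $\alpha^0>0$) satisfy $c_1(\xi_1)=c_2(\xi_2)$ on the support. Since all deputies face the same aggregate $(\xi_1,\xi_2)$, and $c_1',c_2'>0$, the per-deputy split on the interior is forced to be the same affine function of the common $(\xi_1,\xi_2)$; deputies either all behave identically in the interior or sit at a boundary. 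I would then distinguish cases by whether arc~1 or arc~2 is cheaper at $z_\alpha$ (using the global comparison in the style of Lemma~\ref{prop:c1_less_c2}): in the generic case the deputies are not all at the same corner, and one can verify that the aggregate player-$N$ stock $(x_1,x_2)$, together with $(\xi_1,\xi_2)$, satisfies exactly the equilibrium condition that a single atomic deputy of some weight $s\in[0,T^N]$ would satisfy against the opponents — one reads off $s$ from the marginal-cost identity $c_1(\xi_1)+s\,c_1'(\xi_1) = c_2(\xi_2)+s\,c_2'(\xi_2)$ when both coordinates are positive (if player~$N$'s stock is entirely on one arc, take $s=0$, the nonatomic strategy, or $s=T^N$ as appropriate). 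One then checks $0\le s\le T^N$ using convexity/monotonicity.

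Having produced a candidate SA strategy $s$, the remaining step is to verify that the flow obtained from $z_\alpha$ by replacing player~$N$'s deputies with a single atomic deputy of weight $s$ (and nonatomic part $T^N-s$) — keeping $\xi_1,\xi_2$ and the opponents' flows unchanged — is indeed the CE $z_s$ of $\Gamma(s,T^{-N})$. This follows from uniqueness of the CE in two-parallel-arc composite games (cited in the excerpt): I need only check that the candidate flow satisfies conditions (a) and (b) of the CE definition, which it does by construction since the opponents' conditions are inherited verbatim and player~$N$'s single deputy satisfies its first-order condition by the choice of $s$, with the nonatomic part (if any) satisfying $c_1(\xi_1)=c_2(\xi_2)$ precisely when the original $\alpha$ had a nonatomic part on both arcs. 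Then $U^N(s,T^{-N})=U^N(\alpha,T^{-N})$ is immediate from \eqref{def:cost_decentralization}.

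\textbf{The main obstacle} I anticipate is the careful case analysis of corner behaviour: when some deputies are pinned to a single arc while others split, or when player~$N$'s total stock concentrates on one arc, the "inversion" $s\mapsto$ (aggregate split) is not given by the clean interior formula, and one must argue that the correct SA replacement is a boundary point of $[0,T^N]$ (the nonatomic or trivial strategy, or an intermediate $s$ with the single deputy at a corner) and that it reproduces the same aggregate. Handling these degenerate configurations — and confirming that the resulting $s$ always lands in $[0,T^N]$, using Assumption~\ref{assp:c1_less_c2} to control which arc is cheaper — is where the real work lies; the interior case is essentially a one-line computation once the variational characterization is in hand.
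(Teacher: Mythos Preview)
Your overall plan is sound and coincides with the paper's approach: both hinge on the observation (Lemma~\ref{lm:equiv_decentralization} in the paper) that two strategies inducing the same aggregate load $\xi_1$ are equivalent, so one need only manufacture an SA strategy $s$ reproducing $\xi_1(z_\alpha)$. The paper does this via an explicit four-mode case analysis (Lemmas~\ref{thm:mode_1_cond}--\ref{thm:mode_5_cond}), and in the only nontrivial mode (mode~2, where some deputies split across both arcs) it produces the closed form $s=\alpha^{[k]}-(k-1)h(\xi_1)$, then checks $s\in(0,T^N]$ and that the induced CE is indeed of mode~2 specified by $1$ and~$l$.

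The genuine gap in your proposal is the formula you give for $s$. The identity
\[
c_1(\xi_1)+s\,c_1'(\xi_1)=c_2(\xi_2)+s\,c_2'(\xi_2)
\]
is not anyone's first-order condition: for an atomic deputy of weight $s$ with split $(\beta_1,\beta_2)$, the FOC is $c_1(\xi_1)+\beta_1 c_1'(\xi_1)=c_2(\xi_2)+\beta_2 c_2'(\xi_2)$ with $\beta_1+\beta_2=s$, not with $s$ on both sides. Your equation yields $s=h(\xi_1)/(1-a(\xi_1))$, which is undefined when $c_1'(\xi_1)=c_2'(\xi_2)$, can be negative, and in any case does not match the required value. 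The correct construction in the interesting case ($c_1(\xi_1)<c_2(\xi_2)$) is to let the single atomic deputy carry \emph{all} of player~$N$'s arc-2 mass, i.e.\ set $\beta_2=x_2(z_\alpha)$, put the nonatomic remainder $T^N-s$ entirely on arc~1, and solve the deputy's FOC for $s$; this gives $s=x_2(1+a(\xi_1))+h(\xi_1)=\alpha^{[k]}-(k-1)h(\xi_1)$. Verifying $s\le T^N$ and $\beta_1\ge 0$ then requires exactly the inequalities the paper extracts from the mode-2 characterization (eq.~\eqref{cond_2_f}), so the ``corner'' case analysis you anticipate is not optional but is the substance of the proof.
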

\begin{proof}
Lemmas~\ref{thm:mode_1_cond}--\ref{thm:mode_5_cond} give the subset of SA strategies equivalent to $\alpha$. 
\end{proof}


\section{Optimal decentralization}\label{sec:deleg}
\subsection{Optimal decentralization strategy}\label{chp4_sec4}
This section investigates in the existence of an optimal decentralization strategy $\alpha$ of atomic player~$N$, i.e. the one minimizing $U^N(\alpha,T^{-N})$. Note that the space of decentralization strategies of $N$, $\mathcal{S}^N=\bigcup_{n=0}^{+\infty}\mathcal{S}^{N,n}$, is the union of a countably infinite number of simplexes. Hence, the existence of an optimal strategy should not be taken for granted.  
 
According to Theorem~\ref{thm:sym_single}, if $N$ has an optimal decentralization strategy, there must be an equivalent SA strategy of her. Therefore, instead of studying the existence of an optimal strategy in $\mathcal{S}^N$, one need only focus on the space of SA strategies, $S\!A^N=[0,T^N]$, which is a compact set in $\mathbb{R}$. An optimal strategy thus exists if player~$N$'s cost is continuous in her SA strategy on $S\!A^N$.

As the following theorem shows, an SA optimal strategy does exist and its explicit form depends on the relative weight of atomic player~$N$ among all the players. In particular, when $M$ and $T^1$, ..., $T^{N-1}$ are fixed, three cases can be distinguished. They are respectively called \emph{nonatomic}, \emph{trivial} and \emph{nontrivial}, according to the form of the  SA optimal strategies.

%
%

\begin{theorem}\label{thm:best}
Atomic player~$N$ has an optimal decentralization strategy which minimizes her cost at the CE of the induced composite congestion game. 
More precisely, denoting $H=\frac{c_{2}(0)-c_{1}(M)}{c'_{1}(M)}$, one has:

1 {\em (nonatomic)}. Every strategy of atomic player~$N$ is optimal and is equivalent to the nonatomic strategy $\underline{\alpha}$, if one of the two following holds:

(1.1) $H>0$ and either (i) $T^{i}\leq H$ for all $i\in I\setminus \{0\}$, or (ii) $N>1$, $\max_{1\leq i \leq N-1}{T^i}> H$, $T^{N}\leq C_0$.

(1.2) $H\leq 0$ and either (i) $\sum_{i=1}^{N}T^i\leq C_1$, or (ii) $N>1$, $\sum_{i=1}^{N-1}T^i> C_1$, and $T^{N}\leq C_2$.

2 {\em (trivial)}. Atomic player~$N$'s unique optimal strategy is the trivial one $\bar{\alpha}$, i.e. not decentralizing, if one of the two following holds:

(2.1) $H>0$, $T^{N}> H$,  and $\max_{1\leq i \leq N-1}{T^i}\leq H$ in the case that $N>1$.

(2.2) $H \leq 0$, $N=1$ and $T^{N}> C_1$.

3 {\em (nontrivial)}. Atomic player~$N$ has at least one optimal strategy which is not necessarily nonatomic or trivial in the remaining situations.
%
%

Here $C_0$, $C_1$ and $C_2$ are strictly positive constants, determined by $M$ and $T^1$, \ldots, $T^{N-1}$.
\end{theorem}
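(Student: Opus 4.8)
Assume Assumption~\ref{assp:c1_less_c2} (cases \textsc{(iii)},\textsc{(iv)} being symmetric), so that $H>0$ exactly in case \textsc{(i)} and $H\le 0$ exactly in case \textsc{(ii)}. The first step is to use Theorem~\ref{thm:sym_single} to replace the infinite-dimensional space $\cS^N=\bigcup_{n\ge 0}\cS^{N,n}$ by the compact interval $SA^N=[0,T^N]$, whose points are the admissible weights $s$ of player~$N$'s (at most one) atomic deputy; here $s=0$ is the nonatomic strategy $\underline\alpha$ and $s=T^N$ the trivial strategy $\bar\alpha$. Put $\phi(s):=U^N(s,T^{-N})$. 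By Theorem~\ref{thm:sym_single}, $\inf_{\alpha\in\cS^N}U^N(\alpha,T^{-N})=\inf_{s\in[0,T^N]}\phi(s)$ and any minimizer of $\phi$ induces an optimal decentralization strategy of $N$; so it remains to show that $\phi$ attains its minimum on $[0,T^N]$ and to locate the minimizer.

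For this I would describe the unique CE $z_s$ of $\Gamma(s,T^{-N})$ explicitly as a function of $s$. By Lemma~\ref{prop:c1_less_c2}, arc~$1$ is the cheaper arc at the CE of every composite game on this network of total weight $M$, so $z_s$ lies in one of finitely many \emph{modes}, distinguished by whether any stock sits on arc~$2$, by which of the atomic players $1,\dots,N-1$ and the deputy of weight $s$ split across both arcs, and by whether the nonatomic population equalizes the two arc costs. Lemmas~\ref{thm:mode_1_cond}--\ref{thm:mode_5_cond} give, mode by mode, the complementarity/first-order conditions pinning down the CE; each such system is affine in the flows, so it solves to express $\xi_1(z_s)$ and $x_1(z_s)=T^N-x_2(z_s)$, hence $\phi(s)=x_1(z_s)\,c_1(\xi_1(z_s))+x_2(z_s)\,c_2(\xi_2(z_s))$, as an elementary smooth function of $s$ on the set of $s$ realizing that mode. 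One then checks these sets are sub-intervals covering $[0,T^N]$, that the transitions between consecutive modes occur exactly at $H$, $C_0$, $C_1$, $C_2$ (these constants being \emph{defined} as the relevant thresholds; the solved formulas also show $C_0,C_1,C_2>0$ and that they depend only on $M,T^1,\dots,T^{N-1}$), and that the mode-formulas for $\phi$ agree at each transition. Hence $\phi$ is continuous on the compact set $[0,T^N]$ and attains its minimum, and we moreover have an explicit piecewise-smooth description of $\phi$.

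It then remains to locate $\arg\min\phi$. On each mode-interval, differentiating $\phi$ while using $x_1(z_s)+x_2(z_s)=T^N$ and $\xi_1(z_s)+\xi_2(z_s)=M$ expresses $\phi'(s)$ as a combination of $c_1(\xi_1(z_s))-c_2(\xi_2(z_s))$ and $x_1(z_s)c'_1(\xi_1(z_s))-x_2(z_s)c'_2(\xi_2(z_s))$ with the $s$-derivatives of $x_1(z_s)$ and $\xi_1(z_s)$ as coefficients; substituting the CE conditions valid in that mode---cost equalization $c_1(\xi_1)=c_2(\xi_2)$ when the nonatomic deputies use both arcs, the marginal-cost identity of the atomic deputy of weight $s$ when she splits, and the relevant corner inequalities otherwise---collapses $\phi'(s)$ to an expression of constant sign on that interval. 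Collating these signs yields the trichotomy. In the configurations of item~1, decentralization changes neither $N$'s flow nor her cost at the CE (either all her stock stays on arc~$1$ for every $s$, or the two arc costs are equalized at the CE, so her cost is $T^N$ times the common cost independently of her split), whence $\phi$ is constant and every strategy is equivalent to $\underline\alpha$; this is exactly what $T^N\le H$ / $\sum_{i=1}^N T^i\le C_1$ encode when no opponent is large, and $T^N\le C_0$ / $T^N\le C_2$ encode when one is. In the configurations of item~2, $\phi$ is non-increasing on $[0,T^N]$ and not constant near $s=T^N$, so $s=T^N$---not decentralizing---is the unique minimizer. In all remaining configurations, $\phi$ attains its minimum at some $s\in[0,T^N]$ that need be neither $0$ nor $T^N$, the nontrivial case of item~3.

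The step I expect to be the main obstacle is the mode bookkeeping: determining, as $s$ decreases from $T^N$ to $0$, the exact succession of modes through which $z_s$ passes, and verifying that the transitions occur precisely at $H,C_0,C_1,C_2$, with the asserted positivity and dependence of $C_0,C_1,C_2$. The related subtlety behind the case distinctions is that whether an opponent with $T^i>H$ (in case \textsc{(i)}), or the opponents jointly with $\sum_{i=1}^{N-1}T^i>C_1$ (in case \textsc{(ii)}), already split at the CE changes which modes are available; this forces item~1 to split into (1.1)(i)/(ii) and (1.2)(i)/(ii), and item~2 into (2.1)/(2.2), according to $\max_{1\le i\le N-1}T^i$ versus $H$, resp.\ $\sum_{i=1}^{N-1}T^i$ versus $C_1$, resp.\ $N=1$ versus $N>1$. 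Once the mode structure is fixed, the sign analysis of $\phi'$ and the endpoint matching are routine computations with $c_1$ and $c_2$.
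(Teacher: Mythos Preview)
Your proposal is correct and follows essentially the same approach as the paper: reduce to SA strategies via Theorem~\ref{thm:sym_single}, analyze the CE $z_s$ mode-by-mode using Lemmas~\ref{thm:mode_1_cond}--\ref{thm:mode_5_cond}, establish continuity of $s\mapsto U^N(s,T^{-N})$ on $[0,T^N]$, and then locate the minimum by a derivative computation on each mode-interval (the paper does this via the change of variable $\xi_1=F^{-1}_{1+l}(s+T^{[l]})$ rather than differentiating in $s$ directly, but this is cosmetic). The paper organizes the trichotomy into three separate lemmas (Lemmas~\ref{lm:all_best}, \ref{lm:lin_1}, \ref{lm:lin_2}) and, as you correctly anticipate, omits the lengthy mode bookkeeping in the nontrivial case; your identification of that bookkeeping---in particular tracking the index $l$ of splitting opponents as $s$ decreases---as the main obstacle is exactly right.
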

Some remarks on the profitability of strategic decentralization of atomic player~$N$ are necessary here. It is known (Orda et al. 1993 \cite{Orda93}, Wan 2012 \cite{Wan11}, also see Lemma~\ref{prop:c1_less_c2}) that smaller players are more likely to \emph{free ride} by using the least costly arc(s). The bigger a player is, the more she tends to use the more expensive arc to internalize the negative externality of her choice on her own stock. Accordingly, as Theorem~\ref{thm:best} shows, the optimal choice of atomic player~$N$ also depends on her relative size among all the players. In the nonatomic case, either atomic player~$N$ is too small compared with her atomic opponents, or all the players are small. Then, she can never change the outcome of the congestion game by unilateral decentralization, because she and her potential deputies always behave like free-riders. In the trivial case, atomic player~$N$ is very big compared to her opponents. If she decentralizes, she cannot free ride on them to her advantage because they are too small. Moreover, since her deputies are not internalizing enough their externalities, she actually loses by decentralizing. In the non trivial case, atomic player~$N$ is neither too small to be always a free-rider, nor big enough to be dominating. In this case, by some appropriate decentralization, she manages to free ride on her atomic opponents.

Besides, let us point out that although the paper studies a one-person decision problem where atomic player~$N$ chooses how to decentralize, an alternative formulation of the problem as an extensive form game is also possible. By the definition of $N$'s cost $U^N(\alpha,T^{-N})$ induced by her choice $\alpha$, the problem can be considered as a two-stage game called \emph{unilateral decentralization game}. In the first stage, only $N$ makes a move by choosing a decentralization strategy. In the second stage, her deputies created by this choice as well as the players in $T^{-N}$ play a composite congestion game. Then, an optimal decentralization strategy $\alpha$ of atomic player~$N$ and the corresponding CE $z_\alpha$ in $\Gamma(\alpha,T^{-N})$ constitute a subgame perfect Nash equilibrium (SPNE) of the unilateral decentralization game. Theorem~\ref{thm:best} shows that an SPNE exists in this game. 

However, this unilateral decentralization game is not precisely a Stackelberg game in the sense that the leader, atomic player~$N$, moves in the first stage and the followers, $T^{-N}$, move in the second stage. As a matter of fact, by decentralizing unilaterally in the first stage, player~$N$ creates new players, i.e. her deputies, who will participate in the second stage. Hence, while making her choice in the first stage, player~$N$ should anticipate not only the action of her opponents $T^{-N}$ but also that of her own deputies in the second stage.

\subsection{Unilateral decentralization and Stackelberg game}
This subsection compares the advantage of being able to decentralize unilaterally and that of being the leader in a Stackelberg congestion game. Stackelberg-type behaviour in routing is studied in \cite{KLO1997,Roughgarden2004,YZM2007,KK2009} and \cite{BHS2010}, etc.

Let $S\!\Gamma(T^N, T^{-N})$ be the Stackelberg composite congestion game where atomic player~$N$ is the leader and the players in  $T^{-N}$ are the followers. In the first stage of the game, player $N$ chooses how to distribute her flow on the two arcs. Then in the second stage, the followers choose how to distribute their flows. Let player~$N$'s strategy in the first stage be denoted by $x=(x_1, x_2)\in X^N$. Given $x$, the followers in $T^{-N}$ play a composite congestion game denoted by $\Gamma_x(T^{-N})$ whose unique CE is denoted by $Z_x$. Let $\Pi^N(x,T^{-N})$ denote player $N$'s cost at $(x,Z_x)$. Then a subgame perfect Nash equilibrium (SPNE) of the game is attained, if it exits, at $(x^*,Z_{x^*})$ where $x^*=\arg\min _{x\in X^N} \Pi^N(x,T^{-N})$.

At the end of the previous subsection, the optimal unilateral decentralization problem is formulated as a two-stage game, where atomic player~$N$ is also the first mover.  Although that game is not exactly a Stackelberg game as $S\!\Gamma(T^N, T^{-N})$, atomic player~$N$ has the same advantage as the first mover in both games, as the following theorem shows.

\begin{theorem}\label{thm:stackelberg}
Stackelberg game $S\!\Gamma(T^N, T^{-N})$ admits an SPNE. 

Besides, by playing an optimal decentralization strategy $\alpha\in \cS^N$, atomic player~$N$ has the same cost $U^N(\alpha,T^{-N})$ as her SPNE cost in Stackelberg game $S\!\Gamma(T^N, T^{-N})$.
\end{theorem}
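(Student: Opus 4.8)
\emph{Overview.} The statement has two parts. For the first --- existence of an SPNE of $S\!\Gamma(T^N,T^{-N})$ --- I would use a compactness argument. For the second I would show that the two minimisation problems $\min_{\alpha\in\cS^{N}}U^N(\alpha,T^{-N})$ and $\min_{x\in X^N}\Pi^N(x,T^{-N})$ have the same value, by proving the two opposite inequalities; since by Theorem~\ref{thm:best} player~$N$ has an optimal decentralization strategy, such a strategy then has cost $\min_{\alpha\in\cS^{N}}U^N(\alpha,T^{-N})=\min_{x\in X^N}\Pi^N(x,T^{-N})=\Pi^N(x^\ast,T^{-N})$, which is exactly the Stackelberg SPNE cost.

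\emph{SPNE existence.} Fix $x=(x_1,x_2)\in X^N$. The followers' game $\Gamma_x(T^{-N})$ is an ordinary composite congestion game on the two-parallel-arc network once each arc cost function $c_r$ is replaced by $\tau\mapsto c_r(\tau+x_r)$; these shifted functions still satisfy Assumption~\ref{assp:base_chap3} on the range that matters, since $x_r+\tau\le M<\bar M$, so $\Gamma_x(T^{-N})$ has a unique CE $Z_x$ by the uniqueness results recalled in Section~\ref{sect:notations}. Because $x$ enters the cost data continuously and the CE is unique --- being the solution of a strictly monotone variational inequality whose data vary continuously with $x$ --- the map $x\mapsto Z_x$ is continuous, hence $\Pi^N(x,T^{-N})=x_1c_1(x_1+y_1(Z_x))+x_2c_2(x_2+y_2(Z_x))$ is a continuous function on the compact interval $X^N$. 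It attains its minimum at some $x^\ast$, and $(x^\ast,Z_{x^\ast})$ is then an SPNE.

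\emph{Every decentralization outcome is a Stackelberg outcome.} Let $\alpha\in\cS^{N}$ and let $z_\alpha$ be the CE of $\Gamma(\alpha,T^{-N})$; put $x:=(x_1(z_\alpha),x_2(z_\alpha))\in X^N$. A follower's cost and best-response condition depend on $N$'s deputies only through their aggregate loads $x_1(z_\alpha)$ and $x_2(z_\alpha)$ on the two arcs; hence the restriction of $z_\alpha$ to the followers obeys exactly the CE conditions of $\Gamma_x(T^{-N})$ and, by uniqueness, equals $Z_x$. Comparing \eqref{def:cost_decentralization} with the definition of $\Pi^N$ gives $U^N(\alpha,T^{-N})=\Pi^N(x,T^{-N})\ge\min_{x'\in X^N}\Pi^N(x',T^{-N})$; taking the infimum over $\alpha$ yields $\min_{\alpha\in\cS^{N}}U^N(\alpha,T^{-N})\ge\min_{x\in X^N}\Pi^N(x,T^{-N})$.

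\emph{Realising the Stackelberg optimum, and the main obstacle.} For the reverse inequality it suffices, in view of Theorems~\ref{thm:sym_single}--\ref{thm:best}, to exhibit a single-atomic strategy whose cost equals $\Pi^N(x^\ast,T^{-N})$. Parametrise the SA strategies by the weight $\beta\in[0,T^N]$ of the unique atomic deputy, $\beta=0$ being the nonatomic strategy $\underline\alpha$ and $\beta=T^N$ the trivial strategy $\bar\alpha$, and let $z_{s(\beta)}$ be the CE of the induced game; as in the SPNE-existence argument $z_{s(\beta)}$, hence $x_1(z_{s(\beta)})$, depends continuously on $\beta$. The key claim is that the Stackelberg-optimal arc-$1$ load $x_1^\ast$ lies between $x_1(z_{s(T^N)})$ and $x_1(z_{s(0)})$: at the simultaneous-move CE player~$N$ internalises the congestion she creates and uses the arc that is cheaper at the CE (by Assumption~\ref{assp:c1_less_c2}, cf.\ Lemma~\ref{prop:c1_less_c2}) the least, and since a leader does at least as well while the induced flight of the followers off that arc makes it comparatively less congested for her, the leader puts at least as much there; conversely, with a fully nonatomic population all of $N$'s weight is pushed onto the cheaper arc up to cost equalisation, and moving mass past that point can only raise $N$'s Stackelberg cost, so she would not. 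Granting this, the intermediate value theorem gives $\beta^\ast$ with $x_1(z_{s(\beta^\ast)})=x_1^\ast$ (hence $x_2(z_{s(\beta^\ast)})=x_2^\ast$); by the aggregation observation of the previous paragraph the followers play $Z_{x^\ast}$ at $z_{s(\beta^\ast)}$, so $U^N(s(\beta^\ast),T^{-N})=\Pi^N(x^\ast,T^{-N})$, giving $\min_{\alpha\in\cS^{N}}U^N(\alpha,T^{-N})\le\min_{x\in X^N}\Pi^N(x,T^{-N})$ and, with the previous inequality, the desired equality. The main obstacle is precisely this key claim: it requires comparing three first-order conditions --- those of an atomic player, of a nonatomic population, and of the Stackelberg leader --- together with control of the sign and magnitude of the followers' marginal response (which lies in $[-1,0]$), and it is here that the explicit equilibrium descriptions underlying Theorems~\ref{thm:sym_single}--\ref{thm:best} are needed.
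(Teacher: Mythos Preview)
Your overall architecture is correct and the easy direction --- that every decentralization outcome is a Stackelberg outcome, hence $\min_{\alpha}U^N\ge\inf_{x}\Pi^N$ --- is exactly what the paper does. The difference lies in the reverse direction and in how SPNE existence is obtained.

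For the reverse inequality you aim to show that the Stackelberg optimiser $x^\ast$ itself is realised by some SA strategy, via the intermediate value theorem applied to $\beta\mapsto x_1(z_{s(\beta)})$. The paper takes a complementary route: rather than locating $x_1^\ast$ in the interval $[\,x_1(z_{\bar\alpha}),\,x_1(z_{\underline\alpha})\,]$, it shows that \emph{every} leader strategy $x$ outside that interval (necessarily with $x_1<\bar x_1$, since $x_1(z_{\underline\alpha})=T^N$) satisfies $\Pi^N(x,T^{-N})\ge U^N(\bar\alpha,T^{-N})$. This is done by a direct integral computation: from the CE condition $c_1(\bar\xi_1)+\bar x_1c_1'(\bar\xi_1)\le c_2(\bar\xi_2)+\bar x_2c_2'(\bar\xi_2)$ at $\bar z$ one first deduces $\xi_1\le\bar\xi_1$ (ruling out $\xi_1>\bar\xi_1$ via the followers' best-response inequalities), and then integrates the marginal-cost comparison along the segment from $(\bar x,\bar\xi)$ to $(x,\xi)$. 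No analysis of the leader's first-order condition or of the slope of the followers' response is needed. Thus the paper sidesteps precisely the obstacle you flag: it never compares the three first-order systems, but instead shows that the ``missing'' leader strategies are all weakly dominated by the trivial decentralization.

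Two further remarks. First, your heuristic for the lower bound $x_1^\ast\ge\bar x_1$ (``the leader does at least as well and the followers' flight off arc~1 makes it cheaper for her'') is not a proof; what actually closes the gap is the convexity-based integral inequality just described, which is the substantive content of the paper's argument. Second, the paper does not prove SPNE existence by your continuity-plus-compactness route; it obtains it as a corollary of the equality $\min_x\Pi^N=\min_s U^N$, exhibiting $x(z_\alpha)$ for an optimal decentralization $\alpha$ as an explicit minimiser. Your continuity argument is fine in principle, but note that it requires the parametrised-VI continuity you invoke to go through uniformly on $X^N$; the paper's route avoids this by construction.
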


Obviously, player~$N$ cannot do worse in the Stackelberg game than in the unilateral decentralization game, because she can always do by herself what she anticipates her deputies to do. 
On the contrary, there can be a strategy $(x_1,x_2)$ of player~$N$ in the Stackelberg game that cannot be ``mimicked'' by a decentralization strategy. In other words, no decentralization strategy $\alpha$ satisfies that, at the CE of the induced congestion game, the aggregate weight of player $N$'s deputies' stock on arc $r$ is exactly $x_r$. However, such a strategy $(x_1,x_2)$ cannot be optimal as the proof of the theorem shows. 

One should not deduce from Theorem~\ref{thm:stackelberg} that the study of unilateral decentralization is useless because, once being the first mover, an atomic player need only do what her optimal deputies would have done. As pointed out in Sorin and Wan 2013 \cite{SW12}, in a congestion game, a player's choice has an influence on other players' costs not via her identity (i.e. anonymously) but via the weight of her stock attributing to each particular choice. The behavior of decentralization is thus feasible in terms of strategies and undetectable to the others. These two features distinguish decentralization, as a strategic opportunity, from the first mover's advantage in a Stackelberg congestion game.

\section{Impact of strategic decentralization}\label{sec:chp3_sec4}
This section studies the impact of atomic player~$N$'s  unilateral decentralization on the other players' costs and the social cost. Recall that the trivial decentralization strategy $\bar{\alpha}$ corresponds to not decentralizing. Thus one has only to compare the cost to the players in $T^{-N}$ and the social cost at $z_{\bar{\alpha}}$ with those at $z_\alpha$, for an arbitrary strategy $\alpha \in \cS^N$.

\begin{theorem}\label{thm:social_2}
Suppose that atomic player~$N$ decentralizes according to strategy $\alpha\in \cS^N$. Then at the CE of the induced congestion game, the social cost and the cost to each player in $T^{-N}$ are not lower than at the CE of the congestion game without decentralization:
\begin{equation}\label{eq:neg_ext}
 CS(z_\alpha)\geq CS(z_{\bar{\alpha}})\; \text{ and }\; u^{j}(z_\alpha)\geq u^{j}(z_{\bar{\alpha}}), \;\,\forall j\in I \!\setminus\! \{N\}.
\end{equation}
In addition, equalities hold only in the nonatomic case, i.e. when the assumptions in the first case of Theorem~\ref{thm:best} hold. 
\end{theorem}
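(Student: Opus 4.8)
The plan is to reduce to single-atomic (SA) strategies, exploit the one-dimensional monotone structure of the induced composite equilibria, bound the social cost by convexity, and treat each opponent by an envelope argument.

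\emph{Reduction, monotonicity, and the main obstacle.} By Theorem~\ref{thm:sym_single} it suffices to prove \eqref{eq:neg_ext} for SA strategies $\alpha=s\in S\!A^N=[0,T^N]$, the trivial strategy $\bar\alpha$ being $s=T^N$; moreover, as in Lemmas~\ref{thm:mode_1_cond}--\ref{thm:mode_5_cond}, equivalent strategies induce the same aggregate flow, so $CS$ and the $u^j$ depend on the strategy only through $\xi(z_s)$. Write $\xi_1(s):=\xi_1(z_s)$, $\xi_2(s)=M-\xi_1(s)$; since decentralization leaves the total weight $M$ unchanged, Lemma~\ref{prop:c1_less_c2} keeps arc~$1$ (weakly) cheapest at every $z_s$: $c_1(\xi_1(s))\le c_2(\xi_2(s))$. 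The structural input I will use is that more decentralization pushes flow onto the cheap arc: as $s$ decreases, (M1) $\xi_1(s)$ does not decrease, and (M2) each opponent's own arc-$1$ flow $y_1^j(z_s)$ does not increase, hence $\xi_1^{-j}(z_s):=\xi_1(s)-y_1^j(z_s)$ does not decrease. Proving (M1)--(M2) is the main obstacle: it requires showing, mode by mode from the explicit equilibrium formulas of Section~\ref{sec:proofs} (or by monotone comparative statics on the equilibrium system), that the opponents' reaction to the deputies' more aggressive behaviour is damped and never overturns the shift.

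\emph{Social cost.} Put $g(t):=t\,c_1(t)+(M-t)\,c_2(M-t)$ on $[0,M]$, so $CS(z_s)=g(\xi_1(s))$; by Assumption~\ref{assp:base_chap3}, $g$ is strictly convex, hence strictly increasing on $[\xi_1^{SO},M]$, where $\xi_1^{SO}$ is its unique minimiser. The key point is $\xi_1(T^N)\ge\xi_1^{SO}$, i.e. $g'(\xi_1(T^N))\ge0$: at $z_{\bar\alpha}$, if $c_1(\xi_1)=c_2(\xi_2)$ then $\xi_1(T^N)=\hat\xi$ and $g'(\hat\xi)=\hat\xi c_1'(\hat\xi)-(M-\hat\xi)c_2'(M-\hat\xi)\ge0$ by \eqref{eq:assp_c1}; if $c_1(\xi_1)<c_2(\xi_2)$ then no nonatomic player uses arc~$2$, every atomic player with positive arc-$2$ flow uses arc~$1$ too, and summing their first-order conditions $c_1(\xi_1)+x_1^ic_1'(\xi_1)=c_2(\xi_2)+x_2^ic_2'(\xi_2)$ over this nonempty set (their arc-$2$ flows summing to $\xi_2$, their arc-$1$ flows to at most $\xi_1$, and $c_2(\xi_2)>c_1(\xi_1)$) yields $\xi_1c_1'(\xi_1)\ge\xi_2c_2'(\xi_2)+\big(c_2(\xi_2)-c_1(\xi_1)\big)$, i.e. $g'(\xi_1)\ge0$; and $\xi_2=0$ gives $\xi_1=M\ge\xi_1^{SO}$ outright. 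By (M1), $\xi_1(s)\ge\xi_1(T^N)\ge\xi_1^{SO}$ for $s\le T^N$, so $CS(z_s)=g(\xi_1(s))\ge g(\xi_1(T^N))=CS(z_{\bar\alpha})$.

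\emph{Opponents' costs and equality.} For $j\in I\setminus\{N\}$ and $E_1\in[0,M-T^j]$ let $V_j(E_1)$ be the minimum of $y_1c_1(y_1+E_1)+y_2c_2(y_2+M-T^j-E_1)$ over $y_1+y_2=T^j$, $y\ge0$; strict convexity makes $V_j$ a well-defined $C^1$ function with $u^j(z_s)=V_j\big(\xi_1^{-j}(z_s)\big)$. The envelope theorem gives $V_j'(E_1)=y_1^jc_1'(\xi_1)-y_2^jc_2'(\xi_2)$, which by the interior first-order condition $c_1(\xi_1)+y_1^jc_1'(\xi_1)=c_2(\xi_2)+y_2^jc_2'(\xi_2)$ equals $c_2(\xi_2)-c_1(\xi_1)\ge0$ (arc~$1$ cheapest), while on the boundary $y_2^j=0$ it equals $T^jc_1'(\xi_1)\ge0$, and the boundary $y_1^j=0$ cannot occur. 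Thus $V_j$ is non-decreasing, so by (M2) $u^j(z_s)=V_j(\xi_1^{-j}(z_s))\ge V_j(\xi_1^{-j}(z_{\bar\alpha}))=u^j(z_{\bar\alpha})$ for $s\le T^N$; for $j=0$ this reads $u^0(z_s)=c_1(\xi_1(s))$, non-decreasing by (M1). Finally, since $g$ is strictly increasing on $[\xi_1^{SO},M]$ and each $V_j$ is strictly increasing wherever $c_1(\xi_1)<c_2(\xi_2)$, equality throughout \eqref{eq:neg_ext} for every $\alpha$ forces $\xi_1(z_\alpha)\equiv\xi_1(z_{\bar\alpha})$; by the case analysis of Theorem~\ref{thm:best} this is precisely the nonatomic case, where player~$N$'s deputies act as free riders and leave the equilibrium flow unchanged, whereas in the trivial and nontrivial cases some strategy already moves $\xi_1$ strictly.
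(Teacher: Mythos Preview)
Your argument is correct and shares the paper's structural skeleton: reduce to SA strategies, invoke the monotonicity of $\xi_1(z_s)$ in $s$ (the paper's Corollary~\ref{prop:flow_mono}), and derive the individual monotonicity of $y_1^j$ from it. Where you genuinely diverge is in how you turn that monotonicity into the cost comparisons.

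For the social cost, the paper sums the first-order conditions at $z_{\bar\alpha}$ to produce a constant $D$ separating $c_1(\bar\xi_1)+\bar\xi_1 c_1'(\bar\xi_1)$ from $c_2(\bar\xi_2)+\bar\xi_2 c_2'(\bar\xi_2)$, then bounds the difference $CS(z_s)-CS(z_{\bar\alpha})$ by writing it as two integrals and comparing each integrand to $D$. You instead observe that $g(t)=t\,c_1(t)+(M-t)c_2(M-t)$ is strictly convex, use the summed first-order conditions only once to show $g'(\bar\xi_1)\ge 0$ (i.e.\ $\bar\xi_1\ge \xi_1^{SO}$), and conclude by monotonicity of $g$ on $[\xi_1^{SO},M]$. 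This is cleaner and more conceptual; the paper's version is more self-contained in that it never names the social optimum.

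For the opponents, the paper again splits into interior and corner cases for player~$j$, introduces another separating constant $B$, and computes two pairs of integral bounds. Your envelope argument on $V_j(E_1)=\min_{y}\{y_1 c_1(y_1+E_1)+y_2 c_2(y_2+M-T^j-E_1)\}$ compresses both cases into a single derivative computation $V_j'(E_1)=y_1^j c_1'-y_2^j c_2'\ge 0$, which is elegant. One point to make explicit: you need (M2) for the \emph{individual} $y_1^j$, not just the aggregate $y_1$; the paper's Corollary~\ref{prop:flow_mono} states only the latter, but the individual version follows from (M1) and the fact that the equilibrium best-response formula $y_1^j=\frac{T^j a(\xi_1)+h(\xi_1)}{1+a(\xi_1)}$ is decreasing in $\xi_1$ (since $h$ is strictly decreasing, $a$ non-increasing, and $T^j>h(\xi_1)$ whenever the formula applies), exactly as the paper verifies inside the proofs of Theorems~\ref{thm:stackelberg} and~\ref{thm:social_2}.
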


Recall that the smaller a player is, the more she tends to free ride by using the less costly arc (cf. remark after Theorem~\ref{thm:best}) and exerts a higher externality on the other players. By decentralizing, atomic player~$N$ lets her deputies, who are smaller than her, free ride more aggressively. In other words, her deputies put more weight on the less costly arc, arc 1, than she herself would have done. Not only do these deputies of $N$ increase the cost of the less expensive arc for its users, they also drive the others to put more weight on the more expensive arc, arc 2. In this way, the unilateral decentralization of $N$ increases the social cost as well as her opponents' costs. 

Meunier and Pradeau \cite[Theorem 3]{MeunierPradeau2015} recover a particular case of Theorem~\ref{thm:social_2} concerning social cost and without nonatomic players. They show that in an atomic game taking place in a two-terminal two parallel-arc network with increasing, strictly convex and differentiable arc cost functions, if an atomic player~$i$ transfers some of her stock to a smaller atomic player~$j$ ($T^j\leq T^i$), then the social cost at the CE increases or remain constant after the transfer. In particular, when taking $T^j=0$, this transfer of stock is equivalent to the decentralization of atomic player~$i$ who deputes her stock to two atomic deputies. 

\section{Discussion and perspectives}\label{sec:discussion}
This paper first introduces strategic decentralization behavior into composite congestion games. Then, in the particular setting of binary choice case, it shows (i) the existence of optimal unilateral decentralization which depends on the relative size of the decentralizing player, (ii) the ``equivalence'' between being the only player to decentralize and being the leader in a Stackelberg game, and (iii) the negative impacts of unilateral decentralization on the others.  

With more general network structure, Theorem~\ref{thm:social_2} is no longer valid.

In the setting of single OD (where all the players share the same origin and the same destination), Huang 2013 \cite{Huang2013} provides the following example adapted in our language. In the network shown on the left hand side of Figure~\ref{fig:cominetti}, the social cost at the unique equilibrium with two atomic players of weight $200$ and $21$ respectively increases when the atomic player of weight $21$ deputes her stock to two atomic deputies of weight $20.9$ and $0.1$ respectively. 

 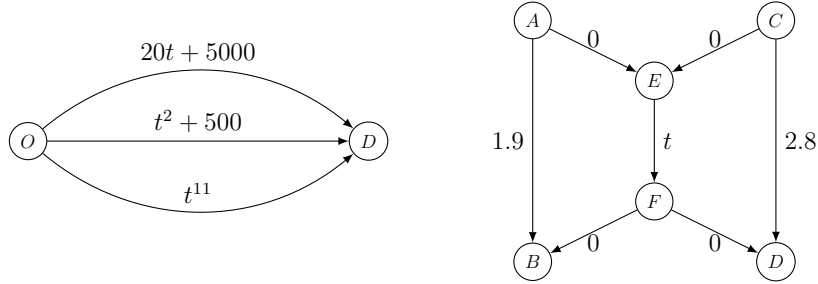
\begin{figure}[htbp!]
\begin{center}
 \begin{tikzpicture}
\begin{scope}[scale=1.6]
 \node[draw,circle,scale=0.6] (O)at(-1.4,0) {$O$};
 \node[draw,circle,scale=0.6] (D)at(1.4,0) {$D$};
\draw[->,>=latex] (O) to[bend left=40] node[midway,above,scale=0.8]{$20t+5000$}(D);
\draw[->,>=latex] (O) to node[midway,above,scale=0.8]{$t^{2}+500$}(D);
\draw[->,>=latex] (O) to[bend right=40] node[midway,above,scale=0.8]{$t^{11}$}(D);
\end{scope}
\begin{scope}[shift={(6,0)}]
 \node[draw,circle,scale=0.6] (B)at(-1.6,-1.6) {$B$};
 \node[draw,circle,scale=0.6] (A)at(-1.6,1.6) {$A$};
 \node[draw,circle,scale=0.6] (C)at(1.6,1.6) {$C$};
 \node[draw,circle,scale=0.6] (D)at(1.6,-1.6) {$D$};
 \node[draw,circle,scale=0.6] (E)at(0,0.8) {$E$};
 \node[draw,circle,scale=0.6] (F)at(0,-0.8) {$F$};
 \draw[->,>=latex] (A) to node[midway,left,scale=0.8]{$1.9$}(B);
 \draw[->,>=latex] (A) to node[]{}(E);
 \node[scale=0.8] (A1)at(-0.8,1.35) {$0$};
 \node[scale=0.8] (C1)at(0.8,1.35) {$0$};
 \node[scale=0.8] (B1)at(-0.8,-1.35) {$0$};
 \node[scale=0.8] (D1)at(0.8,-1.35) {$0$};
 \draw[->,>=latex] (C) to node[midway,right,scale=0.8]{$2.8$}(D);
 \draw[->,>=latex] (C) to node[]{}(E);
 \draw[->,>=latex] (E) to node[midway,right,scale=0.8]{$t$}(F);
 \draw[->,>=latex] (F) to node[]{}(B);
 \draw[->,>=latex] (F) to node[]{}(D);
\end{scope}
 \end{tikzpicture}
\caption{Counter-examples.\label{fig:cominetti}}
\end{center}
 \end{figure}

Cominetti et al. 2009 \cite{Com09} provide the following example in a multiple OD game. 
The network is shown on the right hand side of Figure~\ref{fig:cominetti}. Only arc $E\!-\!F$ has non constant cost function. A group of nonatomic players and an atomic player both have weight 1. The nonatomic group  has OD pair $(A,B)$ and available paths $A\!-\!B$ and $A\!-\!E\!-\!F\!-\!B$, while the atomic player has OD pair $(C,D)$ and available paths $C\!-\!D$ and $C\!-\!E\!-\!F\!-\!D$. The social cost at the CE is $3.89$. If the atomic player deputes her stock to a group of nonatomic players, the social cost at the new CE is $3.8$, lower than $3.89$. Altman et al. 2011 \cite{AltAl11b} provide an example of multiple OD game, where the decentralization of an atomic player to several atomic deputies increases the social cost. 

These examples show that our result concerning the social cost can neither be extended to single OD games with more than two parallel arcs, nor to multiple OD games even if each OD pair is linked by two parallel arcs. The validity of the result concerning the opponents of the decentralizaer remains an open question.

Besides, this paper only considers unilateral decentralization. Decentralization games in which all the atomic players decentralize simultaneously are worth examining. Another potential extension follows Sorin and Wan 2013 \cite{SW12}, where a deputy can also decentralize, and his deputies as well, and so on. In their case, a player has a finite integer weight and can only have deputies of integer weight. Therefore, sequential decentralization will terminate after a finite number of steps. In the setting of this paper, however, an atomic deputy of any size is able to decentralize. A possible approach consists in studying the asymptotic behavior of a sequential decentralization process. A first attempt to study one-shot simultaneous decentralization games as well as sequential decentralizing processes is made in Wan 2012 \cite{Wan12c}, with two atomic players and affine arc cost functions. 

\section{Auxiliary results and the proofs}\label{sec:proofs}
\subsection{Auxiliary functions, notations and properties of a CE}
Without loss of generality, assume from now on $T^1 \geq T^2 \geq \cdots \geq T^{N-1}$. Denote $M^{-N}=\sum_{i=0}^{N-1}T^i$ the total weight of atomic player~$N$'s opponents. Denote by $T^{[l]}=\sum_{j=1}^{l}T^{j}$ the total weight of the $l$ largest atomic players in $T^{-N}$, and let $T^{[0]}=0$.

Fix $\epsilon>0$. Functions $h$, $a$, and $F_{n}$ for $n\in \mathbb{N}$ are defined on $\Omega$ as follows:

\begin{align*}
& h(t)=
\begin{dcases}
\, \frac{c_{2}(M-t)-c_{1}(t)}{c'_{1}(t)}, \,&\text{if }\, 0\leq  t \leq M;\\
\, h(0)-\epsilon t, &\text{if }\,  t < 0;\\
\, h(M) -\epsilon (t-M), &\text{if }\, t > M,
\end{dcases}
\quad a(t)=
\begin{dcases}
\, \frac{c'_{2}(M-t)}{c'_{1}(t)}, \, &\text{if }\, 0\leq  t \leq M;\\
\, a(0),&\text{if }\,  t < 0;\\
\, a(M), &\text{if }\, t > M,
\end{dcases}
\end{align*}
\begin{equation*}
 F_{n}(t) = (M-t)\,(1+a(t))+n\,h(t),\qquad \, n\in \mathbb{N}.
\end{equation*}
\begin{equation*}
H \triangleq h(M),\quad A \triangleq a(\hat{\xi}).
\end{equation*}

The following facts are derived from Assumption~\ref{assp:base_chap3} and the above definition.

(i) $a$ is non-increasing, strictly positive and continuous on $\Omega$. 

(ii) $h$ and all $F_{n}$'s are strictly decreasing and continuous on $\Omega$.  Their inverse functions $h^{-1}$ and $F_{n}^{-1}$ are well defined on neighborhoods of $[h(M),h(0)]$ and $[F^{n}(M),F^{n}(0)]$ respectively, and are also strictly decreasing and continuous.

(iii) Case \textsc{(i)} in Assumption~\ref{assp:c1_less_c2} corresponds to $H>0$, whereas case \textsc{(ii)} corresponds to  $H\leq 0, h(0)\geq 0$ in addition to \eqref{eq:assp_c1}.

(iv) $h(\hat{\xi}) = 0$ and $F_{k}(\hat{\xi}) = (M-\hat{\xi})\,(1+A ) = F_{0}(\hat{\xi})$ for all $k$.

(v) $\hat{\xi}$ exists if and only if
\begin{equation}\label{prop:xi}
H\leq 0, \quad h(0)\geq 0.
\end{equation}

Recall a {\em necessary and sufficient} condition for $x\in X$ to be the CE of composite congestion game $\Gamma(T)$ \cite{Ha88}: for all $r\in\{1,2\}$,
\begin{align}
& x^{0}_{r}\!>\!0 \!\Rightarrow\! c_{r}(\xi_{r}(x))\!=\! \min_{s\!\in\!\{\!1,\!2\!\}}\!c_{s}(\xi_{s}(x)),\label{cond_ward}\\
& x^{i}_{r}\!>\!0 \!\Rightarrow\! c_{r}(\xi_{r}(x))\!+\!x^{i}_{r} c'_r(\xi_{r}(x))\!=\! \min_{s\!\in\!\{\!1,\!2\!\}}\!c_{s}(\xi_{s}(x))\!+\!x^{i}_{s} c'_s(\xi_{s}(x)),\;  \forall i \!\in\! I  \!\setminus \! \{0\}. \label{cond_atom}
\end{align}
%
The following lemma regroups some important properties of the CE of an arbitrary composite congestion game with aggregate stock weight $M$.

\begin{lemma}\label{prop:c1_less_c2}
At the CE $x$ of  $\Gamma(T)$, denote $\xi_r(x)$ simply by $\xi_r$.

1. For each arc $r\in \{1,2\}$, 

(1) if $x^{0}_{r}>0$, then $x^{i}_{r}>0$ for all atomic player~$i$; 

(2) if $T^{i}\geq T^{j}$ for atomic players $i$ and $j$, then $x^{i}_{r} \geq x^{j}_{r}$, and equality holds if and only if $T^{i}= T^{j}$ or $x^{i}_{r} = x^{j}_{r} = 0$; 

(3) for all atomic player~$i$ and for $s\neq r$, if $x^{i}_{r}>0$ and $x^{i}_{s}=0$, then $c_{r}(\xi_{r})<c_{s}(\xi_{s})$.

2. $c_{1}(\xi_{1})\leq c_{2}(\xi_{2})$.

3. If $H\leq 0$ (i.e. $c_{2}(0)\leq c_{1}(M)$), then $\xi_{1} \leq \hat{\xi}$.

4. $h(\xi_{1}) \geq 0$, and equality holds if and only if $H\leq 0$ and $\xi_{1}=\hat{\xi}$.
\end{lemma}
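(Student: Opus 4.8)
The plan is to get Part~1 straight from the equilibrium conditions \eqref{cond_ward}--\eqref{cond_atom} together with the strict monotonicity of Assumption~\ref{assp:base_chap3} (which gives $c'_r>0$), then to observe that Parts~2 and~4 are reformulations that reduce to Part~3, and finally to prove Part~3, which carries the only genuine argument. Throughout, abbreviate atomic player~$i$'s marginal cost on arc~$r$ by $m^i_r=c_r(\xi_r)+x^i_r c'_r(\xi_r)$, so that \eqref{cond_atom} reads: if $x^i_r>0$ then $m^i_r=\min_s m^i_s$. For Part~1(1), if $x^0_r>0$ but $x^i_r=0$ for some atomic player~$i$, then $x^i_s=T^i>0$ on the other arc~$s$, so \eqref{cond_atom} gives $m^i_s\le m^i_r=c_r(\xi_r)$, while \eqref{cond_ward} gives $c_r(\xi_r)\le c_s(\xi_s)<c_s(\xi_s)+x^i_sc'_s(\xi_s)=m^i_s$, a contradiction. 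For Part~1(3), if $x^i_r>0$ and $x^i_s=0$, then $m^i_r\le m^i_s=c_s(\xi_s)$, and $x^i_rc'_r(\xi_r)>0$ forces $c_r(\xi_r)<c_s(\xi_s)$. For Part~1(2), assuming $T^i\ge T^j$ and, for contradiction, $x^i_r<x^j_r$, one has $x^j_r>0$ and $x^i_s>x^j_s\ge0$, hence $m^j_r\le m^j_s$ and $m^i_s\le m^i_r$; adding and cancelling the $c_r,c_s$ terms yields $(x^j_r-x^i_r)c'_r(\xi_r)\le(x^j_s-x^i_s)c'_s(\xi_s)$, whose left side is positive and right side negative. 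The equality clause uses the same comparisons: $T^i=T^j$ forces $x^i_r=x^j_r$ by applying the inequality both ways, whereas $x^i_r=x^j_r>0$ with $T^i>T^j$ gives $x^i_s>x^j_s$ and hence $m^i_r=m^j_r\le m^j_s<m^i_s$, contradicting $m^i_s\le m^i_r$.

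Next I would reduce Parts~2 and~4 to Part~3. Since $c'_1(\xi_1)>0$, one has $h(\xi_1)\ge0\iff c_1(\xi_1)\le c_2(\xi_2)$ and $h(\xi_1)=0\iff c_1(\xi_1)=c_2(\xi_2)$, so Part~4 is exactly Part~2 plus a description of when equality holds. Under case~\textsc{(i)} of Assumption~\ref{assp:c1_less_c2}, $c_1(\xi_1)\le c_1(M)<c_2(0)\le c_2(\xi_2)$, so the inequality is strict and $h(\xi_1)>0$ (and $H>0$ here, consistent with the claim). Under case~\textsc{(ii)} --- equivalently $H\le0$ under Assumption~\ref{assp:c1_less_c2} --- $\hat\xi$ exists, and by strict monotonicity of $c_1,c_2$ one gets $c_1(\xi_1)\le c_2(\xi_2)\iff\xi_1\le\hat\xi$, with equality exactly at $\xi_1=\hat\xi$; thus Parts~2 and~4 follow once Part~3 is proved.

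Finally, Part~3 is where I expect the main obstacle to lie. Assume $H\le0$, so $\hat\xi$ exists and \eqref{eq:assp_c1} holds, and suppose for contradiction that $\xi_1>\hat\xi$. Strict monotonicity gives $c_1(\xi_1)>c_1(\hat\xi)=c_2(M-\hat\xi)>c_2(M-\xi_1)=c_2(\xi_2)$, so \eqref{cond_ward} forces $x^0_1=0$; hence $\xi_1$ equals the total weight of the atomic players on arc~$1$, and $\xi_1>0$ means the set $S$ of atomic players with $x^i_1>0$ is nonempty. Summing $m^i_1\le m^i_2$ over $i\in S$ and using $\sum_{i\in S}x^i_1=\xi_1$, $\sum_{i\in S}x^i_2\le\xi_2$ and $c'_2(\xi_2)>0$, I obtain
\[
|S|\,c_1(\xi_1)+\xi_1\,c'_1(\xi_1)\ \le\ |S|\,c_2(\xi_2)+\xi_2\,c'_2(\xi_2),
\]
and since $c_1(\xi_1)>c_2(\xi_2)$ and $|S|\ge1$ this forces $\xi_1\,c'_1(\xi_1)<\xi_2\,c'_2(\xi_2)$. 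On the other hand $\xi_1>\hat\xi$ and $\xi_2=M-\xi_1<M-\hat\xi$, so the monotonicity of $t\mapsto t\,c'_1(t)$ and $s\mapsto s\,c'_2(s)$ on $[0,M]$ (from convexity of $c_1,c_2$) yields $\hat\xi\,c'_1(\hat\xi)\le\xi_1\,c'_1(\xi_1)<\xi_2\,c'_2(\xi_2)\le(M-\hat\xi)\,c'_2(M-\hat\xi)$, contradicting \eqref{eq:assp_c1}; hence $\xi_1\le\hat\xi$. The delicate point is the bookkeeping of the sum over $S$ --- that $x^0_1=0$ makes $\sum_{i\in S}x^i_1=\xi_1$ an equality while only $\sum_{i\in S}x^i_2\le\xi_2$ is available --- and then recognizing that the comparison of $\xi_1 c'_1(\xi_1)$ with $\xi_2 c'_2(\xi_2)$ it produces is precisely what hypothesis \eqref{eq:assp_c1} is designed to exclude.
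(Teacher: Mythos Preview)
Your proof is correct and follows essentially the same approach as the paper: the core step is the contradiction argument that sums the atomic equilibrium conditions over the players using arc~1 and confronts the resulting inequality $\xi_1 c'_1(\xi_1)<\xi_2 c'_2(\xi_2)$ with hypothesis~\eqref{eq:assp_c1}. The only differences are organizational---the paper proves Part~2 directly and deduces Part~3, while you do the reverse, and the paper cites \cite{Wan11} for Part~1 whereas you supply the short arguments explicitly---but the substance is identical.
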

\begin{proof}
1. cf. Wan 2012 \cite{Wan11}.

2. In the case $H>0$ ($c_{1}(M)< c_{2}(0)$), $c_{1}(\xi_{1})\leq c_{2}(\xi_{2})$ is always true. In the case $H\leq 0$, suppose $c_{1}(\xi_{1})> c_{2}(\xi_{2})$. If $\xi_{1}\leq \hat{\xi}$ and thus $M-\xi_{1}\geq M-\hat{\xi}$, by the definition of $\hat{\xi}$, $c_{1}(\xi_{1}) \leq c_{1}(\hat{\xi}) = c_{2}(M-\hat{\xi}) \leq c_{2}(\xi_{2})$, contradicting the hypothesis that $c_{1}(\xi_{1})> c_{2}(\xi_{2})$. Therefore, $\xi_{1}> \hat{\xi}\geq 0$ and, consequently, $\xi_{2}<M-\hat{\xi}$. According to Assumptions~\ref{assp:base_chap3} and \ref{assp:c1_less_c2}, $\xi_{1}\,c'_{1}(\xi_{1})>\hat{\xi}\,c'_{1}(\hat{\xi}) \geq  (M-\hat{\xi})\,c'_{2}(M-\hat{\xi}) >  \xi_{2}\,c'_{2}(\xi_{2})$. Thus, $\xi_{1}\,c'_{1}(\xi_{1}) >  \xi_{2}\,c'_{2}(\xi_{2})$.

Notice that $N\!\geq\! 1$ because otherwise $T^{0}\!=\!M$, i.e. all the players are nonatomic and take the less expensive arc 2. Hence $c_{1}(0)\!>\! c_{2}(M)$, contradicting Assumption~\ref{assp:c1_less_c2}. It follows from the hypothesis $c_{1}(\xi_{1})\!>\! c_{2}(\xi_{2})$ and the first result of this lemma that there exists some $l\!\in\! \{1,\ldots, N\}$ such that $x^{i}_{1}\!>\!0,x^{i}_{2}\!>\!0$ for $ 1\!\leq\! i \!\leq\! l$ and, if $l\!<\!N$, $x^{i}_{1}=0,x^{i}_{2}=T^{i}$ for $l\!+\!1 \!\leq\! i \!\leq\! N$. According to eq.~\eqref{cond_atom},
\begin{equation}\label{eq:ch3_atomic}
c_{1}(\xi_{1})+x^{i}_{1}\,c'_{1}(\xi_{1}) = c_{2}(\xi_{2})+x^{i}_{2}\,c'_{2}(\xi_{2}),\hspace{1.14cm} \forall 1 \leq i \leq l.
\end{equation}
Besides, $x^{0}_{2}=T^{0}$ if $T^{0}>0$ because of eq.\eqref{cond_ward}. Summing eq.\eqref{eq:ch3_atomic} leads to $l c_{1}(\xi_{1})+\xi_{1} c'_{1}(\xi_{1}) =  l c_{2}(\xi_{2})+ (\xi_{2}-\sum^{N}_{i=l+1}T^{i}-T^{0}) c'_{2}(\xi_{2}) \leq l c_{2}(\xi_{2})+\xi_{2} c'_{1}(\xi_{1})$, and equality holds if and only if $T^{0}=0$ and $l=N$. But this is impossible because, by hypothesis, $c_{1}(\xi_{1}) > c_{2}(\xi_{2})$, and $\xi_{1}c'_{1}(\xi_{1}) >  \xi_{2}c'_{2}(\xi_{2})$.

Thus, $c_{1}(\xi_{1})\leq c_{2}(\xi_{2})$.

3. If $c_{2}(0)\leq c_{1}(M)$ and $\xi_{1} > \hat{\xi}$, then by Assumption~\ref{assp:base_chap3}, $c_{1}(\xi_{1}) > c_{1}(\hat{\xi}) = c_{2}(M-\hat{\xi}) > c_{2}(\xi_{2})$, which contradicts the fact that $c_{1}(\xi_{1})\leq c_{2}(\xi_{2})$.

4. If $c_{1}(M)< c_{2}(0)$ or, equivalently, $H>0$, the fact that $\xi_{1}\leq M$ implies that $h(\xi_{1})\geq h(M)=H>0$. If $c_{2}(0)\leq c_{1}(M)$, the fact that $\xi_{1} \leq \hat{\xi}$ implies that $h(\xi_{1})\geq h(\hat{\xi})=0$, and the equality holds if and only if $\xi_{1}=\hat{\xi}$.
\end{proof}


\subsection{Definition and properies of the four modes of $z_\alpha$} 
For a decentralization strategy $\alpha$, denote $\alpha^{[k]}=\sum_{i=1}^{k}\alpha^{i}$.  Also, for the sake of simplicity, $z_\alpha$ is often replaced by $z$, $x_r(z_\alpha)$ by $x_r$,  $y^j_r(z_\alpha)$ by $y^j_r$, $y_r(z_\alpha)$ by $y_r$, $\xi_r(z_\alpha)$ by $\xi_r$, if no confusion can arise. 

Define four modes of $z_\alpha$, the CE of congestion game $\Gamma(\alpha,T^{-N})$ as follows. They respectively correspond to: 1) All players put all their weight on arc 1; 2) Some atomic deputies of $N$ put some weight on arc 2; 3)  Some atomic players among $T^{-N}$ put some weight on arc 2, while all the deputies of $N$ put all their weight on arc 1; 4) All the atomic players put weight both on arc 1 and arc 2, while there are nonatomic players on both arcs.

The CE $z_\alpha$ is of {\em mode~1} if
\begin{equation}\label{mode_1}
\begin{cases}
\,c_{1}(\xi_{1})<c_{2}(\xi_{2});\\
\,x^{i}_{1}=\alpha^{i}, \;0\leq i \leq n;\quad y^{j}_{1}=T^{j}, \; 0\leq j \leq N-1.
\end{cases}
\end{equation}
\begin{table}[htbp!]
\begin{center}
{\scriptsize
\begin{tabular}{c c c c c c c c c}
 &$\alpha^{1}$&$\cdots$&$\alpha^{n}$&$\alpha^{0}$&$T^{1}$&$\cdots$&$T^{N-1}$&$T^{0}$\\
\hline
\rule{0pt}{2.5ex}{\em arc 1}&$\alpha^{1}$&$\cdots$&$\alpha^{n}$&$\alpha^{0}$&$T^{1}$&$\cdots$&$T^{N-1}$&$T^{0}$\\
{\em arc 2}&$0 $&$\cdots$&$ 0$&$0$&$0$&$ \cdots$&$ 0$&$0$\\
\end{tabular}
}\caption{Mode 1.}
\end{center}
\end{table}

The CE $z_\alpha$ is of {\em mode~2 specified by} $k\in \mathbb{N}^*$ and $l\in \{0,1,\ldots, N-1\}$, if
\begin{equation}\label{mode_2}
\begin{cases}
\, c_{1}(\xi_{1}) < c_{2}(\xi_{2}); \\
\, 1\!\leq\! k\!\leq\! n; \quad  x^{i}_{2}\!>\!0,\;  1 \leq i \leq  k; \;\\
\,  x^{i}_{2}=0,\;   i\!\in\! \!J_1 \triangleq\!\{k\!+\!1,\ldots, n\}\!\cup\! \{0\}, \text{ if } J_1\!\neq \!\emptyset;\\
\, y^{j}_{2}\!>\!0,\;   1\! \leq\! j\! \leq \!l, \,\text{ if } 1\!\leq \! l \!\leq \!N\!-\!1; \;\\
\, y^{j}_{2}\!=\!0,\; j\!\in\! \!J_2 \triangleq\!\{l\!+\!1,\ldots, N\!-\!1\}\!\cup\! \{0\}, \text{ if } J_2\!\neq \!\emptyset.
\end{cases}
\end{equation}
\begin{table}[htbp!]
\begin{center}
{\scriptsize
\begin{tabular}{c c c c c c c c c c c c c c c}
 &$\alpha^{1}$&$\cdots$&$\alpha^{k}$&$\alpha^{k+1}$&$\cdots$&$\alpha^{n}$&$\alpha^{0}$&$T^{1}$&$\cdots$&$T^{l}$&$T^{l+1}$&$\cdots$&$T^{N-1}$&$T^{0}$\\
\hline
\rule{0pt}{2.5ex}{\em arc 1}&$x^{1}_{1}$&$\cdots$&$x^{k}_{1}$&$\alpha^{k+1}$&$\cdots$&$\alpha^{n}$&$\alpha^{0}$&$y^{1}_{1}$&$\cdots$&$y^{l}_{1}$&$T^{l+1}$&$\cdots$&$T^{N-1}$&$T^{0}$\\
{\em arc 2}&$x^{1}_{2}$&$\cdots$&$x^{k}_{2}$&$0$&$\cdots$&$0$&$0$&$y^{1}_{2}$&$\cdots$&$y^{l}_{2}$&$0$&$ \cdots$&$ 0$&$0$
\end{tabular}
}
\caption{Mode 2.}
\end{center}
\end{table}

The CE $z_\alpha$ is of {\em mode~3 specified by} $l\in \{1,\ldots, N-1\}$, if
\begin{equation}\label{mode_3}
\begin{cases}
\, c_{1}(\xi_{1}) < c_{2}(\xi_{2}); \\
\,  x^{i}_{2}=0,\;  0 \leq i \leq n;\\
\, 1 \!\leq l \leq \!N\!-\!1; \quad y^{j}_{2}\!>\!0,\;  1 \!\leq\! j \!\leq \! l;\;\\
\, y^{j}_{2}\!=\!0,\;  j\!\in\! \!J_2 \triangleq\!\{l\!+\!1,\ldots, N\!-\!1\}\!\cup\! \{0\},\, \text{ if } J_2\!\neq \!\emptyset.
\end{cases}
\end{equation}
\begin{table}[htbp!]
\begin{center}
{\scriptsize
\begin{tabular}{c c c c c c c c c c c c}
 &$\alpha^{1}$&$\cdots$&$\alpha^{n}$&$\alpha^{0}$&$T^{1}$&$\cdots$&$T^{l}$&$T^{l+1}$&$\cdots$&$T^{N-1}$&$T^{0}$\\
\hline
\rule{0pt}{2.5ex}{\em arc 1}&$\alpha^{1}$&$\cdots$&$\alpha^{n}$&$\alpha^{0}$&$y^{1}_{1}$&$\cdots$&$y^{l}_{1}$&$T^{l+1}$&$\cdots$&$T^{N-1}$&$T^{0}$\\
{\em arc 2}&$0 $&$\cdots$&$ 0$&$0$&$y^{1}_{2}$&$\cdots$&$y^{l}_{2}$&$0$&$ \cdots$&$ 0$&$0$
\end{tabular}
}\caption{Mode 3.}
\end{center}
\end{table}

The CE $z_\alpha$ is of {\em mode 4} if
\begin{equation}\label{mode_5}
c_{1}(\xi_{1})=c_{2}(\xi_{2}).
\end{equation}
\begin{table}[hbtp!]
\begin{center}
{\scriptsize
\begin{tabular}{c c c c c c c c c}
 &$\alpha^{1}$&$\cdots$&$\alpha^{n}$&$\alpha^{0}$&$T^{1}$&$\cdots$&$T^{N-1}$&$T^{0}$\\
\hline
\rule{0pt}{2.5ex}{\em arc 1}&$x^{1}_{1}$&$\cdots$&$x^{n}_{1}$&$x^{0}_{1}$&$y^{1}_{1}$&$\cdots$&$y^{N-1}_{1}$&$y^{0}_{1}$\\
{\em arc 2}&$x^{1}_{2}$&$\cdots$&$x^{n}_{2}$&$x^{0}_{2}$&$y^{1}_{2}$&$\cdots$&$y^{N-1}_{2}$&$y^{0}_{2}$
\end{tabular}
}\caption{Mode 4.}
\end{center}
\end{table}

\begin{lemma}
For all $\alpha \in \mathcal{S}^N$, the CE $z_\alpha$ takes one of the four modes above.
\end{lemma}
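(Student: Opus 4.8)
The claim is that the CE $z_\alpha$ of $\Gamma(\alpha,T^{-N})$ always falls into one of modes 1–4. The plan is to argue by a case split on whether $c_1(\xi_1)=c_2(\xi_2)$ or $c_1(\xi_1)<c_2(\xi_2)$ at the CE, using Lemma~\ref{prop:c1_less_c2}.2 to rule out the remaining possibility $c_1(\xi_1)>c_2(\xi_2)$. (Note that $\Gamma(\alpha,T^{-N})$ is itself a composite congestion game in the two-parallel-arc network with total weight $M$, so Lemma~\ref{prop:c1_less_c2} applies verbatim, with its ``atomic players'' being the $n$ deputies $\alpha^1,\dots,\alpha^n$ together with $T^1,\dots,T^{N-1}$, and its ``nonatomic mass'' being $\alpha^0+T^0$.) If $c_1(\xi_1)=c_2(\xi_2)$, the CE is by definition of mode~4 and we are done. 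So assume strict inequality $c_1(\xi_1)<c_2(\xi_2)$; I must show the flow pattern matches mode~1, 2, or~3.

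In the strict case, first observe that arc~2 is strictly more expensive than arc~1, so by the Wardrop condition \eqref{cond_ward} no nonatomic mass sits on arc~2: $x^0_2=0$ and $T^0$'s flow is entirely on arc~1. Next, apply Lemma~\ref{prop:c1_less_c2}.1(2): among the atomic players, larger weight implies (weakly) larger flow on each arc, with equality on an arc only under equal weights or both being zero there. Since $\alpha^1\geq\dots\geq\alpha^n$ and (after the standing reordering) $T^1\geq\dots\geq T^{N-1}$, the sets $\{i:x^i_2>0\}$ and $\{j:y^j_2>0\}$ are each down-closed prefixes $\{1,\dots,k\}$ and $\{1,\dots,l\}$ of the respective ordered lists. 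This gives the indices $k$ and $l$ appearing in the mode definitions. Now split on whether $k\geq 1$ and whether $l\geq 1$: if $k\geq 1$ we are in mode~2 (with that $k$, and whatever $l\in\{0,\dots,N-1\}$ arises); if $k=0$ but $l\geq 1$, then all of $N$'s deputies are entirely on arc~1 while some of $T^{-N}$ spill onto arc~2, which is exactly mode~3 with parameter $l$; if $k=0$ and $l=0$, then every player puts all weight on arc~1, which together with $c_1(\xi_1)<c_2(\xi_2)$ is mode~1. One should also check the boundary interaction: in mode~2 the definition allows $J_1$ or $J_2$ to be empty (i.e.\ $k=n$ and/or $l=N-1$), which is consistent with the prefix structure; and the nonatomic mass being on arc~1 is recorded correctly in each mode table.

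The one point requiring a little care — and the place I'd expect the ``obstacle,'' though it is mild — is verifying that the strict inequality $c_1(\xi_1)<c_2(\xi_2)$ genuinely forces \emph{all} nonatomic deputies (the $\alpha^0$ mass) onto arc~1, not just the ambient $T^0$ mass; this is immediate from \eqref{cond_ward} applied to the combined nonatomic mass $\alpha^0+T^0$, but it is worth stating since the mode definitions distinguish $\alpha^0$ from $T^0$ in the tables. With that, every logically possible configuration of the pair $(k,l)$ together with the dichotomy on $c_1(\xi_1)$ versus $c_2(\xi_2)$ has been assigned to exactly one of modes~1–4, and since Lemma~\ref{prop:c1_less_c2}.2 excludes $c_1(\xi_1)>c_2(\xi_2)$ entirely, the case analysis is exhaustive. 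This completes the proof.
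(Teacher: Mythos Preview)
Your proposal is correct and follows essentially the same approach as the paper's proof: invoke Lemma~\ref{prop:c1_less_c2} to get $c_1(\xi_1)\le c_2(\xi_2)$, handle equality as mode~4, and in the strict case use the Wardrop condition to place all nonatomic mass on arc~1 and the monotonicity of atomic flows in weight to conclude that the sets of atomic players using arc~2 are prefixes, yielding modes~1--3. Your write-up is more explicit about the case split on $(k,l)$ and about treating $\alpha^0$ and $T^0$ together as a single nonatomic mass, but the argument is the same.
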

\begin{proof}
According to Lemma~\ref{prop:c1_less_c2}, at $z_\alpha$, $c_1(\xi_1)\leq c_2(\xi_2)$, hence there are nonatomic players taking arc 2 only if $c_1(\xi_1)=c_2(\xi_2)$; besides, all the atomic players must put some weight on arc 1;  also, if an atomic player put some weight on arc 2, then all the atomic players larger than her must do so as well. Hence, $z_\alpha$ must take one of the four modes above. 
\end{proof}

\begin{lemma}\label{lm:equiv_decentralization}
If two decentralization strategies $\alpha$ and $\tilde{\alpha}$ in $\cS^{N}$ are such that the total weight on arc 1 is the same at $z_\alpha$ and $z_{\tilde{\alpha}}$, i.e. $\xi_1(z_\alpha)=\xi_1(z_{\tilde{\alpha}})$, then they are equivalent with respect to $T^{-N}$.
\end{lemma}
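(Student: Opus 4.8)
The plan is to show that once $\xi_1(z_\alpha)$ is fixed, everything entering the definition \eqref{def:cost_decentralization} of $U^N$ is determined. First, since the network has only two arcs and the total weight of all players in $\Gamma(\alpha,T^{-N})$ is $M$ (the decentralization preserves $N$'s weight, $\sum_{i=1}^n\alpha^i+\alpha^0 = T^N$), we have $\xi_2(z_\alpha) = M - \xi_1(z_\alpha)$. Hence $\xi_1(z_\alpha)=\xi_1(z_{\tilde\alpha})$ forces $\xi_2(z_\alpha)=\xi_2(z_{\tilde\alpha})$ as well, and therefore the arc costs $c_1(\xi_1)$, $c_2(\xi_2)$, and the derivatives $c_1'(\xi_1)$, $c_2'(\xi_2)$ all coincide at the two equilibria.

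Next I would argue that the aggregate weight $y_r$ of $N$'s opponents on each arc is the same at $z_\alpha$ and $z_{\tilde\alpha}$, which then gives $x_r = \xi_r - y_r$ equal at both equilibria, and \eqref{def:cost_decentralization} yields $U^N(\alpha,T^{-N}) = U^N(\tilde\alpha,T^{-N})$ immediately. To see that $y_r$ is pinned down: the opponents $T^0,T^1,\dots,T^{N-1}$ are the \emph{same} players in both induced games, and the only way the ambient game affects an opponent $j$'s equilibrium flow $y^j$ is through the values $c_r(\xi_r)$ and $c_r'(\xi_r)$ appearing in the equilibrium conditions \eqref{cond_ward}--\eqref{cond_atom} for that player. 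Since those values agree at $z_\alpha$ and $z_{\tilde\alpha}$, each opponent faces the identical optimization problem; by the strict convexity/monotonicity in Assumption~\ref{assp:base_chap3} the minimizer in \eqref{cond_atom} (resp.\ the arc choice in \eqref{cond_ward}) is unique, so $y^j(z_\alpha) = y^j(z_{\tilde\alpha})$ for every $j\in\{0,1,\dots,N-1\}$, hence $y_r(z_\alpha)=y_r(z_{\tilde\alpha})$.

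The main obstacle is making the last step fully rigorous rather than heuristic: the equilibrium condition \eqref{cond_atom} for opponent $j$ involves $\xi_r$, which includes $j$'s own contribution, so one cannot literally say ``$j$ solves a parametric problem in which $\xi_r$ is an exogenous constant.'' The clean way around this is to write the condition purely in terms of the \emph{residual} congestion $\xi_r - y^j_r$ faced by $j$ (or, for nonatomic $j$, simply to compare $c_1(\xi_1)$ with $c_2(\xi_2)$): given the common values $c_r(\xi_r)$, $c_r'(\xi_r)$ — equivalently, given the common pair $(\xi_1,\xi_2)=(\xi_1,M-\xi_1)$ — the conditions \eqref{cond_ward}--\eqref{cond_atom} specialized to player $j$ become a self-contained system for $(y^j_1,y^j_2)\in X^j$, and strict convexity makes its solution unique. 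One then checks consistency, i.e. that the candidate flow obtained this way, together with $N$'s deputies' flows (whose aggregate on arc $1$ is $\xi_1 - \sum_{j}y^j_1$), indeed reproduces $\xi_1$; but that is automatic since we started from the genuine CE $z_\alpha$. With this reformulation the argument is short, and it also explains why only $\xi_1(z_\alpha)$ — not the finer details of $\alpha$ — matters for $N$'s cost.
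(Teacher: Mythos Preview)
Your approach is essentially the paper's: fix $\xi_1$, recover $\xi_2=M-\xi_1$ and the arc costs, show each opponent's flow $y^j$ is determined by the first-order conditions, deduce $x_r=\xi_r-y_r$, and read off \eqref{def:cost_decentralization}. The one genuine gap is the degenerate case $c_1(\xi_1)=c_2(\xi_2)$, i.e.\ $\xi_1=\hat\xi$. There condition \eqref{cond_ward} does \emph{not} pin down the nonatomic split: both arcs are equally cheap, so any division of the nonatomic mass between them is a Wardrop choice, and with two separate nonatomic populations present ($N$'s deputies of weight $\alpha^0$ and the opponents of weight $T^0$) the individual quantities $y^0_1$ and $x^0_1$ are not determined by $\xi_1$ alone. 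Your sentence ``the arc choice in \eqref{cond_ward} is unique'' is therefore false precisely in this case, and the conclusion $y^0(z_\alpha)=y^0(z_{\tilde\alpha})$, hence $x_1(z_\alpha)=x_1(z_{\tilde\alpha})$, cannot be drawn.

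The fix is the paper's one-line case split. If $\xi_1=\hat\xi$ then $c_1(\xi_1)=c_2(\xi_2)$, so \eqref{def:cost_decentralization} collapses to $U^N=x_1\,c_1(\hat\xi)+x_2\,c_1(\hat\xi)=T^N c_1(\hat\xi)$ regardless of how $x_1,x_2$ are split, and equivalence is immediate. If $\xi_1\neq\hat\xi$ then Lemma~\ref{prop:c1_less_c2} gives $c_1(\xi_1)<c_2(\xi_2)$ strictly, so every nonatomic player is forced onto arc~1 (in particular $y^0_1=T^0$), and your uniqueness argument for the atomic $y^j$ via \eqref{cond_atom} goes through verbatim. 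With that distinction added, your proof matches the paper's.
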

\begin{proof}
Denote $z_{\tilde{\alpha}}$ by $\tilde{z}$, $x_r(\tilde{z})$ by $\tilde{x}_r$, $y^j_r(\tilde{z})$ by $\tilde{y}^j_r$, and $\xi_r(\tilde{z})$ by $\tilde{\xi}_r$.

If $\xi_1 = \tilde{\xi}_1=\hat{\xi}$, then $U^N(\alpha,T^{-N})=U^N(\tilde{\alpha},T^{-N})=c_1(\hat{\xi})$.

If $\xi_1 \neq \hat{\xi}$, then $y^0_1=T^0$, and according to eq.\eqref{cond_atom}, for each $1 \leq j \leq N-1$, both $y^j_1$ and $\tilde{y}^j_1$ solve the following equation in $w$: \emph{Either $w< T^j$ and $c_{1}(\xi_1)+ w\,c'_1(\xi_{1}) = c_{2}(\xi_2)+ (T^j-w) \,c'_2(\xi_{2})$, or $w=T^j$ and $c_{1}(\xi_1)+ w\,c'_1(\xi_{1}) \leq c_{2}(\xi_2)+ (T^j-w) \,c'_2(\xi_{2})$.}  There is only one solution to this equation, hence $y^j_1=\tilde{y}^j_1$. Similarly, $y^0_1=\tilde{y}^0_1$. Therefore $x_{1}= \tilde{x}_{1}$ and $U^N(\alpha,T^{-N})=U^N(\tilde{\alpha},T^{-N})$.
\end{proof}

\begin{lemma}\label{thm:mode_1_cond}
1. If $z_\alpha$ is of mode 1, then 
\begin{equation}\label{flow_1_a}
x_{1}=T^{N},\quad y_{1}=M^{-N}, \quad \xi_{1}=M.
\end{equation}

2. There exists a strategy $\alpha \in \cS^N$ such that  $z_\alpha$ is of mode 1 if and only if
\begin{equation}\label{cond_1_d}
 H>0; \quad T^{1}\leq H\,\text{ if }\,N >1.
\end{equation}

3. Assume eq.\eqref{cond_1_d}. Define a subset of $\cS^{N}$ by:
\begin{equation*}
\cS^{N}_{1}= \{ \alpha \in \cS^{N}\,\big|\,\alpha = \underline{\alpha}, \text{ or }\, \alpha^{1}\leq H \}.
\end{equation*}
Then, $\alpha\in \cS^{N}_{1}$ if and only if $z_\alpha$ is of mode~1. Furthermore, all the strategies in $\cS^{N}_{1}$ are equivalent to each other with respect to $T^{-N}$.

In particular, the nonatomic strategy $\underline{\alpha}$ is in $\cS^{N}_{1}$.
\end{lemma}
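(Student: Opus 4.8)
The plan is to treat the three parts in order, deriving the flow in mode~1 directly from the equilibrium conditions, then characterizing when mode~1 is attainable, and finally identifying $\cS^N_1$.

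\textbf{Part 1.} Suppose $z_\alpha$ is of mode~1. By definition every deputy of $N$ and every player in $T^{-N}$ puts all of her weight on arc~1, so $x^i_1=\alpha^i$ for $0\le i\le n$ and $y^j_1=T^j$ for $0\le j\le N-1$. Summing gives $x_1=\sum_{i=0}^n\alpha^i=T^N$, $y_1=\sum_{j=0}^{N-1}T^j=M^{-N}$, and $\xi_1=x_1+y_1=M$. This is \eqref{flow_1_a}; nothing more is needed here.

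\textbf{Part 2.} For the ``only if'' direction, assume some $\alpha$ gives $z_\alpha$ of mode~1. Then $\xi_1=M$, $\xi_2=0$, and the defining inequality $c_1(\xi_1)<c_2(\xi_2)$ reads $c_1(M)<c_2(0)$, i.e. $H>0$. Moreover, when $N>1$, the largest opponent, atomic player~$1$ (weight $T^1$), puts all her weight on arc~1 while putting nothing on arc~2; by the equilibrium condition \eqref{cond_atom} applied to the choice $x^1_2=0$, we need $c_1(M)+T^1 c_1'(M)\le c_2(0)+0\cdot c_2'(0)=c_2(0)$, which rearranges to $T^1\le \frac{c_2(0)-c_1(M)}{c_1'(M)}=H$. (Here I use that $T^1$ is the largest opponent, so by Lemma~\ref{prop:c1_less_c2}.1(2) any other opponent, and any deputy of $N$, automatically satisfies the corresponding inequality once $T^1$ does; but one should also check the deputies: each deputy $\alpha^i\le T^l$ is not obviously bounded by $T^1$, so in fact the right statement is that the binding constraint is on the largest \emph{atomic} player present, which is $\max(\alpha^1,T^1)$ --- I will need to handle this, see below.) Conversely, assuming \eqref{cond_1_d}, I exhibit $\alpha=\underline{\alpha}$ (the nonatomic strategy): then all of $N$'s weight is nonatomic, the only atomic players are $T^1,\dots,T^{N-1}$, and the candidate flow ``everyone on arc~1'' satisfies \eqref{cond_ward} (since arc~1 is strictly cheaper) and \eqref{cond_atom} (since $c_1(M)+T^j c_1'(M)\le c_1(M)+T^1 c_1'(M)\le c_2(0)$ for every $j$). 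By uniqueness of the CE this candidate \emph{is} $z_{\underline\alpha}$, so mode~1 is attained.

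\textbf{Part 3.} Assume \eqref{cond_1_d}. If $\alpha\in\cS^N_1$, then either $\alpha=\underline\alpha$ (handled above) or $\alpha^1\le H$; in the latter case $\alpha^1$ is the largest deputy and $\alpha^1\le H$, $T^1\le H$, so the ``everyone on arc~1'' flow again satisfies \eqref{cond_ward}--\eqref{cond_atom} for every atomic player present (each has weight $\le\max(\alpha^1,T^1)\le H$), hence by uniqueness it is $z_\alpha$, which is therefore of mode~1. Conversely, if $z_\alpha$ is of mode~1 and $\alpha\ne\underline\alpha$, then deputy~$1$ of weight $\alpha^1$ puts $\alpha^1$ on arc~1 and $0$ on arc~2, so \eqref{cond_atom} forces $\alpha^1\le H$ exactly as in Part~2; thus $\alpha\in\cS^N_1$. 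Finally, for any two strategies in $\cS^N_1$ both CE are of mode~1, so by Part~1 they share $\xi_1=M$; by Lemma~\ref{lm:equiv_decentralization} they are equivalent with respect to $T^{-N}$. Since $\underline\alpha\in\cS^N_1$ by construction, the last assertion follows.

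\textbf{Main obstacle.} The delicate point is the bookkeeping of \emph{which} atomic player's constraint is binding in Part~2 and Part~3: the threshold $T^1\le H$ in \eqref{cond_1_d} only controls the opponents, whereas the deputies created by $\alpha$ may a priori be larger than $T^1$. The clean way around this is to note that being in mode~1 is equivalent to the single inequality ``$\max$ over all atomic players present of (weight)$\cdot c_1'(M)\le c_2(0)-c_1(M)$'', i.e. $\max(\alpha^1,T^1)\le H$; in Part~2 one only claims \emph{existence} of a mode-1 strategy, for which $\underline\alpha$ (no atomic deputies) works and the constraint reduces to $T^1\le H$; in Part~3 the set $\cS^N_1$ is precisely designed so that $\max(\alpha^1,T^1)\le H$ holds for every $\alpha\in\cS^N_1$ given \eqref{cond_1_d}. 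Making this reduction explicit, and invoking uniqueness of the CE to pass from ``the candidate flow satisfies \eqref{cond_ward}--\eqref{cond_atom}'' to ``the candidate flow is $z_\alpha$'', is the crux; the rest is the routine substitution already done in Part~1.
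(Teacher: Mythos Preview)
Your proposal is correct and follows essentially the same route as the paper's proof: derive $H>0$ from $c_1(M)<c_2(0)$, then read off the constraints $T^j\le H$ and $\alpha^i\le H$ directly from the equilibrium condition \eqref{cond_atom} at the configuration $\xi_1=M$. The paper's proof is terser---it records only the necessary direction and leaves the converse and the equivalence claim implicit---whereas you spell out the sufficiency via the explicit candidate flow plus uniqueness of the CE, and invoke Lemma~\ref{lm:equiv_decentralization} for the equivalence; your ``main obstacle'' paragraph correctly isolates why the existence claim in Part~2 reduces to $T^1\le H$ (take $\alpha=\underline{\alpha}$, no atomic deputies) while the characterization in Part~3 needs $\max(\alpha^1,T^1)\le H$.
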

\begin{proof}
Since $\xi_{1}=M$ and $c_{1}(\xi_{1})< c_{2}(\xi_{2})$, one has $c_{1}(M) < c_{2}(0)$ or, equivalently, $H > 0$. If $N \geq 2$, then it follows from eq.\eqref{cond_atom} that, for $ 1 \leq j \leq N-1$, $c_{1}(M) + T^{j}\,c'_{1}(M) \leq c_{2}(0)$ or, equivalently, $T^{j} \leq H$. Similarly, if $n \geq 1$, then for $1 \leq i \leq n$, $\alpha^{i} \leq H$.
\end{proof}

\begin{lemma}\label{thm:mode_2_cond}
1. If $z_\alpha$ is of {\em mode~2 specified by} $k$ and $l$, then 
\begin{align}
& x^{i}_{1} = \frac{\alpha^{i}\,a(\xi_{1})+h(\xi_{1})}{1+a(\xi_{1})},\;  1\leq i \leq k; \quad x_{1} = T^{N}- \frac{\alpha^{[k]}-k\,h(\xi_{1})}{1+a(\xi_{1})}; \label{flow_2_d}\\
& y^{j}_{1} = \frac{T^{j}\,a(\xi_{1}) + h(\xi_{1})}{1+a(\xi_{1})}, \;  1 \leq j \leq l; \quad  y_{1} = M^{-N}- \frac{T^{[l]}-l\,h(\xi_{1})}{1+a(\xi_{1})}; \label{flow_2_c}\\
& \xi_{1}=M -\frac{\alpha^{[k]}+T^{[l]}-(k+l)\,h(\xi_{1})}{1+a(\xi_{1})} = F_{k+l}^{-1}(\alpha^{[k]}+T^{[l]}). \label{flow_2_b}
\end{align}
In particular, if $l\geq 1$, then for all $1\leq j \leq l$, $T^j>h(\xi_1)$.

2. For given $k\in \nit^*$ and $l\in \{0,\ldots,N\!-\!1\}$, there exists a strategy $\alpha \in \cS^N$ such that  $z_\alpha$ is of mode~2 specified by $k$ and $l$ if and only if
\begin{equation}\label{cond_2_c}
\begin{cases}
 F^{-1}_{l}(T^{[l]}) > h^{-1}(T^{l}) \,\text{ if } l\geq 1;\quad T^{N}\! >\! k h(F^{-1}_{l}(T^{[l]})); \\
 T^{N}\! \geq \! F_{k+l}(h^{-1}(T^{l+1})) \!-\! T^{[l]}  \text{ if } N \!\geq\! l+2; \, T^{N} \!>\! F_{0}(\hat{\xi})\! -\! T^{[l]} \text{ if } H\!\leq\! 0.
\end{cases}
\end{equation}

3.  For given $k\in \nit^*$ and $l\in \{0,\ldots,N\!-\!1\}$, assume eq.\eqref{cond_2_c}. Given a real constant $w$ such that
\begin{equation}\label{cond_2_f}
\begin{cases}
 0 \! < \! w \! \leq\!  T^{N};\; w\! >\! k\,h(F^{-1}_{l}(T^{[l]}));\; w \! <\!  F_{k+l}(h^{-1}(T^{l})) \! - \! T^{[l]} \text{ if } l\geq 1;\\
 w \! \geq\!  F_{k+l}(h^{-1}(T^{l+1})) \! -\!  T^{[l]}  \text{ if } l \! \leq\!  N-1;\; w \! >\!  F_{0}(\hat{\xi}) \! -\!  T^{[l]} \text{ if } H\! \leq\!  0,
\end{cases}
\end{equation}
denote $\eta_{1}=F_{k+l}^{-1}(w+T^{[l]})$, and define a subset of $\cS^N$ by:
\begin{align*}
\cS^{N}_{2}(w,T^{-N};k,l)\!=\!\{ \alpha\!\in\! \cS^{N}\,|\,& n\!\geq\! k;\,\alpha^{[k]} \!=\! w; \,\\
& \forall 1\!\leq \!i \!\leq\! k, \alpha^{i}\! >\! h(\eta_{1}); \,\forall k+1 \!\leq\! i \!\leq\! n,\,\alpha^{i}  \!\leq\! h(\eta_{1}) \}.
\end{align*}
Then, $\alpha \in \cS^{N}_2(w,T^{-N};k,l)$ if and only if it induces $z_{\alpha}$ of mode~2 specified by $k$ and $l$ and, at $z_{\alpha}$, the total weight on arc 1, $\xi_1(z_\alpha),$ is $\eta_{1}$. Furthermore, the strategies in $\cS^{N}_{2}(w,T^{-N};k,l)$ are equivalent to each other.

4. SA strategy $w-(k-1)h(\eta_{1})$ is equivalent to the strategies in \linebreak[4]$\cS^{N}_{2}(w,T^{-N};k,l)$.
\end{lemma}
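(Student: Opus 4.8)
The plan is to show that the single-atomic strategy $s:=w-(k-1)h(\eta_1)$ induces a CE whose total weight on arc~1 equals $\eta_1$; then Lemma~\ref{lm:equiv_decentralization} immediately gives the desired equivalence, since by Part~3 every strategy in $\cS^N_2(w,T^{-N};k,l)$ also puts total weight $\eta_1$ on arc~1. So the real content is to verify that $s$ is a legitimate SA strategy and that the resulting CE is of mode~2 specified by $k':=1$ and the same $l$, with $\xi_1(z_s)=\eta_1$.

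First I would check that $s$ lies in $S\!A^N=[0,T^N]$, i.e. $0<w-(k-1)h(\eta_1)\le T^N$. The upper bound: from $\alpha\in\cS^N_2(w,T^{-N};k,l)$ we have $\alpha^{[k]}=w$ and $\alpha^i>h(\eta_1)$ for $1\le i\le k$, so in particular $w>k\,h(\eta_1)$ when $h(\eta_1)\ge 0$, hence $w-(k-1)h(\eta_1)<w-(k-1)h(\eta_1)+h(\eta_1)\le \alpha^1+\sum_{i=2}^k(\alpha^i-h(\eta_1))$... more directly, $w-(k-1)h(\eta_1)=\alpha^1+\sum_{i=2}^k(\alpha^i-h(\eta_1))\le \alpha^{[k]}+ \text{(nonneg. terms)}$ needs care; the clean route is $w-(k-1)h(\eta_1)\le w\le T^N$ once we know $h(\eta_1)\ge 0$, which holds by Lemma~\ref{prop:c1_less_c2}(4) since $\eta_1=F^{-1}_{k+l}(w+T^{[l]})$ is an admissible value of $\xi_1$ under eq.\eqref{cond_2_f}. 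Strict positivity follows from $w>k\,h(\eta_1)\ge(k-1)h(\eta_1)$ (using $h(\eta_1)\ge 0$, and treating the boundary case $h(\eta_1)=0$, i.e. $\eta_1=\hat\xi$, separately via the last inequality in eq.\eqref{cond_2_f}).

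Next I would compute the CE $z_s$ for the SA strategy $s$ and show it is of mode~2 with $k'=1$ atomic deputy of $N$ straddling both arcs, the same index set $l$ of straddling opponents, and $\xi_1=\eta_1$. The natural approach is to exhibit the candidate flow explicitly using the formulas of Part~1 with $k$ replaced by $1$: set $\xi_1=\eta_1=F^{-1}_{1+l}(s+T^{[l]})$, and note $F_{1+l}(\eta_1)=(M-\eta_1)(1+a(\eta_1))+(1+l)h(\eta_1)$ should equal $s+T^{[l]}$. Here I would use the key arithmetic identity $F_{k+l}^{-1}(w+T^{[l]})=F_{1+l}^{-1}\big((w-(k-1)h(\eta_1))+T^{[l]}\big)$, which is equivalent to $F_{k+l}(\eta_1)=F_{1+l}(\eta_1)+(k-1)h(\eta_1)$ --- and this is immediate from the definition $F_n(t)=(M-t)(1+a(t))+n\,h(t)$. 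Hence $\eta_1$ is a fixed point of the mode~2 equation for the parameters $(1,l)$ with the single deputy weight $s$, the opponents' flows $y^j_1$ for $1\le j\le l$ are governed by the same equation \eqref{cond_atom} and therefore unchanged, and one checks the inequalities defining mode~2 specified by $(1,l)$: that the single deputy $s$ indeed exceeds $h(\eta_1)$ (so it straddles), that $c_1(\eta_1)<c_2(M-\eta_1)$, that opponent $l$ straddles and $l+1$ does not, and (if $H\le 0$) that $\eta_1<\hat\xi$. All of these are precisely the conditions assembled in eq.\eqref{cond_2_f} (with $w$ replaced by the value $w$, since $s$ and $w$ give the same $\eta_1$), so they transfer over.

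The main obstacle I anticipate is the boundary bookkeeping: making sure the single-deputy weight $s=w-(k-1)h(\eta_1)$ satisfies \emph{strictly} $s>h(\eta_1)$ (not merely $\ge$), so that mode~2 with $k'=1$ genuinely applies rather than degenerating to mode~3 or mode~4, and symmetrically that the feasibility constraints in eq.\eqref{cond_2_f} for the pair $(1,l)$ are implied by those for $(k,l)$. Concretely, $s>h(\eta_1)$ reduces to $w>k\,h(\eta_1)$, which is the second condition in eq.\eqref{cond_2_f}; and the constraint $s<F_{1+l}(h^{-1}(T^l))-T^{[l]}$ (when $l\ge1$) reduces, via the identity above, to $w<F_{k+l}(h^{-1}(T^l))-T^{[l]}$, again in eq.\eqref{cond_2_f}. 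Once every such inequality is traced through the identity $F_{k+l}=F_{1+l}+(k-1)h$, the uniqueness of the CE (stated in Section~\ref{sect:notations}) forces $z_s$ to be exactly this candidate flow, giving $\xi_1(z_s)=\eta_1=\xi_1(z_\alpha)$ for every $\alpha\in\cS^N_2(w,T^{-N};k,l)$, and Lemma~\ref{lm:equiv_decentralization} concludes.
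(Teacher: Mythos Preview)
Your proposal is correct and follows essentially the same approach as the paper: verify that $s=w-(k-1)h(\eta_1)$ satisfies the feasibility conditions \eqref{cond_2_c}--\eqref{cond_2_f} for the pair $(1,l)$ via the identity $F_{k+l}(t)=F_{1+l}(t)+(k-1)h(t)$, conclude that $z_s$ is of mode~2 specified by $1$ and $l$ with $\xi_1(z_s)=\eta_1$, and then apply Lemma~\ref{lm:equiv_decentralization}. One small slip: when you write ``$s>h(\eta_1)$ reduces to $w>k\,h(\eta_1)$, which is the second condition in eq.\eqref{cond_2_f}'', note that the second condition in \eqref{cond_2_f} is $w>k\,h(F_l^{-1}(T^{[l]}))$, not $w>k\,h(\eta_1)$; the inequality you need follows instead from $\alpha^i>h(\eta_1)$ for $1\le i\le k$ (as you in fact observe earlier), and the $h(\eta_1)=0$ boundary case does not arise since mode~2 requires $c_1(\xi_1)<c_2(\xi_2)$ strictly.
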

\begin{proof}
Let us prove for the case that $l \geq 1$. The case that $l=0$ is similar.

1. For $1 \leq i \leq k$, eq.\eqref{cond_atom} implies that $c_{1}(\xi_{1})+ x^{i}_{1} \, c'_{1}(\xi_{1}) = c_{2}(\xi_{2}) + x^{i}_{2}\,c'_{2}(\xi_{2})$ or, equivalently, $x^{i}_{1} = \frac{\alpha^{i}\,a(\xi_{1})+h(\xi_{1})}{1+a(\xi_{1})}$. Similarly,
for $1 \leq j \leq l$, $y^{j}_{1} = \frac{T^{j}\,a(\xi_{1}) + h(\xi_{1})}{1+a(\xi_{1})}$.

The rest of the results in eq.\eqref{flow_2_d}, eq.\eqref{flow_2_c} and eq.\eqref{flow_2_b} can then be easily obtained. In particular,
\begin{equation}
F_{k+l}(\xi_{1}) = \alpha^{[k]}+T^{[l]}. \label{flow_2_a}
\end{equation}
Since $F_{k+l}$ is strictly decreasing, $\xi_{1}=F_{k+l}^{-1}(\alpha^{[k]}+T^{[l]})$.

2--3. For $1 \leq i \leq k$, the fact that $x^{i}_{1} = \frac{\alpha^{i}\,a(\xi_{1})+h(\xi_{1})}{1+a(\xi_{1})}$ and $0 < x^{i}_{1} < \alpha^{i}$ yields (i) $\alpha^{i}> -h(\xi_{1})/a(\xi_{1})$, which is always true, because $h(\xi_{1})\geq 0$ and $a(\xi_{1})>0$; and (ii)
\begin{equation} \label{cond_2_alpha}
\alpha^{i} > h(\xi_{1}).
\end{equation}
Similarly as for eq.\eqref{cond_2_alpha}, one has $T^{j} > h(\xi_{1})$ for all $1 \leq j \leq l$.

Four constraints on $\xi_{1}$ can be deduced.

(i) Eq.\eqref{cond_2_alpha} implies
\begin{equation}\label{cond_2_g}
 \alpha^{[k]}> k \,h(\xi_{1})
\end{equation}
or, equivalently, according to eq.\eqref{flow_2_a}, $F_{k+l}(\xi_{1})=(M-\xi_{1})(1+a(\xi_{1}))+(k+l)\,h(\xi_{1})>k\,h(\xi_{1}) + T^{[l]} \Rightarrow F_{l}(\xi_{1})=(M-\xi_{1})\bigl(1+a(\xi_{1})\bigr)+l\,h(\xi_{1})>T^{[l]}$. Therefore, $\xi_{1} < F^{-1}_{l}(T^{[l]})$. 

(ii)  The fact that $T^{j} > h(\xi_{1})$ for all $1 \leq j \leq l$ implies $\xi_{1} > h^{-1}(T^{l})$.

(iii) If $l\!<\!N\!-\!1$, then for $l\!+\!1 \!\leq\! j \!\leq\!  N\!-\!1$, according to eq.\eqref{cond_atom}, $c_{1}(\xi_{1}) + T^{j}\,c'_{1}(\xi_{1}) \leq c_{2}(\xi_{2})$ or equivalently $T^{j} \leq h(\xi_{1})$, which further implies $\xi_{1} \leq  h^{-1}(T^{l+1})$.

(iv) If $H\leq 0$, then according to Lemma~\ref{prop:c1_less_c2}, $\xi_{1}< \hat{\xi}$.

These four constraints on $\xi_{1}$ (i.e. $\xi_{1} \!<\! F^{-1}_{l}(T^{[l]})$,  $\xi_{1} \!>\! h^{-1}(T^{l})$,  $\xi_{1} \!\leq\!  h^{-1}(T^{l+1})$ and $\xi_{1}< \hat{\xi}$) together with eq.\eqref{cond_2_alpha}, eq.\eqref{flow_2_a} and eq.\eqref{cond_2_g} imply that
\begin{align}
& T^{N} \geq  \alpha^{[k]} > k h(\xi_{1}) > k\,h(F^{-1}_{l}(T^{[l]})); \label{cond_2_a}\\
& \alpha^{[k]} = F_{k+l}(\xi_{1}) - T^{[l]} < F_{k+l}(h^{-1}(T^{l})) - T^{[l]}; \label{cond_2_b}\\
& T^{N} \geq  \alpha^{[k]} = F_{k+l}(\xi_{1}) - T^{[l]} \geq F_{k+l}(h^{-1}(T^{l+1})) - T^{[l]},\; \text{ if }\, l < N-1; \label{cond_2_d} \\
& T^{N} \geq  \alpha^{[k]} = F_{k+l}(\xi_{1}) - T^{[l]} \geq F_{0}(\hat{\xi}) - T^{[l]}, \; \text{ if }\, H\leq 0. \label{cond_2_k}
\end{align}
Equations \eqref{cond_2_a}-\eqref{cond_2_k} yield eq.\eqref{cond_2_f}.

Besides, eq.\eqref{cond_2_a} and eq.\eqref{cond_2_b} imply that
\begin{align*}
& k\,h(F^{-1}_{l}(T^{[l]})) < F_{k+l}(h^{-1}(T^{l})) - T^{[l]} \\
&\Rightarrow T^{[l]} + k\,h(F^{-1}_{l}(T^{[l]})) < F_{l}(h^{-1}(T^{l})) + k\,T^{l} \\
&\Rightarrow  F^{-1}_{l}(T^{[l]}) > h^{-1}(T^{l}).
\end{align*}
This result, together with eq.\eqref{cond_2_a}-eq.\eqref{cond_2_k}, proves eq.\eqref{cond_2_c}.

If $n > k$, then for $k+1 \leq i \leq n$, according to eq.\eqref{cond_atom}, $c_{1}(\xi_{1}) + \alpha^{i}\,c'_{1}(\xi_{1}) \leq c_{2}(\xi_{2})$ or, equivalently, $\alpha^{i} \leq h(\xi_{1})$.

4. In five steps, let us show that, if the conditions in eq.\eqref{cond_2_c} and those in eq.\eqref{cond_2_f} are satisfied for $k$ and $w$, then they are also satisfied when $k$ is replaced by 1, and $w$ is replaced by $w-(k-1)h(\eta_{1})$.

1) If $h(F^{-1}_{l}(T^{[l]}))<0$, then $ T^{p} \geq w > h(F^{-1}_{l}(T^{[l]}))$. If $h(F^{-1}_{l}(T^{[l]}))>0$, $ T^{N} \geq w > k\,h(F^{-1}_{l}(T^{[l]})) > h(F^{-1}_{l}(T^{[l]}))$.

2) If $l<N-1$, then $T^{N} \geq  F_{k+l}(h^{-1}(T^{l+1})) - T^{[l]} = F_{1+l}(h^{-1}(T^{l+1})) - T^{[l]} + (k-1)T^{l+1} \geq F_{1+l}(h^{-1}(T^{l+1})) - T^{[l]}$, where the second inequality is due to the fact that $k\geq 1$.

3) The fact that $\eta_{1}>h^{-1}(T^{l})$ implies that $F_{1+l}(\eta_{1})< F_{1+l}(h^{-1}(T^{l}))$. Besides, by the definition of $\eta_{1}$,
$ w - (k-1) h(\eta_{1}) = F_{k+l}(\eta_{1}) - T^{[l]} - (k-1)h(\eta_{1}) = F_{1+l}(\eta_{1}) - T^{[l]}$. Therefore,  $w-(k-1)h(\eta_{1})< F_{1+l}(h^{-1}(T^{l})) - T^{[l]}$.

4) If $l\!<\!N-1$, the relation $\eta_{1}\!\leq\! h^{-1}(T^{l+1})$ implies that $F_{1+l}(\eta_{1})\! \geq \!F_{1+l}(h^{-1}(T^{l+1}))$. As a result, $w-(k-1)h(\eta_{1})=F_{1+l}(\eta_{1}) - T^{[l]} \geq F_{1+l}(h^{-1}(T^{l+1})) - T^{[l]}$.

5) If $H\leq 0$, the relation $\eta_{1}\leq \hat{\eta}$ implies that $w-(k-1)h(\eta_{1}) =F_{1+l}(\eta_{1}) - T^{[l]}\geq F_{0}(\hat{\eta})- T^{[l]}$.

Therefore, the CE $z_{w-(k-1)h(\eta_{1})}$ induced by SA strategy $w-(k-1)h(\eta_{1})$ is of mode~3 specified by $1$ and $l$. The definition of $\eta_{1}$ implies that $w+T^{[l]}=F_{k+l}(\eta_{1})=F_{1+l}(\eta_{1})+(k-1)h(\eta_{1})$. Hence, $w - (k-1)h(\eta_{1}) +T^{[l]}=F_{1+l}(\eta_{1})$ or, equivalently, $\eta_{1}=F^{-1}_{1+l}(w - (k-1)h(\eta_{1}))$. According to eq.\eqref{flow_2_b}, the total weight on arc~1 at $z_{w-(k-1)h(\eta_{1})}$ is also $\eta_{1}$, which means that SA strategy $w-(k-1)h(\eta_{1})$ is equivalent to the strategies in $\cS^{p}_{2}(w,T^{-N};k,l)$, by applying Lemma~\ref{lm:equiv_decentralization}.
\end{proof}

\begin{lemma}\label{thm:mode_3_cond}
1. If $z_{\alpha}$ is of mode~3 and specified by $l$, then
\begin{align}
& x_{1} \! = \! T^{N}; \;y^{j}_{1}\!  = \! \frac{T^{j}\,a(\xi_{1}) \! +\!  h(\xi_{1})}{1\! +\! a(\xi_{1})},\,   1\! \leq\!  j \! \leq\!  l; \; y_{1} \! =\!  M^{-N}\! -\!  \frac{T^{[l]}\! -\! l\,h(\xi_{1})}{1\! +\! a(\xi_{1})}; \label{flow3_2}\\
& \xi_{1}\!  =\!  M \! -\!  \frac{T^{[l]}\! -\! l\,h(\xi_{1})}{1\! +\! a(\xi_{1})}\! =\! F^{-1}_{l}(T^{[l]}). \label{flow3_3}
\end{align}
In particular, for all $1\leq j \leq l$, $T^j>h(\xi_1)$.
 
2. If $N>1$, for given $l\in \nit^*$ and $l<N$, there exists a strategy $\alpha\in \cS^{N}$ such that $z_{\alpha}$ is of mode~3 and specified by $l$ if and only if
\begin{equation}\label{cond_3_b}
\begin{cases}
 F^{-1}_{l}(T^{[l]}) > h^{-1}(T^{l}); \;\\
 F^{-1}_{l}(T^{[l]}) \leq h^{-1}(T^{l+1})\, \text{if } l\leq N-2; \; \\
T^{[l]} > F_{0}(\hat{\xi})\, \text{ if } H\leq 0.
\end{cases}
\end{equation}

3. If $N>1$, for given $l\in \nit^*$ and $l<N$, assume eq.\eqref{cond_3_b}. Define a subset of $\cS^{N}$ by:
\begin{align*}
\cS^{N}_{3}(T^{-N};l)=\{& \,\alpha\in \cS^{N}\,|\, \alpha = \underline{\alpha}, \text{ or } \alpha^{1} \leq h(F^{-1}_{l}(T^{[l]})) \,\}.
\end{align*}
Then, $\alpha \in \cS^{N}_{3}(T^{-N};l)$ if and only if it induces $z_{\alpha}$ of mode~3 specified by $l$ and, at $z_{\alpha}$, the total weight on arc 1 is $F^{-1}_{l}(T^{[l]})$. Furthermore, the strategies in $\cS^{N}_{3}(T^{-N};l)$ are equivalent to each other.

In particular, the nonatomic strategy $\underline{\alpha}$ is in $\cS^{N}_{3}(T^{-N};l)$.
\end{lemma}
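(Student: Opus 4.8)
The three parts of the lemma are established along the lines of Lemma~\ref{thm:mode_2_cond}, the simplification being that in mode~3 none of player~$N$'s deputies uses arc~2, so the only constraint her strategy must meet is a size bound on her atomic deputies.

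\emph{Part 1.} The plan is to read the flow off the equilibrium conditions. Since every deputy of $N$ is entirely on arc~1, $x_1=\sum_{i=0}^n\alpha^i=T^N$. For $1\le j\le l$ the opponent $T^j$ splits, so \eqref{cond_atom} gives $c_1(\xi_1)+y^j_1c'_1(\xi_1)=c_2(\xi_2)+y^j_2c'_2(\xi_2)$; substituting $y^j_2=T^j-y^j_1$, $\xi_2=M-\xi_1$ and dividing by $c'_1(\xi_1)$ yields $y^j_1=\frac{T^ja(\xi_1)+h(\xi_1)}{1+a(\xi_1)}$, which is \eqref{flow3_2}. Summing $y^j_1$ over $j=0,\dots,N-1$ (with $y^0_1=T^0$ because $c_1(\xi_1)<c_2(\xi_2)$ and \eqref{cond_ward}, and $y^j_1=T^j$ for $l+1\le j\le N-1$) gives $y_1=M^{-N}-\frac{T^{[l]}-lh(\xi_1)}{1+a(\xi_1)}$, so $\xi_1=x_1+y_1=M-\frac{T^{[l]}-lh(\xi_1)}{1+a(\xi_1)}$; rearranging, this identity is $F_l(\xi_1)=T^{[l]}$, hence $\xi_1=F_l^{-1}(T^{[l]})$ by strict monotonicity of $F_l$, which is \eqref{flow3_3}. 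Finally $y^j_2>0$ for $1\le j\le l$ means $y^j_1<T^j$, which simplifies to $T^j>h(\xi_1)$.

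\emph{Part 2 and Part 3.} For necessity of \eqref{cond_3_b} I would use Part~1: $\xi_1=F_l^{-1}(T^{[l]})$ and $T^l>h(\xi_1)$ give $F_l^{-1}(T^{[l]})>h^{-1}(T^l)$ after applying the decreasing $h^{-1}$; if $l\le N-2$, the opponents $T^{l+1},\dots,T^{N-1}$ sit entirely on arc~1, so \eqref{cond_atom} forces $T^j\le h(\xi_1)$ and in particular $F_l^{-1}(T^{[l]})\le h^{-1}(T^{l+1})$; if $H\le 0$, Lemma~\ref{prop:c1_less_c2}, parts~3 and~4, give $\xi_1\le\hat{\xi}$, and the strict inequality $c_1(\xi_1)<c_2(\xi_2)$ rules out $\xi_1=\hat{\xi}$, so $\xi_1<\hat{\xi}$ and $T^{[l]}=F_l(\xi_1)>F_l(\hat{\xi})=F_0(\hat{\xi})$. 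For sufficiency (and simultaneously the ``if'' direction of Part~3), set $\eta_1=F_l^{-1}(T^{[l]})$; from \eqref{cond_3_b} one first checks $\eta_1\in[0,M]$ and $h(\eta_1)>0$ (immediate when $H>0$, and when $H\le 0$ from $\eta_1<\hat{\xi}$). Then, for an arbitrary $\alpha\in\cS^N_3(T^{-N};l)$, I would write down the flow that places all deputies of $N$ and all players $T^j$ with $j\in J_2$ entirely on arc~1 and puts $y^j_1=\frac{T^ja(\eta_1)+h(\eta_1)}{1+a(\eta_1)}$ for $1\le j\le l$, note (same algebra as in Part~1) that its arc-1 aggregate is $\eta_1$, and verify it satisfies \eqref{cond_ward}--\eqref{cond_atom}: the arc-1 users need $h(\eta_1)\ge0$; an atomic deputy of weight $\alpha^i$ needs $\alpha^i\le h(\eta_1)=h(F_l^{-1}(T^{[l]}))$, which is the defining inequality of $\cS^N_3$; the opponents $l+1,\dots,N-1$ need $T^j\le h(\eta_1)$, which follows from $\eta_1\le h^{-1}(T^{l+1})$; and $0<y^j_1<T^j$ follows from $h(\eta_1)>0$ and $\eta_1>h^{-1}(T^l)$. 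By uniqueness this flow is $z_\alpha$, of mode~3 specified by $l$ with $\xi_1(z_\alpha)=\eta_1$. The ``only if'' direction of Part~3 is then read off Part~1 ($\xi_1(z_\alpha)=F_l^{-1}(T^{[l]})$, and $\alpha^1\le h(\xi_1)$ by \eqref{cond_atom}, so $\alpha\in\cS^N_3$); equivalence of the strategies in $\cS^N_3$ follows from Lemma~\ref{lm:equiv_decentralization} since they all produce the same arc-1 weight, and $\underline{\alpha}$ lies in $\cS^N_3$ vacuously.

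\emph{Main obstacle.} The delicate step is the sufficiency in Part~2: showing that the three conditions in \eqref{cond_3_b} jointly pin the constructed flow into mode~3 — in particular that they force $h(\eta_1)>0$ (so that $c_1<c_2$ strictly and we are not in mode~4) and $\eta_1\in[0,M]$ (so the genuine branches of $h$, $a$, $F_l$ apply). This rests on the monotonicity of $h$ and $F_l$ and on the identity $F_k(\hat{\xi})=F_0(\hat{\xi})$; the only real bookkeeping is keeping the sub-cases $H>0$ vs.\ $H\le 0$ and $l=N-1$ vs.\ $l\le N-2$ separate.
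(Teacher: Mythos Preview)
Your proposal is correct and follows exactly the route the paper intends: the paper's own proof is the single line ``Similar to the proof of Lemma~\ref{thm:mode_2_cond}'', and what you have written is precisely that transcription, with the mode-2 argument specialised to $k=0$ (no deputy of $N$ uses arc~2). Your explicit sufficiency construction --- writing down the candidate flow at $\eta_1=F_l^{-1}(T^{[l]})$ and checking \eqref{cond_ward}--\eqref{cond_atom} against each class of players, then invoking uniqueness of the CE --- is a slightly more direct way to phrase the converse than the paper's mode-2 treatment, but the content is identical.
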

\begin{proof}
Similar to the proof of Lemma~\ref{thm:mode_2_cond}.
\end{proof}

\begin{lemma}\label{thm:mode_5_cond}
1. If $z_\alpha$ is of mode 4, then
\begin{equation}
x^{i}_{1}=\frac{A\alpha^{i}}{1+A},\; 1 \!\leq \!i \!\leq\! n, \,\text{if } n \! \geq \!1; \;\, y^{j}_{1}=\frac{AT^{j}}{1+A},\; 1\!\leq\! j\! \leq\! N\!-\!1, \,\text{if } N \!>\! 1; \;\,  \xi_1 = \hat{\xi}. \label{flow5b}
\end{equation}

2. There exists a strategy $\alpha \in \cS^N$ such that $z_{\alpha}$ is of mode 4 if and only if
\begin{equation}\label{cond_5_a}
 H\leq 0; \quad  T^{[N-1]} \leq F_{0}(\hat{\xi}), \text{ if } N > 1.
\end{equation}

3. Assume eq.\eqref{cond_5_a}. Define a subset of $\cS^{N}$ by:
\begin{equation*}
\cS^{N}_{4}=\{ \,\alpha=(\alpha^i)_{i=0}^{n}\in \cS^{N}\,|\, \alpha = \underline{\alpha}, \text{ or } \alpha^{[n]} \leq  F_{0}(\hat{\xi}) - T^{[N-1]} \}.
\end{equation*}
Then $\alpha\in \cS^N_4$ if and only if $z_{\alpha}$ is of mode 4. Furthermore, the strategies in $\cS^{N}_{4}$ are equivalent to each other.

In particular, the nonatomic strategy $\underline{\alpha}$ is in $\cS^{N}_{4}$.
\end{lemma}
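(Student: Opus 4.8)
The plan is to treat the three assertions in the order they are stated, exploiting the characterization of mode 4 by the single scalar condition $c_1(\xi_1)=c_2(\xi_2)$ together with Lemma~\ref{prop:c1_less_c2}(4), which says $h(\xi_1)\geq 0$ with equality iff $H\leq 0$ and $\xi_1=\hat\xi$. Since mode 4 forces $c_1(\xi_1)=c_2(\xi_2)$, i.e. $h(\xi_1)=0$, this immediately pins down $\xi_1=\hat\xi$ (so $H\leq 0$ is necessary already here), and $a(\xi_1)=a(\hat\xi)=A$.

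For part 1, I would feed $\xi_1=\hat\xi$ into the atomic optimality condition \eqref{cond_atom}. Because $c_1(\hat\xi)=c_2(M-\hat\xi)$, for every atomic player (deputy $\alpha^i$ or opponent $T^j$) who splits across both arcs the condition becomes $x^i_1 c'_1(\hat\xi)=x^i_2 c'_2(\hat\xi)$, i.e. $x^i_1/x^i_2 = c'_2(\hat\xi)/c'_1(\hat\xi)=A$; combined with $x^i_1+x^i_2=\alpha^i$ this gives $x^i_1 = A\alpha^i/(1+A)$, and similarly for the $T^j$'s. Summing over all atomic players and adding the nonatomic contribution, together with $\sum x^i_1+\sum y^j_1+(\text{nonatomic on arc }1)=\hat\xi$, recovers the stated formulas; in particular $x_1 = A T^N/(1+A)$ for $N$ and $y_1 = M^{-N} - (T^{[N-1]}-(N-1)h(\hat\xi))/(1+A) = M^{-N}-T^{[N-1]}/(1+A)$ using $h(\hat\xi)=0$. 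One still has to check that in mode 4 \emph{every} atomic player is genuinely on both arcs (so that the equality form of \eqref{cond_atom}, not the inequality corner case, applies); this follows from Lemma~\ref{prop:c1_less_c2}(1)--(3): since $c_1(\xi_1)=c_2(\xi_2)$ there are nonatomic players on both arcs, hence by (1) every atomic player is on arc 1, and if some atomic player used only arc 1 then by (3) one would get $c_1(\xi_1)<c_2(\xi_2)$, a contradiction.

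For parts 2 and 3, the existence/characterization direction reduces, via the formulas just derived, to an inequality on the total weight player $N$'s deputies place on arc 1. The atomic deputies of $N$ that are on both arcs are exactly those with $\alpha^i>h(\hat\xi)=0$, i.e. all of them, and the nonatomic deputies occupy whatever remains; writing out $\xi_1=\hat\xi$ in the form $\hat\xi = M - (T^{[N-1]} + \alpha^{[n]} - (N-1+n)\cdot 0)/(1+A)$ — after incorporating nonatomic weight correctly — gives precisely $T^{[N-1]}+\alpha^{[n]} = (M-\hat\xi)(1+A) = F_0(\hat\xi)$ when all of $N$'s stock is atomic, and $\leq F_0(\hat\xi)$ once some of it is nonatomic; this is the defining inequality of $\cS^N_4$. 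Conversely, given $\alpha\in\cS^N_4$ one constructs the candidate flow by the formulas in \eqref{flow5b}, assigns the leftover of each player's weight to arc 2, and verifies \eqref{cond_ward}--\eqref{cond_atom} hold, invoking uniqueness of the CE to conclude it is $z_\alpha$ and hence of mode 4. For the opposite bound one checks that the nonatomic deputies need nonnegative weight on arc 2, which is where $\alpha^{[n]}\leq F_0(\hat\xi)-T^{[N-1]}$ is used, and that the $T^j$'s get nonnegative arc-2 weight, which is automatic from $h(\hat\xi)=0\leq T^j$. That the constraint on the opponents is $T^{[N-1]}\leq F_0(\hat\xi)$ rather than involving each $T^j$ separately comes from the largest-player-most-on-arc-2 monotonicity of Lemma~\ref{prop:c1_less_c2}(2): if the total already fits, so does each piece. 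Finally, equivalence of all strategies in $\cS^N_4$ is immediate from Lemma~\ref{lm:equiv_decentralization}, since they all produce $\xi_1=\hat\xi$.

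The main obstacle I anticipate is the bookkeeping of the nonatomic deputies' weight in part 2: one must show that feasibility (nonnegativity of the nonatomic arc-1 and arc-2 weights) is equivalent to the clean inequality $\alpha^{[n]}\leq F_0(\hat\xi)-T^{[N-1]}$, handling separately the degenerate subcases $N=1$ (no opponent constraint) and $\alpha=\underline\alpha$ (all weight nonatomic), and confirming that when $\alpha^{[n]}$ is strictly below the bound the residual nonatomic mass on arc 1 is exactly what makes $\xi_1=\hat\xi$ hold — i.e. that the equilibrium does not silently slip into mode 3. Everything else is a routine translation between the CE conditions \eqref{cond_ward}--\eqref{cond_atom} and the auxiliary functions $h$, $a$, $F_n$, closely paralleling the already-proved Lemmas~\ref{thm:mode_2_cond} and~\ref{thm:mode_3_cond}.
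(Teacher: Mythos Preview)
Your plan follows the same route as the paper's proof, and most of it is sound. One concrete point needs attention in part~2--3.

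When you construct the candidate mode-4 flow from the formulas in \eqref{flow5b}, you note that nonnegativity of the \emph{arc-2} nonatomic mass yields $\alpha^{[n]}+T^{[N-1]}\le (M-\hat\xi)(1+A)=F_0(\hat\xi)$. But you also need the \emph{arc-1} nonatomic mass $x^0_1+y^0_1=\hat\xi - \dfrac{A(\alpha^{[n]}+T^{[N-1]})}{1+A}$ to be nonnegative, which gives the a~priori independent constraint $\alpha^{[n]}+T^{[N-1]}\le \hat\xi\,\dfrac{1+A}{A}$. The paper shows this second constraint is automatically implied by the first via Assumption~\ref{assp:c1_less_c2}\textsc{(ii)}, eq.~\eqref{eq:assp_c1}: $\hat\xi\,c'_1(\hat\xi)\ge (M-\hat\xi)\,c'_2(M-\hat\xi)$ rewrites as $\hat\xi\ge (M-\hat\xi)A$, hence $\hat\xi\,\dfrac{1+A}{A}\ge (M-\hat\xi)(1+A)=F_0(\hat\xi)$. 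Without invoking \eqref{eq:assp_c1} you cannot conclude that $\cS^N_4$ is characterised by the single inequality $\alpha^{[n]}\le F_0(\hat\xi)-T^{[N-1]}$; you would instead carry an extra, possibly tighter, bound. This is precisely the ``bookkeeping obstacle'' you anticipated, and its resolution is not automatic.

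A minor slip: the claim ``since $c_1(\xi_1)=c_2(\xi_2)$ there are nonatomic players on both arcs'' is not justified (there may be no nonatomic mass at all). Your second argument via Lemma~\ref{prop:c1_less_c2}(3) already suffices to force every atomic player onto both arcs, so simply drop the first clause. Also, $x_1=\dfrac{A T^N}{1+A}$ holds for the \emph{atomic} part only; the nonatomic deputies' split between arcs is indeterminate when costs are equal, so $x_1$ itself is not pinned down (nor does the lemma claim it is).
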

\begin{proof}
Since the two arcs have equal cost, $\xi_{1}=\hat{\xi}$ and, according to eq.\eqref{prop:xi}, $H \leq 0$. Flows eq.\eqref{flow5b} can be deduced from eq.\eqref{cond_atom}. And they imply that
\begin{equation} \label{flow_5_d}
x^{0}_{1}+y^{0}_{1}=\xi_{1}-\sum^{n}_{i}x^{i}_{1}-\sum^{N-1}_{j}y^{j}_{1}=\hat{\xi}-\frac{A(\alpha^{[n]}+T^{[N-1]})}{1+A}.
\end{equation}

It follows from $0 \leq x^{0}_{1}\leq \alpha^{0}$ and $0 \leq y^{0}_{1}\leq T^{0}$ that $0 \leq x^{0}_{1}+y^{0}_{1} \leq T^{N} - \alpha^{[n]} + M^{-N} -T^{[N-1]}=M - \alpha^{[n]} - T^{[N-1]}$. One deduces, by considering eq.\eqref{flow_5_d}, that $\alpha^{[n]} + T^{[N-1]} \leq \hat{\xi}\cdot \frac{1+A}{A}$ and $\alpha^{[n]} + T^{[N-1]} \leq (M-\hat{\xi})(1+A)$. But, $\hat{\xi}\cdot \frac{1+A}{A}\geq (M-\hat{\xi})(1+A)$ because $(M-\hat{\xi})(1+A) \leq M \leq \hat{\xi} \frac{1+A}{A}$, which is itself due to eq.\eqref{eq:assp_c1}. Therefore, $\alpha^{[n]} + T^{[N-1]} \leq (M-\hat{\xi})(1+A)=F_{0}(\hat{\xi})$.
\end{proof}
\subsection{Lemmas and proofs}
\begin{lemma}\label{lm:reply_0}
Suppose that $\alpha,\tilde{\alpha}\in \cS^N$ are two decentralization strategies of atomic player~$N$, and $z_{\alpha}, z_{\tilde{\alpha}}$ respectively the CE of game $\Gamma(\alpha, T^{-N})$ and $\Gamma(\tilde{\alpha}, T^{-N})$. If $z_{\alpha}$ is of mode~2, and $x_1(z_{\tilde{\alpha}})>x_{1}(z_{\alpha})$, $y_1(z_{\tilde{\alpha}})\geq y_{1}(z_{\alpha})$, then
\begin{equation*}
U^{N}(\tilde{\alpha},T^{-N}) > U^{N}(\alpha,T^{-N}).
\end{equation*}
\end{lemma}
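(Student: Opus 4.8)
The goal is to show that if $z_\alpha$ is of mode~2 and the decentralized player $N$ puts strictly more weight on arc~1 under $\tilde\alpha$ while her opponents do not decrease their weight there, then $N$'s cost strictly increases. The plan is to express both quantities $U^N(\alpha,T^{-N})$ and $U^N(\tilde\alpha,T^{-N})$ in terms of the total weight on arc~1, i.e. $\xi_1(z_\alpha)$ and $\xi_1(z_{\tilde\alpha})$, using the characterization of a mode-2 CE in Lemma~\ref{thm:mode_2_cond}, and then compare. By hypothesis $\xi_1(z_{\tilde\alpha}) = x_1(z_{\tilde\alpha}) + y_1(z_{\tilde\alpha}) > x_1(z_\alpha) + y_1(z_\alpha) = \xi_1(z_\alpha)$, so first I would record that the total flow on arc~1 strictly increases.

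First I would recall from eq.~\eqref{flow_2_d}--\eqref{flow_2_b} that at a mode-2 CE the cost to player $N$ can be written as $U^N(\alpha,T^{-N}) = x_1 c_1(\xi_1) + (T^N-x_1)c_2(\xi_2)$ with $\xi_2 = M-\xi_1$, and that $x_1$ itself is a function of $\xi_1$ of the form $x_1 = T^N - \frac{\alpha^{[k]} - k\,h(\xi_1)}{1+a(\xi_1)}$. The key structural fact is that along the locus of mode-2 equilibria, $N$'s cost, as a function of $\xi_1$ alone, is strictly increasing on the relevant range (where $c_1(\xi_1)<c_2(\xi_2)$). The cleanest route is to differentiate $U^N$ with respect to $\xi_1$, using the equilibrium conditions \eqref{cond_atom} for the $k$ atomic deputies of $N$ (which give $c_1(\xi_1)+x^i_1 c'_1(\xi_1) = c_2(\xi_2)+x^i_2 c'_2(\xi_2)$) to simplify the marginal cost expression; summing these relations over $i=1,\dots,k$ and combining with $\frac{d\xi_2}{d\xi_1}=-1$, one obtains that $\frac{dU^N}{d\xi_1}$ has a definite sign, strictly positive, on the region $c_1<c_2$. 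Alternatively — and perhaps more robustly, since $z_{\tilde\alpha}$ need not itself be of mode~2 — I would argue directly: player $N$'s deputies at $z_{\tilde\alpha}$ could collectively have been ``mimicked'' while keeping the opponents' behavior fixed, and the mode-2 equilibrium at $z_\alpha$ is, among all configurations of $N$'s stock consistent with the opponents' best responses, the one minimizing $N$'s cost for the given split structure; moving to a strictly larger $\xi_1$ with $y_1$ not decreasing forces $N$ off this minimizing configuration in the direction of higher cost. Monotonicity of $c_1$, $c_2$ and of the auxiliary functions $h,a$ from Assumption~\ref{assp:base_chap3} supplies the strictness.

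The step I expect to be the main obstacle is handling the case where $z_{\tilde\alpha}$ is \emph{not} of mode~2 — e.g. it is of mode~1, mode~3, or mode~4 — so that the convenient formula for $x_1$ in terms of $\xi_1$ from Lemma~\ref{thm:mode_2_cond} is unavailable for $\tilde\alpha$. In that situation I would split into subcases according to the mode of $z_{\tilde\alpha}$, use the appropriate flow formulas from Lemmas~\ref{thm:mode_1_cond}, \ref{thm:mode_3_cond}, \ref{thm:mode_5_cond} for $\tilde\alpha$, and in each subcase bound $U^N(\tilde\alpha,T^{-N})$ below by the value that the mode-2 cost function of $\alpha$ would take at $\xi_1 = \xi_1(z_{\tilde\alpha})$, then invoke the monotonicity established above together with $\xi_1(z_{\tilde\alpha}) > \xi_1(z_\alpha)$. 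The hypothesis $y_1(z_{\tilde\alpha}) \geq y_1(z_\alpha)$ is what keeps the opponents' contribution under control across modes and prevents the inequality from reversing; I would make sure to use it precisely at the point where the opponents' flow on arc~1 enters the cost comparison. A secondary technical point is the boundary behavior: when $\xi_1(z_{\tilde\alpha})$ reaches $\hat\xi$ (mode~4) the strict inequality $c_1<c_2$ degenerates to equality, and I would verify that the limiting value of the mode-2 cost function at $\xi_1 = \hat\xi$ is still strictly above its value at $\xi_1(z_\alpha)<\hat\xi$, which follows from strict monotonicity on the open interval and continuity at the endpoint.
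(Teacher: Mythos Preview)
Your proposal contains the right seed---summing the equilibrium conditions \eqref{cond_atom} over the $k$ deputies at $z_\alpha$ is exactly how the paper extracts the key inequality $c_1(\xi_1)+x_1 c_1'(\xi_1)\ge c_2(\xi_2)+x_2 c_2'(\xi_2)$---but the way you plan to use it does not close the argument.

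The problem is that your ``mode-2 cost function of $\alpha$'' is a function of $\xi_1$ through the formula $x_1=T^N-\frac{\alpha^{[k]}-k\,h(\xi_1)}{1+a(\xi_1)}$, and this formula is specific to the strategy $\alpha$ and to the particular $(k,l)$. Evaluating it at $\xi_1(z_{\tilde\alpha})$ is not meaningful: at that value of $\xi_1$, the pair $(x_1,y_1)$ realized by $\tilde\alpha$ has no reason to lie on the curve determined by $\alpha$, and your proposed lower bound $U^N(\tilde\alpha,T^{-N})\ge V_\alpha(\xi_1(z_{\tilde\alpha}))$ is neither defined nor justified. Your alternative (``$N$'s deputies at $z_\alpha$ minimize $N$'s cost among configurations'') is also not true in general: a mode-2 CE is a Nash equilibrium among the deputies, not a minimizer of the aggregate cost $U^N$. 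Consequently the case split by mode of $z_{\tilde\alpha}$ is unnecessary and, as you set it up, does not lead anywhere.

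The missing idea is to regard $U^N$ as a function of the \emph{two} independent variables $(x_1,y_1)$ (with $x_2=T^N-x_1$, $y_2=M^{-N}-y_1$, $\xi_r=x_r+y_r$) and to write the difference $U^N(\tilde\alpha,T^{-N})-U^N(\alpha,T^{-N})$ as a path integral from $(x_1,y_1)$ to $(\tilde x_1,\tilde y_1)$, first moving $x_1\to\tilde x_1$ at $y_1$ fixed, then $y_1\to\tilde y_1$ at $\tilde x_1$ fixed. The summed KKT inequality at $z_\alpha$ gives a constant $B$ with $c_1(\xi_1)+x_1 c_1'(\xi_1)\ge B\ge c_2(\xi_2)+x_2 c_2'(\xi_2)$; convexity and monotonicity of $c_1,c_2$ (Assumption~\ref{assp:base_chap3}) then propagate this, yielding for all $s\in(x_1,\tilde x_1]$, $t\in[\tilde x_2,x_2)$ the strict bound $c_1(s+y_1)+s c_1'(s+y_1)>B>c_2(t+y_2)+t c_2'(t+y_2)$, and similarly a constant $C$ with $\tilde x_1 c_1'(\tilde x_1+p)>C>\tilde x_2 c_2'(\tilde x_2+q)$ on the relevant $y$-range. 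Substituting into the four integrals of the decomposition, the $B$- and $C$-terms telescope to zero and the strict inequality follows. This works uniformly, with no reference whatsoever to the mode of $z_{\tilde\alpha}$; the hypothesis $y_1(z_{\tilde\alpha})\ge y_1(z_\alpha)$ is used precisely to ensure the second leg of the path has nonnegative length.
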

\begin{proof}
Denote $z_\alpha$ by $z$, $z_{\tilde{\alpha}}$ by $\tilde{z}$, $x_r(z)$ by $x_r$, $x_r(\tilde{z})$ by $\tilde{x}_r$, $y_r(z)$ by $y_r$, $y_r(\tilde{z})$ by $\tilde{y}_r$, $\xi_r(z)$ by $\xi_r$ and $\xi_r(\tilde{z})$ by $\tilde{\xi}_r$.

Suppose that $x$ is specified by $k\in \mathbb{N}^{*}$ (and $l\in \mathbb{N}^{*}$ in the case that $x$ is of mode 2). According to eq.\eqref{cond_atom}, 
\begin{equation}\label{eq:88}
c_{1}(\xi_{1}) + x^{i}_{1}\,c'_{1}(\xi_{1}) = c_{2}(\xi_{2}) + x^{i}_{2}\,c'_{2}(\xi_{2}), \quad 1 \leq i \leq k.
\end{equation}
Summing eq.\eqref{eq:88} leads to $ k c_{1}(\xi_{1}) + [ x_{1} - (T^{N} - \alpha^{[k]})] c'_{1}(\xi_{1})= k c_{2}(\xi_{2}) + x_{2} c'_{2}(\xi_{2})$. Consequently, $c_{1}(\xi_{1}) + x_{1} c'_{1}(\xi_{1}) = c_{2}(\xi_{2}) + x_{2} c'_{2}(\xi_{2}) + (k-1) [c_{2}(\xi_{2})-c_{1}(\xi_{1})] + (T^{N}-\alpha^{[k]}) c'_{1}(\xi_{1})$. Since $c_{1}(\xi_{1})\leq c_{2}(\xi_{2})$ by Lemma~\ref{prop:c1_less_c2}, one deduces that $c_{1}(\xi_{1}) + x_{1} c'_{1}(\xi_{1}) \geq  c_{2}(\xi_{2}) + x_{2} c'_{2}(\xi_{2})$. Moreover, there exists $B>0$ such that
\begin{equation}\label{eq:reply_0_1}
c_{1}(\xi_{1}) + x_{1}\,c'_{1}(\xi_{1}) \geq B \geq \,  c_{2}(\xi_{2}) + x_{2}\,c'_{2}(\xi_{2}).
\end{equation}
Assumption~\ref{assp:base_chap3}, eq.\eqref{eq:reply_0_1} and the fact that $c_{1}(\tilde{\xi}_{1})\leq c_{2}(\tilde{\xi}_{2})$ (still by Lemma~\ref{prop:c1_less_c2}) imply that, for all $s\in (x_{1}, \tilde{x}_1]$ and $t\in [\tilde{x}_2, x_{2})$,
\begin{align}
& c_{1}(s+y_{1}) +s\,c'_{1}(s+y_{1})  > B >   c_{2}(t+y_{2}) + t\,c'_{2}(t+y_{2}),
\label{eq:reply_0_3}\\
& c_{1}(s+y_{1})\leq c_{1}(\tilde{\xi}_{1})\leq c_{2}(\tilde{\xi}_{2}) \leq c_{2}(t+y_{2}).\nonumber
\end{align}
They further imply that $s\,c'_{1}(s+y_{1})>t\,c'_{2}(t+y_{2})$ and, moreover, there exists $C>0$ such that for any $p\in (y_{1}, M^{-N}]$ and $q\in [0, y_{2})$,
\begin{equation}\label{eq:reply_0_4}
s\,c'_{1}(s+p)  > C >  t\,c'_{2}(t+q).
\end{equation}

Let us compare $U^{N}(\tilde{\alpha},T^{-N})$ and $U^{N}(\alpha,T^{-N})$:
\begin{align*}
& U^{N}(\tilde{\alpha},T^{-N}) - U^{N}(\alpha,T^{-N})  =  [ \tilde{x}_{1}\,c_{1}(\tilde{\xi}_{1}) + \tilde{x}_{2}\, c_{2}(\tilde{\xi}_{2}) ] - [x_{1}\,c_{1}(\xi_{1}) + x_{2}\, c_{2}(\xi_{2})] \\
& = [ \tilde{x}_{1}\,c_{1}(\tilde{\xi}_{1}) - \tilde{x}_{1}\,c_{1}(\tilde{x}_{1}+y_{1})] + [ \tilde{x}_{1}\,c_{1}(\tilde{x}_{1}+y_{1}) - x_{1}\,c_{1}(\xi_{1}) ] - [  \tilde{x}_{2}\,c_{2}(\tilde{x}_{2}+y_{2}) \\
& ~~-\tilde{x}_{2}\, c_{2}(\tilde{\xi}_{2}) ] - [ x_{2}\, c_{2}(\xi_{2}) -\tilde{x}_{2}\, c_{2}(\tilde{x}_{2}+y_{2})]\\
%
%
& = \int^{\tilde{y}_{1}}_{y_{1}}\tilde{x}_{1} c'_{1}(\tilde{x}_{1}+s) \text{d}s + \int^{\tilde{x}_{1}}_{x_{1}} [ c_{1}(s+y_{1}) +s c'_{1}(s+y_{1}) ] \text{d}s \\
&~~  -\int^{y_{2}}_{\tilde{y}_{2}}\tilde{x}_{2} c'_{2}(\tilde{x}_{2}+t) \text{d}t- \int^{x_{2}}_{\tilde{x}_{2}} [ c_{2}(t+y_{2}) +t c'_{2}(t+y_{2}) ] \text{d}t\\
& > (\tilde{y}_{1}-y_{1}) C + (\tilde{x}_{1}-x_{1}) B - (y_{2}-\tilde{y}_{2}) C - (x_{2}-\tilde{x}_{2}) B  = 0,
\end{align*}
where the inequality is due to eq.\eqref{eq:reply_0_3} and eq.\eqref{eq:reply_0_4}.
\end{proof}

\begin{lemma}\label{lm:trivial_strategy}
For $l=0$ (if $N>1$) and for $1 \leq l \leq N-2$ (if $N>2$), define
\begin{equation*}
B_{l}=F_{l+1}(h^{-1}(T^{l+1})) - T^{[l]}.
\end{equation*}

Suppose that either (i) $H>0$, $N >1$ and $T^{1}> H$, or (ii) $H\leq 0$, $N \geq 2$, $T^{[N-1]}\geq  F_{0}(\hat{\xi})$. Then,

1. The CE $z_{\underline{\alpha}}$ induced by atomic player~$N$'s nonatomic strategy $\underline{\alpha}$ is of mode~3 specified by $l_{0}$, $l_{0}$ being the unique number in $\{1,\ldots, N\!-\!1\}$ that meets the conditions in eq.\eqref{cond_3_b}.

2. \begin{align}
& F^{-1}_{1}(T^{[1]}) >  F^{-1}_{2}(T^{[2]}) > \cdots > F^{-1}_{l_{0}}(T^{[l_{0}]}); \label{rlt_1}\\
& F^{-1}_{l}(T^{[l]})> h^{-1}(T^{l}),\quad 1 \leq l \leq l_{0}. \label{rlt_3}
\end{align}

If $l_{0}<N-1$, then
\begin{align}
& F^{-1}_{l_{0}}(T^{[l_{0}]}) \leq F^{-1}_{l_{0}+1}(T^{[l_{0}+1]}) \leq \cdots \leq  F^{-1}_{N-1}(T^{[N-1]}), \, \text{ if } l_0>1; \label{rlt_2}\\
& F^{-1}_{l}(T^{[l]}) \leq h^{-1}(T^{l}), \quad l_{0}+1 \leq l \leq N-1. \label{rlt_4}
\end{align}

3. For all $k\in \mathbb{N}$,
\begin{equation}\label{rlt_5}
F_{k+l}(h^{-1}(T^{l+1})) - T^{[l]}
\begin{cases}
>\! k\,h(F^{-1}_{l}(T^{[l]})),  \,\text{ if } l_{0}\!>\!1\,\text{ and }\, 1 \!\leq\! l\! \leq\! l_{0}\!-\!1,\\
\leq \!k\,h(F^{-1}_{l}(T^{[l]})), \, \text{ if } l_{0}\! <\! N\!-\!1\,\text{ and }\,l \!=\! l_{0}.
\end{cases}
\end{equation}
In particular, if $l_{0}>1$, then $B_{l}>h(F^{-1}_{l}(T^{[l]}))$ for all $1\leq l \leq l_{0}-1$.

4. If $l_{0}>1$, then for all $k\in \mathbb{N}$ and $1 \leq l \leq  l_{0}-1$, $F_{k+l}(h^{-1}(T^{l+1})) - T^{[l]} > k h(F^{-1}_{l_{0}}(T^{[l_{0}]}))$.

5. If $l_{0}\!>1$, then $B_{0} \!\geq\! B_{1} \!\geq\!\cdots\! \geq\! B_{l_{0}-1}$. For $l\!\in\! \{1,\ldots, l_{0}\! -\!1 \}$, $B_{l}\! = \!B_{l-1}$ if and only if $T^{l+1}\!=\!T^{l}$.

6. $B_{l_{0}-1}>h(F^{-1}_{l_{0}}(T^{[l_{0}]}))$.

7. For all strategy $\alpha\in \cS^{N}$, one and only one of the following is true:

(i) $z_\alpha$ is of mode~3 and specified by $l_{0}$; besides, $\alpha$ is equivalent to $\underline{\alpha}$.

(ii) $z_{\alpha}$ is of mode~2 and specified by some $k\in \mathbb{N}^{*}$ and some $l\in \{0, 1,\ldots, l_{0} \}$.
\end{lemma}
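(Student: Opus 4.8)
The backbone of the argument is the pair of identities $F_{n+1}(t)=F_{n}(t)+h(t)$ and $F_{n}(\hat\xi)=F_{0}(\hat\xi)$ (the latter because $h(\hat\xi)=0$), combined with the mode descriptions of Lemmas~\ref{thm:mode_1_cond}--\ref{thm:mode_5_cond}. For Part~1, note that under $\underline\alpha$ player~$N$'s whole stock becomes nonatomic, so Lemma~\ref{prop:c1_less_c2}(2) forces $x^{i}_{2}(z_{\underline\alpha})=0$ for every deputy and $z_{\underline\alpha}$ is of mode~1, 3 or 4. Mode~1 is ruled out since eq.\eqref{cond_1_d} fails (in case~(i): $N>1$ and $T^{1}>H$; in case~(ii): $H\leq 0$), and mode~4 is ruled out since eq.\eqref{cond_5_a} requires $T^{[N-1]}\leq F_{0}(\hat\xi)$ (and in case~(i) requires $H>0$), contrary to (ii). Hence $z_{\underline\alpha}$ is of mode~3, specified by some $l_{0}\in\{1,\ldots,N-1\}$, which by Lemma~\ref{thm:mode_3_cond} satisfies eq.\eqref{cond_3_b}; its uniqueness follows from Part~2.

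For Part~2, write $\phi(l)=F_{l}^{-1}(T^{[l]})$ and $\psi(l)=h^{-1}(T^{l})$. The key computation is $F_{l+1}(\phi(l))=F_{l}(\phi(l))+h(\phi(l))=T^{[l]}+h(\phi(l))$; comparing with $F_{l+1}(\phi(l+1))=T^{[l+1]}=T^{[l]}+T^{l+1}$ and using that $F_{l+1}$ is strictly decreasing yields
\begin{equation*}
\phi(l+1)<\phi(l)\iff h(\phi(l))<T^{l+1}\iff \psi(l+1)<\phi(l),
\end{equation*}
with the analogous equivalences for ``$=$'' and ``$>$''. Also $\psi$ is non-decreasing (since $T^{1}\geq\cdots\geq T^{N-1}$ and $h^{-1}$ is decreasing), and the base case $\phi(1)>\psi(1)$ holds: it is equivalent to $F_{0}(h^{-1}(T^{1}))>0$, true because $h^{-1}(T^{1})<M$ (in case~(i) because $T^{1}>H=h(M)$; in case~(ii) because $T^{1}>0\geq h(M)$) and $F_{0}>0$ below $M$. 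An induction then shows that $\phi$ strictly decreases while $\phi(l)>\psi(l+1)$, so the first index $l_{0}$ with $\phi(l_{0})\leq\psi(l_{0}+1)$ (and $l_{0}=N-1$ if there is none) is exactly the one meeting eq.\eqref{cond_3_b}; the one delicate point is $\phi(l_{0})>\psi(l_{0})$, obtained from $\phi(l_{0}-1)>\psi(l_{0})$ (the decrease condition at step $l_{0}-1$) and $F_{l_{0}}(\psi(l_{0}))=T^{l_{0}}+F_{l_{0}-1}(\psi(l_{0}))$. From $l_{0}$ onward the same identities, applied with $\phi(l_{0})\leq\psi(l_{0}+1)$, make $\phi$ non-decreasing and keep it $\leq\psi$, and when $H\leq 0$ give $\phi(l)\leq\phi(N-1)<\hat\xi$ for $l\leq l_{0}$ by hypothesis~(ii) -- the remaining clause of eq.\eqref{cond_3_b}. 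This yields eqs.~\eqref{rlt_1}--\eqref{rlt_4} and pins down $l_{0}$ uniquely.

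Parts~3--6 are bookkeeping with the identity $F_{k+l}(h^{-1}(T^{l+1}))=F_{l}(h^{-1}(T^{l+1}))+k\,T^{l+1}$. For eq.\eqref{rlt_5}, subtract $k\,h(\phi(l))$ and reduce to the signs of $F_{l}(h^{-1}(T^{l+1}))-T^{[l]}$ and $h(\phi(l))-T^{l+1}$, both dictated by whether $\phi(l)>\psi(l+1)$, i.e. by $l\leq l_{0}-1$ versus $l=l_{0}$; the ``in particular'' clause ($B_{l}>h(\phi(l))$ for $l\leq l_{0}-1$) is the case $k=1$. Part~4 is the same estimate against $k\,h(\phi(l_{0}))$: for $l\leq l_{0}-1$, eq.\eqref{rlt_3} and the monotonicity of $\psi$ give $\phi(l_{0})>\psi(l_{0})\geq\psi(l+1)$, hence $h(\phi(l_{0}))-T^{l+1}<0$, and the already positive term $F_{l}(h^{-1}(T^{l+1}))-T^{[l]}$ dominates. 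For Part~5, $B_{l-1}-B_{l}=\bigl[F_{l}(h^{-1}(T^{l}))-F_{l}(h^{-1}(T^{l+1}))\bigr]+\bigl(T^{l}-T^{l+1}\bigr)$, a sum of two non-negative terms vanishing exactly when $T^{l}=T^{l+1}$. Part~6 is Part~4 with $k=1$, $l=l_{0}-1$, or, when $l_{0}=1$, the chain $B_{0}=F_{1}(h^{-1}(T^{1}))=F_{0}(h^{-1}(T^{1}))+T^{1}>T^{1}>h(\phi(1))$.

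For Part~7, every $z_{\alpha}$ takes one of the four modes; Lemmas~\ref{thm:mode_1_cond} and~\ref{thm:mode_5_cond} exclude modes~1 and~4 exactly as in Part~1, so $z_{\alpha}$ is of mode~2 or mode~3. If it is of mode~3 specified by some $l$, then eq.\eqref{cond_3_b} holds for $l$, so $l=l_{0}$ by Part~2, and Lemma~\ref{thm:mode_3_cond} places $\alpha$ in $\cS^{N}_{3}(T^{-N};l_{0})$, whose members -- including $\underline\alpha$ -- are all equivalent; this is alternative~(i). If it is of mode~2 specified by $k$ and $l$, then $k\in\nit^{*}$ by definition, and for $l\geq 1$ eq.\eqref{cond_2_c} forces $\phi(l)>\psi(l)$, which by eqs.~\eqref{rlt_3}--\eqref{rlt_4} holds only for $l\leq l_{0}$ (the case $l=0$ being trivial); this is alternative~(ii). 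The two alternatives are incompatible because a CE has a single mode. The real obstacle is the induction in Part~2 -- establishing the unimodal behaviour of $l\mapsto F_{l}^{-1}(T^{[l]})$ and the existence and uniqueness of $l_{0}$; everything else then follows mechanically from the two identities for $F_{n}$.
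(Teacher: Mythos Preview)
Your argument is correct and parallels the paper's, with one structural difference worth noting in Part~2: the paper takes $l_{0}$ as given by Part~1 and runs two \emph{downward} inductions (from $N-1$ to $l_{0}+1$ for eqs.~\eqref{rlt_2}--\eqref{rlt_4}, then from $l_{0}$ to $1$ for eqs.~\eqref{rlt_1}--\eqref{rlt_3}), whereas you run a single \emph{forward} induction from $l=1$, identifying $l_{0}$ as the first index with $\phi(l)\leq\psi(l+1)$ and then continuing upward. Both work; yours is a bit more constructive, while the paper's is slightly cleaner because existence of an $l_{0}$ satisfying eq.~\eqref{cond_3_b} is already secured by the mode analysis in Part~1, so one never has to verify the third clause of eq.~\eqref{cond_3_b} independently. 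Parts~3--7 are handled essentially identically in both proofs; your reduction of Part~6 to Part~4 (plus a separate check at $l_{0}=1$) and your formula $B_{l-1}-B_{l}=[F_{l}(h^{-1}(T^{l}))-F_{l}(h^{-1}(T^{l+1}))]+(T^{l}-T^{l+1})$ for Part~5 are minor but correct variants of the paper's computations.

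Two small slips to fix. First, in Part~1 the parenthetical ``(and in case~(i) requires $H>0$)'' is backwards: eq.~\eqref{cond_5_a} requires $H\leq 0$, and it is the hypothesis $H>0$ of case~(i) that contradicts it. Second, the clause ``$\phi(l)\leq\phi(N-1)<\hat\xi$ for $l\leq l_{0}$'' cannot hold as written, since $\phi$ is strictly decreasing on $\{1,\ldots,l_{0}\}$; what you need (and what your unimodality gives) is simply $\phi(l_{0})\leq\phi(N-1)\leq\hat\xi$, using that $\phi$ is non-decreasing from $l_{0}$ to $N-1$ and that $T^{[N-1]}\geq F_{0}(\hat\xi)=F_{N-1}(\hat\xi)$. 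Neither slip affects the substance of the proof.
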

\begin{proof}
Denote $z_{\underline{\alpha}}$ simply by $\underline{z}$.

1. The CE $\underline{z}$ is not of mode~1 when $H>0$ because of eq.\eqref{cond_1_d}, and is not of mode 4 when $H\leq 0$ because of eq.\eqref{cond_5_a}. It is not of mode~2 because $n=0$. Thus, it is of mode~3. Lemma~\ref{thm:mode_3_cond} shows that $\underline{z}$ must be specified by some unique $l_{0}\in \{1,\ldots, N-1 \}$  satisfying eq.\eqref{cond_3_b}.

2. One has only to prove for the case where $N>2$. First, suppose that $l_{0}<N-1$. Let us prove eq.\eqref{rlt_2} and eq.\eqref{rlt_4} by induction.

The fact that $l_{0}\neq N-1$ implies that $N-1$ does not meet all the four conditions in eq.\eqref{cond_3_b}. However, in this case, only the second condition can be violated. Thus, $F^{-1}_{N-1}(T^{[N-1]}) \leq h^{-1}(T^{N-1})$. Now, suppose that for some $l\in \{l_{0}+1,\ldots,  N-1 \}$,
\begin{align*}
& F^{-1}_{l}(T^{[l]}) \,\leq\, F^{-1}_{l+1}(T^{[l+1]}) \leq \cdots \leq  F^{-1}_{N-1}(T^{[N-1]});\;\; \\ 
& F^{-1}_{p}(T^{[p]})\, \leq\, h^{-1}(T^{p}),\, \forall p \in \{ l ,\ldots, N\!-\!1\}.
\end{align*}

The relation $F^{-1}_{l}(T^{[l]}) \leq h^{-1}(T^{l})$ implies that $h(F^{-1}_{l}(T^{[l]})) \geq T^{l}$ and, consequently, $T^{[l]} = F_{l} ( F^{-1}_{l}(T^{[l]})) = F_{l-1} ( F^{-1}_{l}(T^{[l]})) + h(F^{-1}_{l}(T^{[l]})) \geq F_{l-1} ( F^{-1}_{l}(T^{[l]})) + T^{l}$. This implies that $T^{[l-1]} \geq F_{l-1} ( F^{-1}_{l}(T^{[l]}))$ and, as a consequence, $F^{-1}_{l-1}(T^{[l-1]}) \leq F^{-1}_{l}(T^{[l]}) \leq h^{-1}(T^{l})$. In particular, $F^{-1}_{l-1}(T^{[l-1]}) \leq F^{-1}_{l}(T^{[l]}) $ and $F^{-1}_{l-1}(T^{[l-1]}) \leq h^{-1}(T^{l})$.

If $F^{-1}_{l-1}(T^{[l-1]}) > h^{-1}(T^{l-1})$, then all the conditions in eq.\eqref{cond_3_b} are satisfied so that $l_{0}=l-1$. Otherwise, one continues the induction by considering $l-1$. In this way, eq.\eqref{rlt_2} and eq.\eqref{rlt_4} are proved.

Next, let us prove eq.\eqref{rlt_1} and eq.\eqref{rlt_3} by induction.

According to eq.\eqref{cond_3_b}, $F^{-1}_{l_{0}}(T^{[l_{0}]}) > h^{-1}(T^{l_{0}})$. Suppose that $l_{0}>1$ and, for some $l\in  \{ 2,\ldots,  l_{0} \}$,
\begin{align*}
& F^{-1}_{l}(T^{[l]}) >  F^{-1}_{l+1}(T^{[l+1]}) > \cdots > F^{-1}_{l_{0}}(T^{[l_{0}]});\;\;\\
& F^{-1}_{p}(T^{[p]}) > h^{-1}(T^{p}),\; \forall p\in \{ l, \ldots, l_{0} \}.
\end{align*}

If $F^{-1}_{l-1}(T^{[l-1]}) \leq  F^{-1}_{l}(T^{[l]})$, then $ F_{l-1}( F^{-1}_{l-1}(T^{[l-1]})) \geq F_{l-1}( F^{-1}_{l}(T^{[l]}))$, i.e. $T^{[l-1]} \geq F_{l}( F^{-1}_{l}(T^{[l]})) - h( F^{-1}_{l}(T^{[l]})) = T^{[l]}- h( F^{-1}_{l}(T^{[l]}))$ which implies $h( F^{-1}_{l}(T^{[l]})) \geq T^{l}$ or equivalently $F^{-1}_{l}(T^{[l]}) \leq h^{-1}(T^{l})$. It contradicts the hypothesis that $F^{-1}_{l}(T^{[l]}) > h^{-1}(T^{l})$. Therefore, $F^{-1}_{l-1}(T^{[l-1]}) >  F^{-1}_{l}(T^{[l]})$. Furthermore, $F^{-1}_{l-1}(T^{[l-1]}) >  F^{-1}_{l}(T^{[l]})> h^{-1}(T^{l}) \geq h^{-1}(T^{l-1})$. These prove eq.\eqref{rlt_1} and eq.\eqref{rlt_3}.

3. For all $k\in \mathbb{N}$,
\begin{align}
& F_{k+l}(h^{-1}(T^{l+1})) - T^{[l]} - k\,h(F^{-1}_{l}(T^{[l]})) \nonumber \\
& = \, F_{l}(h^{-1}(T^{l+1})) - T^{[l]} + k [ T^{l+1} - h(F^{-1}_{l}(T^{[l]}))] \nonumber \\
& \begin{cases}
\, > 0,\; & \text{if }  h^{-1}(T^{l+1}) < F^{-1}_{l}(T^{[l]}),\\
\, \leq 0, & \text{if }  h^{-1}(T^{l+1}) \geq F^{-1}_{l}(T^{[l]}).
\end{cases} \label{eq:outil}
\end{align}

If $l_{0}>1$, then, for all $l\in \{1,\ldots, l_{0}-1 \}$, eq.\eqref{rlt_1} and eq.\eqref{rlt_3} show that $F^{-1}_{l}(T^{[l]})>F^{-1}_{l+1}(T^{[l+1]})>h^{-1}(T^{l+1})$. 
 If $l_{0} < m$, then, according to eq.\eqref{cond_3_b}, $F^{-1}_{l_{0}}(T^{[l_{0}]}) \leq h^{-1}(T^{l_{0}+1})$. These two inequalities and eq.\eqref{eq:outil} lead to the conclusion.

4. On the one hand, for $l\in \{ 1,\ldots, l_{0}\!-\!1\}$, it is proven in the previous statement that $F^{-1}_{l}(T^{[l]})>h^{-1}(T^{l+1})$ and, consequently, $T^{[l]}<F_{l}(h^{-1}(T^{l+1}))$. On the other hand, eq.\eqref{cond_3_b} implies that $h^{-1}(T^{l_{0}})> F^{-1}_{l_{0}}(T^{[l_{0}]})$ or equivalently $T^{l_{0}}< h(F^{-1}_{l_{0}}(T^{[l_{0}]}))$, thus, $T^{l+1}> h(F^{-1}_{l_{0}}(T^{[l_{0}]}))\geq T^{l_{0}}$.

These two results imply that $  F_{k+l}(h^{-1}(T^{l+1})) - T^{[l]} - k\,h(F^{-1}_{l_{0}}(T^{[l_{0}]})) = \, F_{l}(h^{-1}(T^{l+1})) - T^{[l]} + k\,[T^{l+1}- h(F^{-1}_{l_{0}}(T^{[l_{0}]}))] >0$, which concludes.

5. For $l\in \{ 0, \ldots, l_{0}-2 \}$, $B_{l}-B_{l+1} = [ F_{l+1}(h^{-1}(T^{l+1})) - T^{[l]} ] -\linebreak[4] [ F_{l+2}(h^{-1}(T^{l+2})) - T^{[l+1]} ] = [ F_{l+2}(h^{-1}(T^{l+1})) - T^{l+1} - T^{[l]} ] - [ F_{l+2}(h^{-1}(T^{l+2})) - T^{[l]} - T^{l+1} ] = F_{l+2}(h^{-1}(T^{l+1})) - F_{l+2}(h^{-1}(T^{l+2})) \geq 0 $, because $T^{l+1}\geq T^{l+2}$. Clearly, equality holds if and only if $T^{l+1}= T^{l+2}$.

6. According to eq.\eqref{cond_3_b}, $F^{-1}_{l_{0}}(T^{[l_{0}]}) > h^{-1}(T^{l_{0}})$ or equivalently $T^{[l_{0}]} < F_{l_{0}}(h^{-1}(T^{l_{0}}))$. Hence, $F_{l_{0}}(h^{-1}(T^{l_{0}})) > T^{[l_{0}]} = T^{[l_{0}-1]} + T^{l_{0}} > T^{[l_{0}-1]} + h(F^{-1}_{l_{0}}(T^{[l_{0}]}))$ and, consequently, $B_{l_{0}-1} = F_{l_{0}}(h^{-1}(T^{l_{0}})) - T^{[l_{0}-1]} > h(F^{-1}_{l_{0}}(T^{[l_{0}]}))$.

7. Given an arbitrary strategy $\alpha\in \cS^{p}$. Because $T^{1}>H$ in the case where $H>0$, and $T^{[N-1]}\geq F_{0}(\hat{\xi})$ in the case where $H\leq 0$, $z_{\alpha}$ cannot be of mode~1 or mode 4 according to Lemmas~\ref{thm:mode_1_cond} and \ref{thm:mode_5_cond}. If $z_{\alpha}$ is of mode~3 and specified by $l$, then $l$ meets the conditions in eq.\eqref{cond_3_b}, i.e.  $l=l_{0}$. 
If $z_{\alpha}$ is of mode~2 and specified by $k$ and $l$, then $F^{-1}_{l}(T^{[l]}) > h^{-1}(T^{l})$ by eq.\eqref{cond_2_c}. According to eq.\eqref{rlt_3} and eq.\eqref{rlt_4}, $l\in \{ 1 ,\ldots, l_{0} \}$.
\end{proof}

\begin{lemma}[Nonatomic case]\label{lm:all_best}
Suppose that one of the following holds:

\noindent(i) $H\!>\!0, T^{N}\!\leq\! H$, and $T^{1}\!\leq\! H$ if $N>1$;

\noindent(ii) $H\!>\!0, N\!>\!1, T^{1}\!>\! H, T^{N}\!\leq\! h(F^{-1}_{l_{0}}(T^{[l_{0}]}))$, where $l_{0}$ is the one in Lemma~\ref{lm:trivial_strategy};
 
\noindent(iii) $H\!\leq\! 0$, $T^N+T^{[N-1]}\!\leq\! F_{0}(\hat{\xi})$;

\noindent(iv) $H\!\leq\! 0$, $N\! \geq\! 2$, $T^{[N-1]}\!>\! F_{0}(\hat{\xi})$, $T^{N}\!\leq\! h(F^{-1}_{l_{0}}(T^{[l_{0}]}))$, where $l_{0}$ is the one in Lemma~\ref{lm:trivial_strategy}. Then,

1. All the strategies in $\cS^{N}$ are equivalent to the nonatomic strategy $\underline{\alpha}$. In particular, every strategy of atomic player~$N$ is optimal.

For all SA strategy $s\in [0,T^N]$, $z_s=z_{\bar{\alpha}}$, the CE induced by the trivial strategy $\bar{\alpha}$ or equivalently the CE of the original game $\Gamma(T)$ without decentralization.

2. $z_{\underline{\alpha}}$ is of mode~1 in case (i), of mode~4 in case (iii), and of mode~3 and specified by $l_{0}$ in cases (ii) and (iv).
\end{lemma}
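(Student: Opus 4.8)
The plan is to dispatch the four cases one at a time, in each reducing the claim to one of the structural lemmas already proved: Lemma~\ref{thm:mode_1_cond} (mode~1) for case~(i), Lemma~\ref{thm:mode_5_cond} (mode~4) for case~(iii), and Lemma~\ref{lm:trivial_strategy} together with Lemma~\ref{thm:mode_3_cond} (mode~3) for cases~(ii) and~(iv). The underlying mechanism is uniform: the hypothesis bounding $T^{N}$ forces, for \emph{every} strategy $\alpha\in\cS^{N}$, the relevant profile parameter---$\alpha^{1}$ in the mode~1 and mode~3 cases, $\alpha^{[n]}$ in the mode~4 case---to stay below the threshold defining membership in the corresponding equivalence set ($\cS^{N}_{1}$, $\cS^{N}_{4}$, or $\cS^{N}_{3}(T^{-N};l_{0})$). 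Part~3 of the respective lemma then says that every such $\alpha$ induces a CE of the announced mode with one and the same total weight on arc~1, and that all strategies in that set are mutually equivalent; since $\underline{\alpha}$ is itself in the set, all strategies of player~$N$ are equivalent to $\underline{\alpha}$, whence statement~1 (and in particular every strategy is optimal).

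Concretely: in case~(i) the hypotheses are exactly condition~\eqref{cond_1_d}, and $\alpha^{1}\le\alpha^{[n]}\le T^{N}\le H$ for any $\alpha$ with $n\ge 1$ while $\underline{\alpha}\in\cS^{N}_{1}$ by definition, so every strategy lies in $\cS^{N}_{1}$ and Lemma~\ref{thm:mode_1_cond}(3) yields statement~1 and that $z_{\underline{\alpha}}$ is of mode~1. In case~(iii), $T^{N}>0$ and $T^{N}+T^{[N-1]}\le F_{0}(\hat{\xi})$ give condition~\eqref{cond_5_a} and $\alpha^{[n]}\le T^{N}\le F_{0}(\hat{\xi})-T^{[N-1]}$ for every $\alpha$, so every strategy lies in $\cS^{N}_{4}$ and Lemma~\ref{thm:mode_5_cond}(3) yields statement~1 and that $z_{\underline{\alpha}}$ is of mode~4. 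In cases~(ii) and~(iv) the hypothesis of Lemma~\ref{lm:trivial_strategy} holds (its clause~(i) in case~(ii); its clause~(ii) in case~(iv), since $T^{[N-1]}>F_{0}(\hat{\xi})$ implies $T^{[N-1]}\ge F_{0}(\hat{\xi})$), so by its part~1 $z_{\underline{\alpha}}$ is of mode~3 specified by $l_{0}$ and $l_{0}$ satisfies~\eqref{cond_3_b}; since $\alpha^{1}\le T^{N}\le h(F^{-1}_{l_{0}}(T^{[l_{0}]}))$ for every $\alpha$ with $n\ge 1$ while $\underline{\alpha}\in\cS^{N}_{3}(T^{-N};l_{0})$ by definition, every strategy lies in $\cS^{N}_{3}(T^{-N};l_{0})$, and Lemma~\ref{thm:mode_3_cond}(3) yields statements~1 and~2.

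It remains to check that $z_{s}=z_{\bar{\alpha}}$ for every SA strategy $s\in[0,T^{N}]$. Being a member of the relevant equivalence set, such an $s$ induces a CE of the same mode and with the same total weight on arc~1 as $z_{\bar{\alpha}}$; the explicit flow formulas in Lemma~\ref{thm:mode_1_cond}(1), Lemma~\ref{thm:mode_3_cond}(1) and Lemma~\ref{thm:mode_5_cond}(1) then show that the aggregate arc flows and each opponent's flow agree with those at $z_{\bar{\alpha}}$ (in mode~4 the allocation of the nonatomic weight between player~$N$'s deputies and her opponents is immaterial, the two arcs being equally costly, so one may take the allocation making the two flows literally coincide). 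I do not expect a genuine obstacle here: the structural content is already carried by Lemmas~\ref{thm:mode_1_cond}--\ref{lm:trivial_strategy}, and the only care needed is bookkeeping---verifying in each case that the bound on $T^{N}$ is precisely the threshold entering the definition of the matching equivalence set, and that the applicability condition \eqref{cond_1_d}, \eqref{cond_5_a} or \eqref{cond_3_b} is indeed met.
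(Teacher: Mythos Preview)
Your proposal is correct and follows exactly the approach the paper takes: its proof is the one-line remark that the results follow from Lemmas~\ref{thm:mode_1_cond}, \ref{thm:mode_3_cond}, \ref{thm:mode_5_cond} and Lemma~\ref{lm:trivial_strategy}(2), and you have simply written out the bookkeeping showing how the bound on $T^{N}$ places every $\alpha\in\cS^{N}$ in the appropriate equivalence class $\cS^{N}_{1}$, $\cS^{N}_{4}$, or $\cS^{N}_{3}(T^{-N};l_{0})$.
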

\begin{proof}
The results follow from Lemmas~\ref{thm:mode_1_cond}, \ref{thm:mode_3_cond}, \ref{thm:mode_5_cond} and Lemma~\ref{lm:trivial_strategy} (2).
\end{proof}
\begin{lemma}[Trivial case]\label{lm:lin_1}
Suppose that either (i) $H\!>\!0, T^{N}\!> \!H$,  and $T^{1}\!\leq\! H$ if $N\!>\!1$, or (ii) $H \!\leq\! 0$, $N\!=\!1$ and $T^{N}\!>\! F_{0}(\hat{\xi})$. Then

1. $z_{\bar{\alpha}}$ is of mode 2 specified by 1 and 0.  

2. Atomic player~$N$'s unique optimal decentralization strategy is the trivial one $\bar{\alpha}$, i.e. not decentralizing.

3. For SA strategies $s\in [0,T^N]$, $\xi_1(z_s)$ and $x_1(z_s)$ are continuous and non-increasing in $s$, and $y_1(z_s)$ is continuous and non-decreasing in $s$.
\end{lemma}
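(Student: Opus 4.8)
The three statements will be proved in the order 1, 3, 2. Everything rests on one structural fact: in both cases~(i) and~(ii), \emph{no atomic opponent of $N$ ever puts weight on arc~$2$} at any CE of $\Gamma(\alpha,T^{-N})$, for any $\alpha\in\cS^N$. Indeed $\xi_1\le M$ at a CE, so $h(\xi_1)\ge h(M)=H$ because $h$ is strictly decreasing; in case~(i) with $N>1$ every opponent has weight $T^j\le T^1\le H\le h(\xi_1)$, which contradicts the ``$T^j>h(\xi_1)$'' requirement in Lemmas~\ref{thm:mode_2_cond}(1) and~\ref{thm:mode_3_cond}(1), while in case~(ii) there is no atomic opponent. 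So outside mode~4 (where arc~$1$ is strictly the cheaper) every opponent, atomic and nonatomic, sits entirely on arc~$1$, whence $y_1(z_\alpha)=M^{-N}$, $x_1(z_\alpha)=\xi_1(z_\alpha)-M^{-N}$ and $x_2(z_\alpha)=\xi_2(z_\alpha)$; in mode~4 the cost is $U^N=T^Nc_1(\hat\xi)$, which is unambiguous even though the split of the tied nonatomic flow is not.

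\textbf{Part 1.} For $z_{\bar\alpha}$ I rule out every mode but mode~2 specified by $1$ and $0$. Mode~1: in case~(i) the lone deputy of $\bar\alpha$ has weight $T^N>H$, so $\bar\alpha\notin\cS^N_1$ (Lemma~\ref{thm:mode_1_cond}); in case~(ii), $H\le 0$ forbids mode~1. Mode~4: it needs $H\le 0$, hence is out in case~(i); in case~(ii), $T^{[N-1]}=0$ and $T^N>F_0(\hat\xi)$ give $\bar\alpha\notin\cS^N_4$ (Lemma~\ref{thm:mode_5_cond}). Mode~3 needs an opponent on arc~$2$, impossible by the fact above. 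Hence $z_{\bar\alpha}$ is of mode~2; the fact forces $l=0$, and $n=1$ forces $k=1$. By Lemma~\ref{thm:mode_2_cond}(1), $\xi_1(z_{\bar\alpha})=F_1^{-1}(T^N)=:\xi^\star$.

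\textbf{Part 3 and the cost formula.} By the above, $U^N(\alpha,T^{-N})=g(\xi_1(z_\alpha))$ for every $\alpha\in\cS^N$, where $g$ is the $C^1$ function on $[0,M]$ given by $g(\xi)=(\xi-M^{-N})c_1(\xi)+(M-\xi)c_2(M-\xi)$ (in mode~4 this equality uses $c_2(M-\hat\xi)=c_1(\hat\xi)$). For an SA strategy $s\in[0,T^N]$, Lemmas~\ref{thm:mode_1_cond},~\ref{thm:mode_5_cond} and~\ref{thm:mode_2_cond} give: $z_s$ is of mode~1 (case~(i)) resp.\ mode~4 (case~(ii)), with $\xi_1(z_s)$ constant, $=M$ resp.\ $=\hat\xi$, for $s\le H$ resp.\ $s\le F_0(\hat\xi)$; and it is of mode~2 specified by $1$ and $0$, with $\xi_1(z_s)=F_1^{-1}(s)$, otherwise. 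Since $F_1^{-1}$ is continuous, strictly decreasing, with $F_1^{-1}(H)=M$ and $F_1^{-1}(F_0(\hat\xi))=\hat\xi$ (because $F_1(\hat\xi)=(M-\hat\xi)(1+A)+h(\hat\xi)=F_0(\hat\xi)$), the two branches agree at the threshold; so $\xi_1(z_s)$ and $x_1(z_s)=\xi_1(z_s)-M^{-N}$ are continuous and non-increasing in $s$, and $y_1(z_s)\equiv M^{-N}$ is continuous and non-decreasing. This proves Part~3.

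\textbf{Part 2.} Expanding $g'$ and inserting the definitions of $a$, $h$, $F_1$ yields the identity $g'(\xi)=c_1'(\xi)\,(T^N-F_1(\xi))$ on $[0,M]$; as $F_1$ is strictly decreasing with $F_1(\xi^\star)=T^N$ and $c_1'>0$, $g$ decreases strictly on $[0,\xi^\star]$ and increases strictly on $[\xi^\star,M]$, so $\xi^\star$ is its unique minimizer. Moreover $\xi_1(z_\alpha)\ge\xi^\star$ for every $\alpha\in\cS^N$, with equality only for $\alpha=\bar\alpha$: in mode~1, $\xi_1(z_\alpha)=M\ge\xi^\star$; in mode~4, $\xi_1(z_\alpha)=\hat\xi>\xi^\star$ (since $F_1(\hat\xi)=F_0(\hat\xi)<T^N$ in case~(ii)); in mode~2 (necessarily $l=0$), writing $\xi=\xi_1(z_\alpha)$ and using $F_k=F_1+(k-1)h$ with $h(\xi)>0$ off mode~4 (Lemma~\ref{prop:c1_less_c2}(4)), one gets $F_1(\xi)\le F_k(\xi)=\alpha^{[k]}\le T^N=F_1(\xi^\star)$, so $\xi\ge\xi^\star$, with both inequalities tight only if $k=1$ and $\alpha^{[1]}=T^N$, i.e.\ $\alpha=\bar\alpha$. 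Hence $U^N(\alpha,T^{-N})=g(\xi_1(z_\alpha))\ge g(\xi^\star)=U^N(\bar\alpha,T^{-N})$ with equality only for $\alpha=\bar\alpha$, so $\bar\alpha$ is the unique optimal decentralization strategy; for $\alpha$ not inducing mode~4 one may instead apply Lemma~\ref{lm:reply_0} to $z_{\bar\alpha}$ and $z_\alpha$ via $x_1(z_\alpha)>x_1(z_{\bar\alpha})$ and $y_1(z_\alpha)=y_1(z_{\bar\alpha})$. I expect the most delicate point to be this last chain ``$\alpha\neq\bar\alpha\Rightarrow\xi_1(z_\alpha)>\xi^\star\Rightarrow U^N(\alpha)>U^N(\bar\alpha)$'', which needs the four modes treated uniformly, the $F_k=F_1+(k-1)h$ bookkeeping, the sign of $h$, and care with the mode-4 degeneracy.
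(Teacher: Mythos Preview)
Your proof is correct and follows essentially the same strategy as the paper's: both reduce everything to the one-variable function $g(\xi)=v(\xi_1)=(\xi-M^{-N})c_1(\xi)+(M-\xi)c_2(M-\xi)$, compute $g'(\xi)=c_1'(\xi)(T^N-F_1(\xi))$, and locate the unique minimum at $\xi^\star=F_1^{-1}(T^N)$. The structural observation that no opponent ever uses arc~2 (hence $l=0$ in any mode~2 and mode~3 is impossible) is exactly what the paper establishes at the outset of its argument.

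The one genuine organisational difference is in the uniqueness step. The paper routes every mode-2 strategy $\alpha$ through its equivalent SA strategy $w-(k-1)h(\eta_1)$ from Lemma~\ref{thm:mode_2_cond}(4), then argues $w-(k-1)h(\eta_1)=T^N$ forces $k=1$, $w=T^N$; it handles mode~1 separately via Lemma~\ref{lm:reply_0}. You instead bound $\xi_1(z_\alpha)\ge\xi^\star$ directly in each mode using $F_k=F_1+(k-1)h$ and $h(\xi_1)>0$ off mode~4, which is a little cleaner and treats all four modes uniformly (your final remark that Lemma~\ref{lm:reply_0} gives an alternative is accurate but unnecessary). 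You also carry case~(ii) through explicitly, including the mode-4 continuity check $F_1(\hat\xi)=F_0(\hat\xi)$, whereas the paper only writes out case~(i). The one soft spot---that in mode~4 of case~(ii) the nonatomic split between $x^0_1$ and $y^0_1$ is not pinned down, so ``$y_1(z_s)\equiv M^{-N}$'' needs a convention---is a wrinkle in the statement itself rather than in your argument, and you flag the mode-4 degeneracy appropriately.
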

\begin{proof}
Denote $z_{\bar{\alpha}}$ by $\bar{z}$, $x_r(\bar{z})$ by $\bar{x}_r$, $y_r(\bar{z})$ by $\bar{y}_r$, and $\xi_r(\bar{z})$ by $\bar{\xi}_r$. Consider the case $H>0$ only. The proof for the case $H\leq 0$ is similar. 

1. For any strategy $\alpha\in \cS^{N}$, $z_{\alpha}$ cannot be of mode 4, because $H>0$. It cannot be of mode~2 specified by $k\in \mathbb{N}^*$ and $l\in \mathbb{N}^*$ or mode~3 because, otherwise $T^{1}>h(\xi_{1}(z_\alpha)) > h(M)=H$ according to Lemmas~\ref{thm:mode_2_cond} and \ref{thm:mode_3_cond}. Therefore, $z_{\alpha}$ is of mode~1 or mode 2 specified by some $k\in \nit^*$ and 0.

For any $\alpha$ such that $z_{\alpha}$ is of mode 2 specified by some $k\in \nit^*$ and 0, it follows from Lemma~\ref{thm:mode_2_cond} that there exists some $w\in (0,T^{N}]$ satisfying the conditions in eq.\eqref{cond_2_f} so that $\alpha \in \cS^{N}_{2}(w, T^{-N};k,0)$. Clearly, $\bar{\alpha}$ is in $\cS^{N}_{2}(T^{N},T^{-N};1,0)$. Thus, $\bar{z}$ is of mode 2 specified by 1 and 0. Lemma~\ref{thm:mode_2_cond}, eq.\eqref{flow_2_d} and eq.\eqref{flow_2_c} imply $\bar{x}_{1} = T^{N} - \frac{T^{N} - h(\bar{\xi}_{1})}{1+a(\bar{\xi}_{1})}<T^{N}$, $\bar{y}_{1}=M^{-N}$ and $\bar{\xi}_{1}=F^{-1}_{1}(T^{N})$.

2. According to Lemma~\ref{thm:mode_1_cond}, $z_{\alpha}$ is of mode~1 if and only if $\alpha\in \cS^{N}_{1}$, i.e. $\alpha=\underline{\alpha}$ or $\alpha^{1}\leq H$. In particular, an SA strategy $s$ is in $\cS^{N}_{1}$ if and only if $s \leq H$. Fix a strategy $\alpha\in \cS^{N}_{1}$. Then, $x_{1}(z_\alpha)=T^{N}$, $y_{1}(z_\alpha)=M^{-N}$ and $\xi_1(z_\alpha)=M$ by eq.\eqref{flow_1_a}. Since $x_{1}(z_\alpha)>\bar{x}_{1}$ and $y_{1}(z_\alpha)=\bar{y}_{1}$, it follows from Lemma~\ref{lm:reply_0} that $U^{N}(\alpha,T^{-N})>U^{N}(\bar{\alpha},T^{-N})$. In other words, no strategy in $\cS^{N}_{1}$ is optimal. In addition, for all $s\in [0,H]$, $\xi(z_s) =M$, $x_1(z_s)=T^N$ and $y^j_0(z_s)=T^j$ for $0\leq j \leq N-1$, i.e. they are all constant in $s$.

Now consider an SA strategy $s \in (H,T^{N}]$. It is not in $\cS^{N}_{1}$, hence $z_s$ is of mode 2 specified by 1 and 0. The total weight on arc 1 at $z_s$ is $\xi_{1}(z_s)=F^{-1}_{1}(s)$ by Lemma~\ref{thm:mode_2_cond}. By abuse of notation, define two functions of $s$, $\xi_1$ and $x_1$, by $\xi_1(s)\triangleq\xi_1(z_s)$ and $x_1(s)\triangleq x_1(z_s)$. Since $F_{1}$ is strictly decreasing, a bijection $\theta$ can be defined from interval $(H, T^{N}]$, the domain of $s$, to interval $[F^{-1}_{1}(T^{N}), M)$, the domain of $\xi_{1}$, such that $\theta=F^{-1}_{1}$ and $\theta^{-1}=F_1$. Then, atomic player~$N$'s cost $U^{N}(s,T^{-N})$, as a function of $s$, can be written as a function $v$ of $\xi_{1}$ on $[F^{-1}_{1}(T^{N}),M)$: $v(\xi_{1})=U^{N}(\theta^{-1}(\xi_{1}),T^{-N})=U^{N}(F_{1}(\xi_{1}),T^{-N})$.

According to eq.\eqref{flow_2_d}, $x_{1}=T^{N}-\frac{F_{1}(\xi_{1})-h(\xi_{1})}{1+a(\xi_{1})}=T^{N}-M+\xi_{1}$. Then, for all $\xi_{1}\in [F^{-1}_{1}(T^{N}),M)$,
\begin{equation*}
v(\xi_{1}) = x_{1}c_{1}(\xi_{1})+(T^{N}-x_{1})\,c_{2}(M-\xi_{1}) = (T^{N}-M+\xi_{1})c_{1}(\xi_{1})+(M-\xi_{1})c_{2}(M-\xi_{1}).
\end{equation*}
Its derivative function is
\begin{align*}
v'(\xi_{1}) & = c_{1}(\xi_{1}) + (T^{N}-M+\xi_{1})\,c'_{1}(\xi_{1}) - c_{2}(M-\xi_{1}) - (M-\xi_{1})c'_{2}(M-\xi_{1}) \\
& = c'_{1}(\xi_{1}) \,( T^{N}-F_{1}(\xi_{1}) ) = c'_{1}(\xi_{1}) \,(T^{N}-s )\geq 0
\end{align*}
and equality holds if and only if $s=T^{N}$ or, equivalently, $\xi_{1}= F^{-1}_{1}(T^{N})$.

Therefore, $v(\xi_{1})$ attains its unique minimum on interval $[F^{-1}_{1}(T^{N}),M)$ at $F^{-1}_{1}(T^{N})$, and it is strictly increasing on $[F^{-1}_{1}(T^{N}),M)$. As a result, the trivial strategy $\bar{\alpha}$ is optimal, and it is the unique SA strategy that is optimal.

Let us show that $\bar{\alpha}$ is the unique optimal strategy. Recall that no strategy in $\cS^{N}_{1}$ is optimal, hence it is enough to show that no strategy in $\cS^{N} \!\setminus \! \cS^{N}_{1}$ other than $\bar{\alpha}$ is optimal. Given an arbitrary $\alpha\in \cS^{N} \!\setminus \! \cS^{N}_{1}$, suppose that it is in $\cS^{N}_{2}(w,T^{-N};k,0)$ for some $w\in (0,T^{N}]$ and $k\in \mathbb{N}^{*}$. According to Lemma~\ref{thm:mode_2_cond}, $\alpha$ is equivalent to SA strategy $w-(k-1) h(\eta_1)$, where $\eta_1=F^{-1}_{k}(w)$. If SA strategy $w-(k-1)h(\eta_1)<T^N$, then $\alpha$ is not optimal. If $w-(k-1)h(\eta_1)=T^N$, then $\alpha$ induces the same aggregate weight on arc~1 as $\bar{\alpha}$, i.e. $\eta_1=F^{-1}_{1}(T^{N})$. As a result, $w=T^{N}+(k-1)h(F^{-1}_{1}(T^{N}))$. On the one hand, $F^{-1}_{1}(T^{N})<M$ and, consequently, $h(F^{-1}_{1}(T^{N}))>h(M)=H>0$. It follows that $w=T^{N}+(k-1)\,h(F^{-1}_{1}(T^{N}))\geq T^{N}$, and equality holds if and only if $k=1$. On the other hand, $w \leq T^{N}$. Therefore, $k=1$ and $w=T^{N}$, and $\alpha$ is just $\bar{\alpha}$.

3. It is already shown that for $s\in [0,H]$, $\xi$, $x_1(z_s)$ and $y_1(z_s)$ are all constant in $s$. For $s\in (0,H]$, recall that $\xi_1(z_s)=F^{-1}(s)$, and $y_1(z_s)=M^{-N}$ by Lemma~\ref{thm:mode_2_cond}. Thus $\xi_1(z_s)$ is strictly decreasing in $s$, $y_1(z_s)$ is constant. Consequently, $x_1(z_s)=\xi_1(z_s)-y_1(z_s)$ is strictly increasing in $s$.
\end{proof}


\begin{lemma}[Nontrivial case]\label{lm:lin_2}
In the cases not treated in Lemmas~\ref{lm:all_best} and \ref{lm:lin_1}, one has the following.

1. Atomic player~$N$ has at least one optimal decentralization strategy.

2. If $\alpha\in \cS^{N}$ is optimal, then $z_\alpha$ can be of mode~3 specified by $l_{0}$, or of mode~2 specified by some $k\in \mathbb{N}^{*}$ and some $l\in \{0, 1, \ldots, l_{0} \}$.

3. For SA strategy $s\in [0,T^N]$, $\xi_1(z_s)$ and $x_1(z_s)$ are continuous and non-increasing in $s$, and $y_1(z_s)$ is continuous and non-decreasing in $s$.
\end{lemma}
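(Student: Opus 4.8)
The plan is to reduce the whole statement, exactly as in Lemmas~\ref{lm:all_best} and~\ref{lm:lin_1}, to an analysis of the one-parameter family of SA strategies $s\in S\!A^N=[0,T^N]$. By Theorem~\ref{thm:sym_single} every $\alpha\in\cS^N$ is equivalent to some SA strategy, and by Lemma~\ref{lm:equiv_decentralization} the cost $U^N(s,T^{-N})$ depends only on $\xi_1(z_s)$; since moreover $U^N(s,T^{-N})=x_1(z_s)\,c_1(\xi_1(z_s))+\bigl(T^N-x_1(z_s)\bigr)\,c_2(M-\xi_1(z_s))$, everything follows once the map $s\mapsto z_s$ is understood. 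So I would establish part~3 first and then read parts~1 and~2 off it together with the structural lemmas already proved.

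For part~3 the key fact is that $[0,T^N]$ decomposes into finitely many closed subintervals on each of which $z_s$ keeps one mode and, in modes~2 and~3, one specifier --- an SA strategy has at most one atomic deputy, so any mode-2 CE it induces is specified by $k=1$ and some $l$. On each subinterval Lemmas~\ref{thm:mode_1_cond}--\ref{thm:mode_5_cond} give $\xi_1(z_s)$ in closed form: it equals the constant $M$, $F_l^{-1}(T^{[l]})$ or $\hat{\xi}$ in modes~1, 3, 4, and equals $F_{1+l}^{-1}(s+T^{[l]})$ --- strictly decreasing and continuous in $s$ --- in mode~2 specified by $1$ and $l$. The endpoints of these subintervals are the values of $s$ at which one of the defining inequalities of \eqref{cond_1_d}, \eqref{cond_2_c}, \eqref{cond_2_f}, \eqref{cond_3_b}, \eqref{cond_5_a} turns into an equality, and at each such endpoint the two neighbouring expressions for $\xi_1$ coincide, via the identities $F_{n+1}=F_n+h$, $h(\hat{\xi})=0$ and $F_n(\hat{\xi})=F_0(\hat{\xi})$ recorded after Lemma~\ref{prop:c1_less_c2}; moreover the specifier $l$ can only decrease as $s$ increases, since a heavier deputy of $N$ forces $\xi_1$ down, hence $h(\xi_1)$ up, so that fewer of $N$'s opponents remain split across the two arcs. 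Hence along $[0,T^N]$ the mode sequence is: mode~3 specified by $l_0$ (or mode~4, in the one remaining subcase below), then mode~2 specified successively by $(1,l_0),(1,l_0-1),\dots$, down to $(1,0)$ or until $s=T^N$; mode~1 never occurs, because in every nontrivial case $H\le0$ or $T^1>H$. Therefore $\xi_1(z_s)$ is continuous and non-increasing; recovering the opponents' flows from \eqref{cond_atom}, $y_1(z_s)$ is continuous and non-decreasing (as $\xi_1$ decreases each split opponent's arc-1 weight grows, because $h$ is strictly decreasing and $a$ non-increasing), and then $x_1(z_s)=\xi_1(z_s)-y_1(z_s)$ is continuous and non-increasing --- reading the cost-irrelevant nonatomic split in mode~4 through any continuous selection.

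Given part~3, part~1 is immediate: $U^N(\cdot,T^{-N})$ is continuous on the compact interval $[0,T^N]$, hence attains its minimum, and any strategy equivalent to an SA minimiser is optimal by Theorem~\ref{thm:sym_single}. For part~2 one first notes that $z_\alpha$ is never of mode~1, since the nontrivial hypotheses make \eqref{cond_1_d} fail. In the subcases where $z_{\underline{\alpha}}$ is of mode~3 --- equivalently, where Lemma~\ref{lm:trivial_strategy} applies and produces $l_0$ --- its statement~7 already asserts that every $z_\alpha$ is of mode~3 specified by $l_0$, or of mode~2 specified by some $k\in\nit^{*}$ and $l\in\{0,\dots,l_0\}$, which is precisely the claim. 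In the one remaining subcase, where $H\le0$, $N\ge2$ and $T^{[N-1]}<F_0(\hat{\xi})<T^N+T^{[N-1]}$ (so $z_{\underline{\alpha}}$ is of mode~4 and Lemma~\ref{lm:trivial_strategy} does not apply), mode~3 is impossible because the requirement $T^{[l]}>F_0(\hat{\xi})$ in \eqref{cond_3_b} fails, so $z_\alpha$ is of mode~2 or mode~4; every mode-4 strategy gives $U^N(\alpha,T^{-N})=T^N c_1(\hat{\xi})$, and comparing this with $U^N$ on the mode-2 branch just past the threshold $s=F_0(\hat{\xi})-T^{[N-1]}$ --- a short estimate using $T^N+T^{[N-1]}>F_0(\hat{\xi})$, in the spirit of the computation in Lemma~\ref{lm:lin_1} --- shows $U^N$ drops strictly below $T^N c_1(\hat{\xi})$ there, so an optimal $\alpha$ must induce mode~2. (This also explains the name: on the initial mode-3, resp.\ mode-4, interval $U^N$ equals $U^N(\underline{\alpha},T^{-N})$, but beyond it the cost can strictly fall, so the minimiser need be neither $\underline{\alpha}$ nor $\bar{\alpha}$.)

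The hard part will be the bookkeeping behind part~3: across each ``remaining'' subcase of Assumption~\ref{assp:c1_less_c2} one must identify which of \eqref{cond_1_d}, \eqref{cond_2_c}, \eqref{cond_2_f}, \eqref{cond_3_b}, \eqref{cond_5_a} is active for a given $s$, check that the mode transitions occur in the claimed monotone order and at only finitely many thresholds, and verify continuity of $\xi_1(z_s)$ at every transition. Once that phase picture is secured, the monotonicity of $y_1$ and $x_1$, the continuity of $U^N$, and parts~1 and~2 all follow with little extra work.
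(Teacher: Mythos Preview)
Your proposal is correct and matches exactly the approach the paper itself indicates: the paper omits this proof, noting only that it ``follows arguments similar to those in the previous proof [Lemma~\ref{lm:lin_1}], but much longer,'' and your sketch does precisely that---reduce to SA strategies, partition $[0,T^N]$ by mode and specifier using Lemmas~\ref{thm:mode_1_cond}--\ref{thm:mode_5_cond} and \ref{lm:trivial_strategy}, verify continuity and monotonicity of $\xi_1(z_s)$ across the transitions, and read off parts~1 and~2. Your separate treatment of the subcase $H\le 0$, $T^{[N-1]}<F_0(\hat\xi)<T^N+T^{[N-1]}$ (where Lemma~\ref{lm:trivial_strategy} does not apply and mode~4 appears) is also right: the derivative computation you allude to gives $\frac{d}{d\xi_1}U^N\big|_{\xi_1=\hat\xi}=c_1'(\hat\xi)\bigl[T^N+T^{[N-1]}-F_0(\hat\xi)\bigr]>0$, so the cost drops strictly as $s$ moves past the mode-4/mode-2 threshold.
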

The proof follows arguments similar to those in the previous proof, but much longer. It is omitted to save space.

\begin{proof}[\textbf{Proof of Theorem~\ref{thm:best}}]
The result follows from Lemmas~\ref{lm:all_best}, \ref{lm:lin_1}, and \ref{lm:lin_2}, which treat the three cases respectively. For the nontrivial case, Lemma~\ref{lm:lin_2} shows that $U(s,T^{-N})$ is a continuous function in SA strategy $s$ on $[0,T^N]$, hence a minimizer exists. 
\end{proof}

\begin{corollary}\label{prop:flow_mono}
Consider SA strategies $s\in SA^N=[0,T^N]$ and the corresponding CE $z_s$ in the induced game $\Gamma(s,T^{-N})$. Both $\xi_1(z_s)$ and $x_1(z_s)$ are non-increasing in $s$, whereas $y_1(z_s)$ is non-decreasing in $s$.
\end{corollary}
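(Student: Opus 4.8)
The plan is to observe that this corollary is essentially a repackaging of the monotonicity statements already established in the case analysis behind Theorem~\ref{thm:best}. First I would recall that the parameter configurations split into exactly three families -- the \emph{nonatomic}, \emph{trivial}, and \emph{nontrivial} cases -- and that Lemmas~\ref{lm:all_best}, \ref{lm:lin_1}, and \ref{lm:lin_2} treat them respectively. Since Lemma~\ref{lm:lin_2} is stated for ``the cases not treated in Lemmas~\ref{lm:all_best} and \ref{lm:lin_1}'', this trichotomy is exhaustive by construction, so it suffices to verify the three monotonicity claims ($\xi_1(z_s)$ and $x_1(z_s)$ non-increasing, $y_1(z_s)$ non-decreasing) in each family.

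In the nonatomic case, Lemma~\ref{lm:all_best}(1) gives the stronger conclusion that $z_s=z_{\bar{\alpha}}$ for every SA strategy $s\in[0,T^N]$; hence $\xi_1(z_s)$, $x_1(z_s)$ and $y_1(z_s)$ are all constant in $s$, and a constant function is trivially both non-increasing and non-decreasing, so the claim holds. In the trivial case, Lemma~\ref{lm:lin_1}(3) states precisely that $\xi_1(z_s)$ and $x_1(z_s)$ are (continuous and) non-increasing in $s$ while $y_1(z_s)$ is (continuous and) non-decreasing in $s$. In the nontrivial case, Lemma~\ref{lm:lin_2}(3) asserts exactly the same thing. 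Combining the three cases yields the corollary; I would write this as a two-line proof invoking these three lemmas.

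There is no real obstacle here: the only points requiring care are that the invoked case division is genuinely exhaustive (which holds by the phrasing of Lemma~\ref{lm:lin_2}) and that the monotonicity directions are stated consistently across the three lemmas (which they are). If one wanted a self-contained argument rather than a citation of parts~3 of Lemmas~\ref{lm:lin_1} and \ref{lm:lin_2}, the underlying mechanism is that enlarging the atomic deputy's weight $s$ forces that deputy to internalize congestion more and thus pushes her stock off the cheaper arc~1; this could be made precise through the equilibrium characterization \eqref{cond_atom} together with the monotone dependence of $F^{-1}_{k+l}$ and $h^{-1}$ on their arguments, exactly as in the proofs of Lemmas~\ref{thm:mode_2_cond}--\ref{thm:mode_3_cond}. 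But the efficient route is simply to quote the already-proven parts~3.
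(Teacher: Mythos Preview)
Your proposal is correct and follows essentially the same approach as the paper, which simply states that the corollary is a straightforward consequence of Lemmas~\ref{lm:all_best}--\ref{lm:lin_2}. Your write-up is in fact more explicit than the paper's one-line proof, spelling out which part of each lemma is used and why the trichotomy is exhaustive.
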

\begin{proof}
It is a straight forward corollary of Lemmas~\ref{lm:all_best}--\ref{lm:lin_2}.
\end{proof}

\begin{proof}[\textbf{Proof of Theorem~\ref{thm:stackelberg}}]
According to Theorem~\ref{thm:sym_single}, it is sufficient to prove
\begin{equation}\label{eq:stackel}
\inf_{x\in X^N} \Pi^N(x,T^{-N}) = \min_{s \in S\!A^{N}} U^N(s,T^{-N}),
\end{equation}
and the minimizer on the left hand side exists and is unique. 

First show that $\inf_{x\in X^N} \Pi^N(x,T^{-N}) \leq \min_{s \in S\!A^{N}} U^N(s,T^{-N})$. Indeed, for any $s\in SA^N$, recall that the aggregate flow of the deputies of atomic player~$N$ at $z_s$ is $x(z_s)=(x_r(z_s))_{r=1}^{2}$, and the flows of the players in $T^{-N}$ are $y(z_s)=((y^l_r(z_s))^{2}_{r=1})_{l=0}^{N-1}$. In Stackelberg game $S\!\Gamma(T^N,T^{-N})$, by playing strategy $x(z_s)$, the CE of the induced composite congestion game $\Gamma_{x(z_s)}(T^{-N})$, $Z_{x(z_s)}$, is just $y(z_s)$. This is because at $(x(z_s),Z_{x(z_s)})$ and at $z_s$, the flows of the players in $T^{-N}$ satisfy the same equilibrium condition and such flows are unique. Thus,  $\Pi^N(x(z_s),T^{-N})=U^N(s,T^{-N})$ and, consequently, $\inf_{x\in X^N} \Pi^N(x,T^{-N}) \leq \min_{s \in S\!A^{N}} U^N(s,T^{-N})$.

Next show that $\inf_{x\in X^N} \Pi^N(x,T^{-N}) \geq \min_{s \in S\!A^{N}} U^N(s,T^{-N})$. For all decentralization strategy $s\in S\!A^N$, let $z'_s=(x(z_s),y(z_s))$ be the semi-aggregate flow induced by $z_s$ by considering only  the aggregate flow of atomic player~$N$'s deputies. Now in Stackelberg game $S\!\Gamma(T^N, T^{-N})$, for an arbitrary strategy $x\in X^N$ of the leader atomic player~$N$, if it induces a CE $Z_x$ in the composite congestion game $\Gamma_x(T^{-N})$ such that $(x, Z_x)=z'_s$ for some $s\in SA^N$, then $\Pi^N(x,T^{-N}) = U^N(s,T^{-N}) \geq \min_{t \in S\!A^{N}} U^N(t, T^{-N})$. Suppose that there exists $x\in X^N$ such that no $s\in SA^N$ satisfy $(x,Z_x)=z'_s$. Let us show that $\Pi^N(x,T^{-N})$ is not lower than $U^N(\bar{\alpha}, T^{-N})$, the cost to atomic player~$N$ when she plays the trivial decentralization strategy $\bar{\alpha}$. In other words, such a strategy cannot be strictly better than $x^N$, with $x^N$ being $N$'s flow at the CE $x$ of congestion game $\Gamma(T)$. 

For the sake of simplicity, denote $z_{\bar{\alpha}}$ by $\bar{z}$, $x_r(z_{\bar{\alpha}})$ by $\bar{x}_r$, $y^j_r(z_{\bar{\alpha}})$ by $\bar{y}^j_r$, $y_r(z_{\bar{\alpha}})$ by $\bar{y}_r$, and $\xi_r(z_{\bar{\alpha}})$ by $\bar{\xi}_r$; besides, denote by $\xi_r$ the total weight on arc $r$ at $(x,Z_x)$, and by $y^j=(y^j_r)_{r=1}^{2}$ the flow of atomic player~$j$ or that of the nonatomic players in $T^{-N}$ at $(x,Z_x)$. 

If $\xi_1=\bar{\xi}_1=\hat{\xi}$, then $\Pi^N(x,T^{-N})= U^N(\bar{\alpha}, T^{-N})=T^N c(\hat{\xi})$. If $\xi_1= \bar{\xi}_1\neq \hat{\xi}$, then by the proof of Lemma~\ref{lm:equiv_decentralization}, $x_1=\bar{x}_1$ and $y^j_1=\bar{y}^j_1$, $0 \leq j \leq N-1$. Consequently, $\bar{z}=(x,Z_x)$, a contradiction. Thus one should only consider the case that $\xi_1 \neq \bar{\xi}_1$. 

According to Corollary~\ref{prop:flow_mono}, $x_1(z_s)$ and $\xi_1(z_s)$ are non-increasing and continuous in $s$ while $y_1(z_s)$ is non-decreasing and continuous in SA decentralization strategy $s\in [0,T^N]$. In particular, the maximum (resp. minimum) of $\xi_1(z_s)$ and $x_1(z_s)$ are attained at $s=0$ (resp. $s=T^N$), i.e. when atomic player~$N$ plays the nonatomic decentralization strategy $\underline{\alpha}$ (resp. the trivial decentralization strategy $\bar{\alpha}$). Since no $s$ satisfies $(x,Z_x)=z'_s$, $x_1\notin [\bar{x}_1,x_1(z_{\underline{\alpha}})]$ (because otherwise a contradiction can be obtained by arguments similar to the proof of Lemma~\ref{lm:equiv_decentralization}). But $x_1(z_{\underline{\alpha}})=T^N$, thus $x_1$ cannot be greater than this. Therefore $x_1 < \bar{x}_1$. If $\xi_1 > \bar{\xi}_1$, then for all $j\in \{0,\ldots, N\!-\!1\}$, $y^j_1 \leq \bar{y}^j_1$. Indeed, for $j=0$, since $c_1(\bar{\xi}_1)<c_{1}(\xi_1)\leq c_{2}(\xi_2) < c_2(\bar{\xi}_2)$, $\bar{y}^0_1=T^0 \geq y^0_1$. For each $1 \leq j \leq N-1$, $\bar{y}^j_1$ is the solution of the following equations in $w$: 
\begin{align}
\text{either  }\, w< T^j \, \text{ and }\, c_{1}(\bar{\xi}_1)+ w\,c'_1(\bar{\xi}_{1}) = c_{2}(\bar{\xi}_2)+ (T^j-w) \,c'_2(\bar{\xi_{2}}),\label{tildey1}\\
\text{or  } \, w=T^j \, \text{ and } \, c_{1}(\bar{\xi}_1)+ w\,c'_1(\bar{\xi}_{1}) \leq c_{2}(\bar{\xi}_2)+ (T^j-w) \,c'_2(\bar{\xi}_{2}), \label{tildey2}
\end{align}
while $y^j_1$ is the solution of these equation in $w$:
\begin{align}
\text{either  }\, w< T^j \, \text{ and }\, c_{1}(\xi_1)+ w\,c'_1(\xi_{1}) = c_{2}(\xi_2)+ (T^j-w) \,c'_2(\xi_{2}), \label{y1}\\
\text{or  } \, w=T^j \, \text{ and } \, c_{1}(\xi_1)+ w\,c'_1(\xi_{1}) \leq c_{2}(\xi_2)+ (T^j-w) \,c'_2(\xi_{2}). \label{y2}
\end{align}
If $\bar{y}^j_1$ satisfies eq.\eqref{tildey1}, i.e. $c_{1}(\bar{\xi}_1)+ \bar{y}^j_1\,c'_1(\bar{\xi}_{1}) = c_{2}(\bar{\xi}_2)+ (T^j-\bar{y}^j_1) \,c'_2(\bar{\xi_{2}})$, since $\xi_1> \bar{\xi_1}$, $c_{1}(\xi_1)+ \bar{y}^j_1\,c'_1(\xi_{1}) > c_{2}(\xi_2)+ (T^j-\bar{y}^j_1) \,c'_2(\bar{\xi_{2}})$. Thus, if $y^j_1$ satisfy either eq.\eqref{y1} or eq.\eqref{y2}, then $y^j_1 < \bar{y}^j_1$. If $\bar{y}^j_1$ satisfies eq.\eqref{tildey2}, then $\bar{y}^j_1=T^j \geq y^j_1$.

Therefore $y_1 \leq \bar{y}_1$. But $x_1 < \bar{x}_1$, hence $\xi_1 < \bar{\xi}_1$, contradictory to the hypothesis that $\xi_1>\bar{\xi}_1$. This proves that $\xi_1 \leq \bar{\xi}_1$ and consequently $y_1 \geq \bar{y}_1$. Let $\Delta\xi = \bar{\xi}_1 - \xi_1$ and $\Delta x = \bar{x}_1- x_1$. Then $\Delta x \geq \Delta \xi$. Now let us show that $\Pi^N(x,T^{-N}) > U^N(\bar{\alpha}, T^{-N})$.  

At $\bar{z}$, $c_1(\bar{\xi}_1) \leq c_2(\bar{\xi}_2)$ and, according to eq.\eqref{cond_atom}, $c_1(\bar{\xi}_1)+\bar{x}_1 c_1'(\bar{\xi}_1) \leq c_2(\bar{\xi}_2)+ \bar{x}_2 c_2'(\bar{\xi}_2)$. Therefore, for all $s\in [0,\Delta \xi]$,
\begin{align}
c_1(\bar{\xi}_1-s) & \leq c_2(\bar{\xi}_2+s), \label{xx1}\\
c_1(\bar{\xi}_1-s)+(\bar{x}_1-s) c_1'(\bar{\xi}_1-s) &\leq c_2(\bar{\xi}_2+s)+ (\bar{x}_2 +s) c_2'(\bar{\xi}_2+s), \label{xx2}
\end{align}

Finally,
\begin{align*}
& \Pi^N(x,T^{-N}) - U^N(\bar{\alpha}, T^{-N})\\
& = [ x_1 c_1(\!\xi_1\!) + x_2 c_2 (\!\xi_2\!) ] \!-\! [\bar{x}_1 c_1(\!\bar{\xi}_1\!) + \bar{x}_2 c_2 (\!\bar{\xi}_2\!) ] \\
& = [x_2 c_2 (\!\xi_2\!)\! -\! \bar{x}_2 c_2 (\!\bar{\xi}_2\!)] \!-\! [\bar{x}_1 c_1(\!\bar{\xi}_1\!) \! -\! x_1 c_1(\!\xi_1\!)]\\
& = [(\bar{x}_2+\Delta \xi) c_2 (\bar{\xi}_2+\Delta \xi) - \bar{x}_2c_2 (\bar{\xi}_2) + (\Delta x - \Delta \xi) c_2 (\bar{\xi}_2+\Delta \xi)]\\
& ~~~~ - [\bar{x}_1 c_1(\bar{\xi}_1)  - ( \bar{x}_1 - \Delta \xi) c_1(\bar{\xi}_1-\Delta \xi) + (\Delta x-\Delta \xi ) c_1(\bar{\xi}_1-\Delta \xi) ]\\
& = \int_{0}^{\Delta \xi} c(\bar{\xi}_2+s)+(\bar{x}_2 + s) c_2'(\bar{\xi}_2+s) \, \text{d} s - \int_{0}^{\Delta \xi} c_1(\bar{\xi}_1-s)+(\bar{x}_1-s) c_1'(\bar{\xi}_1-s)\, \text{d} s\\
& ~~~~ +  (\Delta x-\Delta \xi ) ( c_2 (\bar{\xi}_2+\Delta \xi) -  c_1(\bar{\xi}_1-\Delta \xi) ]  \geq 0,
\end{align*}
because of eq.\eqref{xx1} and eq.\eqref{xx2}.

Eq.\eqref{eq:stackel} is proved. Meanwhile, an equilibrium strategy of player~$N$ in the Stackelberg game, i.e. a minimizer of the left hand side of eq.\eqref{eq:stackel}, is also found: $(x_r(z_\alpha))_{r=1}^2$, where $\alpha$ is an optimal unilateral decentralization of player~$N$. 
\end{proof}

\begin{proof}[\textbf{Proof of Theorem~\ref{thm:social_2}}]
For the sake of simplicity, denote $z_{\bar{\alpha}}$ by $\bar{z}$, $x_r(z_{\bar{\alpha}})$ by $\bar{x}_r$, $y^j(z_{\bar{\alpha}})$ by $\bar{y}^j$, and $\xi_r(z_{\bar{\alpha}})$ by $\bar{\xi}_r$.

In the nonatomic case, atomic player~$N$'s all decentralization strategies, including the trivial one (not decentralizing), result in the same outcome. Hence, the other players' costs and the social cost never change after the decentralization.

Now consider the trivial case and the nontrivial case. According to Lemmas~\ref{lm:lin_1}, \ref{lm:lin_2} and their proofs,  $\bar{z}$ is of mode 2.  Then $c_1(\bar{\xi}_1) < c_2(\bar{\xi}_2)$. 

It is sufficient to prove the result for all SA strategies $s\in [0,T^N)$. Corollary~\ref{prop:flow_mono} states that $\xi_1(z_s)$ and $x_1(z_s)$ are non-increasing in $s$ and, in particular, strictly decreasing in a neighborhood around $T^N$, while $y_1(z_s)$ are non-decreasing in $s$. Fix an $s\in [0,T^N)$, and denote $x_r(z_s)$ simply by $x_r$, $y^j_r(z_s)$ by $y^j_r$, and $\xi_r(z_s)$ by $\xi_r$. Then, $\xi_{1} > \bar{\xi}_{1}$ and $x_1> \bar{x}_1$. By the same argument used in the proof of Theorem~\ref{thm:stackelberg}, one can show that for all $j \in I\setminus \{0,N\}$, $y^j_1 \leq \bar{y}^j_1$. 
%

Let us compare the costs of the players in $I \!\setminus\! \{N\}$ and the social cost at $z_s$ with those at $\bar{z}$. 

For the nonatomic players in $T^0$, the fact that $\xi_{1} \geq \bar{\xi}_{1}$ immediately implies that $u^{0}(z)=c_{1}(\xi_{1}) \geq c_{1}(\bar{\xi_{1}})=u^{0}(\bar{z})$. Equality holds if and only if $\bar{\xi}_1=\xi_1$, which is impossible.

For $j \in I \! \setminus  \!\{0,N\}$ such that $c_{1}(\bar{\xi}_{1}) + \bar{y}^{j}_{1} c'_{1}(\bar{\xi}_{1}) = c_{2}(\bar{\xi}_{2})+ \bar{y}^{j}_{2} c'_{2}(\bar{\xi}_{2})$ and, consequently, $\bar{y}^{j}_{1} c'_{1}(\bar{\xi}_{1}) > \bar{y}^{j}_{2} c'_{2}(\bar{\xi}_{2})$ because $c_{1}(\bar{\xi}_{1}) < c_{2}(\bar{\xi}_{2})$. Let $B$ be a constant such that $\bar{y}^{j}_{1} c'_{1}(\bar{\xi}_{1}) > B > \bar{y}^{j}_{2} c'_{2}(\bar{\xi}_{2})$. Then, for all $s \in ( \bar{\xi}_1- \bar{y}^{j}_{1}, M - \bar{y}^{j}_{1}]$ and all $t\in [-\bar{y}^{j}_{2}, \bar{\xi}_2 - \bar{y}^{j}_{2})$,
\begin{equation}\label{eq:B}
 \bar{y}^{j}_{1} c'_{1}(\bar{y}^{j}_{1}+s) > B > \bar{y}^{j}_{2} c'_{2}(\bar{y}^{j}_{2}+t).
\end{equation}

It follows from the relation $\xi_{1} > \bar{\xi}_{1}$ that $\xi_{1} - \bar{y}^{j}_{1} >\bar{\xi}_1- \bar{y}^{j}_{1}$ and $\xi_{2} - \bar{y}^{j}_{2} < \bar{\xi}_2 - \bar{y}^{j}_{2}$. Therefore,
\begin{align*}
& [ \bar{y}^{j}_{1} c_{1}(\xi_{1}) + \bar{y}^{j}_{2} c_{2}(\xi_{2}) ] - [ \bar{y}^{j}_{1} c_{1}(\bar{\xi}_{1})+\bar{y}^{j}_{2} c_{2}(\bar{\xi}_{2}) ] \\
& = [ \bar{y}^{j}_{1} c_{1}(\bar{y}^{j}_{1} + \xi_{1} - \bar{y}^{j}_{1}) + \bar{y}^{j}_{2} c_{2}(\bar{y}^{j}_{2} + \xi_{2} - \bar{y}^{j}_{2}) ] - [ \bar{y}^{j}_{1} c_{1}(\bar{y}^{j}_{1} + \bar{y}^{-j}_{1}) + \bar{y}^{j}_{2} c_{2}(\bar{y}^{j}_{2} + \bar{y}^{-j}_{2}) ] \\
& = \bar{y}^{j}_{1} [ c_{1}(\bar{y}^{j}_{1} + \xi_{1} - \bar{y}^{j}_{1})-c_{1}(\bar{y}^{j}_{1} + \bar{y}^{-j}_{1}) ] - \bar{y}^{j}_{2} [ c_{1}(\bar{y}^{j}_{2} + \bar{y}^{-j}_{2}) - c_{2}(\bar{y}^{j}_{2} + \xi_{2} -\bar{y}^{j}_{2}) ]\\
& = \int^{\xi_{1} - \bar{y}^{j}_{1}}_{\bar{\xi}_1- \bar{y}^{j}_{1}} \!\bar{y}^{j}_{1} c'_{1}(\!\bar{y}^{j}_{1}\!+\!s\!) \text{d}s\! -\! \int^{\bar{\xi}_2 - \bar{y}^{j}_{2}}_{\xi_{2} - \bar{y}^{j}_{2}} \bar{y}^{j}_{2} c'_{2}(\!\bar{y}^{j}_{2}\!+\!t\!)\text{d}t  \!>\! [\xi_{1}\!-\!\bar{\xi}_{1}] B \!- \! [\bar{\xi}^{-j}_{2}\!-\!\xi_{2}]\!B=\!0,
\end{align*}
where the inequality is due to eq.\eqref{eq:B}, and
\begin{align*}
& [ y^{j}_{1} c_{1}(\xi_{1})+y^{j}_{2} c_{2}(\xi_{2}) ] - [ \bar{y}^{j}_{1} c_{1}(\xi_{1})+\bar{y}^{j}_{2} c_{2}(\xi_{2}) ] \\
& = [ y^{j}_{1}-\bar{y}^{j}_{1} ] c_{1}(\xi_{1}) + [ y^{j}_{2} - \bar{y}^{j}_{2} ] c_{2}(\xi_{2}) 
= [y^{j}_{1}-\bar{y}^{j}_{1}] [c_{1}(\xi_{1})-c_{2}(\xi_{2})]
\geq 0
\end{align*}
because $y^{j}_{1}\leq \bar{y}^{j}_{1}$ and $c_{1}(\xi_{1}) < c_{2}(\xi_{2})$. As a result,
\begin{align*}
 & u^{j}(z_s)-u^{j}(\bar{z}) = [ y^{j}_{1}\,c_{1}(\xi_{1})+y^{j}_{2}\,c_{2}(\xi_{2}) ] - [ \bar{y}^{j}_{1}\,c_{1}(\bar{\xi}_{1})+ \bar{x}^{j}_{2}\,c_{2}(\bar{\xi}_{2}) ]\\
& = [ y^{j}_{1} c_{1}(\!\xi_{1}\!) \!+ \!y^{j}_{2} c_{2}(\!\xi_{2}\!) ]\! -\! [ \bar{y}^{j}_{1} c_{1}(\!\xi_{1}\!) +\! \bar{y}^{j}_{2} c_{2}(\!\xi_{2}\!) ]\!+\! [ \bar{y}^{j}_{1} c_{1}(\!\xi_{1}\!)\!+\!\bar{y}^{j}_{2} c_{2}(\!\xi_{2}\!) ] \! \\
& ~~~-\! [ \bar{y}^{j}_{1} c_{1}(\!\bar{\xi}_{1}\!) \!+\! \bar{y}^{j}_{2} c_{2}(\!\bar{\xi}_{2}\!) ] \\
& >0.
\end{align*}

For $j \in I \! \setminus  \!\{0,N\}$ such that $c_{1}(\bar{\xi}_{1}) + \bar{y}^{j}_{1} c'_{1}(\bar{\xi}_{1}) < c_{2}(\bar{\xi}_{2})+ \bar{y}^{j}_{2} c'_{2}(\bar{\xi}_{2})$, $\bar{y}^{j}_{1}\!=\!T^{j}\!\geq \!y^{j}_{1}$. Recall that $c_{2}(\xi_{2})\geq c_{1}(\xi_{1})$ and $\xi_{1} > \bar{\xi}_{1}$. Therefore,
\begin{equation*} 
u^{j}(z)-u^{j}(\bar{z}) = [\,y^{j}_{1}\,c_{1}(\xi_{1})+y^{j}_{2}\,c_{2}(\xi_{2})\,] - T^{j}\,c_{1}(\bar{\xi}_{1}) \geq T^{j}\,c_{1}(\xi_{1})-T^{j}\,c_{1}(\bar{\xi}_{1})> 0.
\end{equation*} 


Finally, consider the social cost. Since $\bar{z}$ is of mode 2 specified by 1 and $l$, it follows from eq.\eqref{cond_atom} that $c_{1}(\bar{\xi_{1}}) + \bar{x}_{1}\,c'_{1}(\bar{\xi}_{1}) = c_{2}(\bar{\xi}_{2}) + \bar{x}_{2}\,c'_{2}(\bar{\xi}_{2})$, and $c_{1}(\bar{\xi_{1}}) + \bar{y}^{j}_{1}\,c'_{1}(\bar{\xi}_{1}) = c_{2}(\bar{\xi}_{2}) + \bar{y}^{j}_{2}\,c'_{2}(\bar{\xi}_{2})$ for $1\leq j \leq l$ if $l\geq 1$. Summing these $l+1$ equations leads to $(l+1) c_{1}(\bar{\xi}_{1}) + [ \bar{\xi}_{1}-(M^{-N}-T^{[l]})]\,c'_{1}(\bar{\xi}_{1})= (l+1) c_{2}(\bar{\xi}_{2}) + \bar{\xi}_{2}\,c'_{2}(\bar{\xi}_{2})$. Consequently, $c_{1}(\bar{\xi}_{1}) + \bar{\xi}_{1}\,c'_{1}(\bar{\xi}_{1})=c_{2}(\bar{\xi}_{2}) + \bar{\xi}_{2}\,c'_{2}(\bar{\xi}_{2}) + l[\,c_{2}(\bar{\xi}_{2})-c_{1}(\bar{\xi}_{1})]+(M^{-N}-T^{[l]})\,c'_{1}(\bar{\xi}_{1})$. Since $c_{1}(\bar{\xi}_{1})\leq c_{2}(\bar{\xi}_{2})$ and $l \geq 0$, there exists a constant $D>0$ such that $c_{1}(\bar{\xi}_{1}) + \bar{\xi}_{1}\,c'_{1}(\bar{\xi}_{1}) \geq  D\geq   c_{2}(\bar{\xi}_{2}) + \bar{\xi}_{2}\,c'_{2}(\bar{\xi}_{2})$. According to Assumption~\ref{assp:base_chap3}, $c_{1}$ and $c_{2}$ are both strictly increasing while $c'_{1}$ and $c'_{2}$ are non-decreasing. Hence, for any $s \in (\bar{\xi}_{1}, M]$ and any $t \in [0, \bar{\xi}_{2})$,
\begin{equation}\label{eq:lemma0_2}
c_{1}(s) +s\,c'_{1}(s) > D >   c_{2}(t) + t\,c'_{2}(t).
\end{equation}

Since $\xi_1>\bar{\xi}_{1}$, eq.\eqref{eq:lemma0_2} implies that
\begin{align*}
& CS(z_s) - CS(\bar{z}) \\
& = [ \xi_1\,c_{1}(\xi_1) + (M-\xi_2)\, c_{2}(M-\xi_2) ] - [\bar{\xi}_{1}\,c_{1}(\bar{\xi}_{1}) + (M-\bar{\xi}_{1})\, c_{2}(M-\bar{\xi}_{1}) ]\\
& = [ \xi_1\,c_{1}(\xi_1) - \bar{\xi}_{1}\,c_{1}(\bar{\xi}_{1}) ] - [(M-\bar{\xi}_{1})\, c_{2}(M-\bar{\xi}_{1})  - (M-\xi_1) c_{2}(M-\xi_1) ]\\
& 
= \int^{t}_{\bar{\xi}_{1}} [c_{1}(s) +s c'_{1}(s) ] \text{d}s - \int^{M-\bar{\xi}_{1}}_{M-\xi_1} [c_{2}(t) +t c'_{2}(t)] \text{d}t\\
& > (\xi_1-\bar{\xi}_{1}) D - (M-\bar{\xi}_{1}-M+\xi_1) D \\
&= 0.
\end{align*}
\end{proof}

\paragraph{Acknowledgments} I am very grateful to Professor Sylvain Sorin for our discussions on the subject of the paper and for his comments on its previous versions. I am also indebted to the anonymous referees whose comments and suggestions have helped hugely to improve the paper.



\end{document}